\pgfplotsset{compat=1.5}
\newtheorem{theorem}{Theorem}[section]
\newtheorem{lemma}[theorem]{Lemma}
\newtheorem{definition}[theorem]{Definition}
\newtheorem{observation}[theorem]{Observation}
\newenvironment{proofof}[1]{\begin{trivlist} \item {\bf Proof
#1:~~}}
  {\qed\end{trivlist}}
\newcommand{\namedref}[2]{\hyperref[#2]{#1~\ref*{#2}}}
\newcommand{\thmlab}[1]{\label{thm:#1}}
\newcommand{\thmref}[1]{\namedref{Theorem}{thm:#1}}
\newcommand{\lemlab}[1]{\label{lem:#1}}
\newcommand{\lemref}[1]{\namedref{Lemma}{lem:#1}}
\newcommand{\seclab}[1]{\label{sec:#1}}
\newcommand{\secref}[1]{\namedref{Section}{sec:#1}}
\newcommand{\applab}[1]{\label{app:#1}}
\newcommand{\appref}[1]{\namedref{Appendix}{app:#1}}
\newcommand{\figlab}[1]{\label{fig:#1}}
\newcommand{\figref}[1]{\namedref{Figure}{fig:#1}}
\newcommand{\alglab}[1]{\label{alg:#1}}
\renewcommand{\algref}[1]{\namedref{Algorithm}{alg:#1}}
\newcommand{\deflab}[1]{\label{def:#1}}
\newcommand{\defref}[1]{\namedref{Definition}{def:#1}}
\newcommand{\obslab}[1]{\label{obs:#1}}
\def \ams    {\mdef{\mathsf{AMS}}}
\def \Lap    {\mdef{\mathsf{Lap}}}
\def \countsketch    {\mdef{\textsc{CountSketch}}}
\def \privcountsketch    {\mdef{\textsc{PrivCountSketch}}}
\newcommand{\PPr}[1]{\ensuremath{\mathbf{Pr}\left[#1\right]}}
\newcommand{\Ex}[1]{\ensuremath{\mathbb{E}\left[#1\right]}}
\renewcommand{\O}[1]{\ensuremath{\mathcal{O}\left(#1\right)}}
\newcommand{\tO}[1]{\ensuremath{\tilde{\mathcal{O}}\left(#1\right)}}
\newcommand{\eps}{\varepsilon}
\def \frakU    {\mdef{\mathfrak{U}}}
\def \frakS    {\mdef{\mathfrak{S}}}
\def \calA    {\mdef{\mathcal{A}}}
\def \calM    {\mdef{\mathcal{M}}}
\def \calY    {\mdef{\mathcal{Y}}}
\newcommand{\mdef}[1]{{\ensuremath{#1}}\xspace}  
\DeclareMathOperator*{\polylog}{polylog}
\DeclareMathOperator*{\poly}{poly}
\DeclareMathOperator*{\mc}{mc}
\DeclareMathOperator*{\mmc}{mmc}
\DeclareMathOperator*{\tail}{tail}
\DeclareMathOperator*{\median}{median}
\newcommand{\ceil}[1]{\mdef{\left\lceil#1\right\rceil}}               
\newcommand{\ignore}[1]{}
\newif\ifnotes\notestrue 
\newcommand{\samson}[1]{\textcolor{purple}{{\bf (Samson:} {#1}{\bf ) }} \marginpar{\tiny\bf
             \begin{minipage}[t]{0.5in}
               \raggedright S:
            \end{minipage}}}            							
\newcommand{\samson}[1]{}
\renewcommand*{\@fnsymbol}[1]{\textcolor{darkpastelgreen}{\ensuremath{\ifcase#1\or *\or \dagger\or \ddagger\or
 \mathsection\or \triangledown\or \mathparagraph\or \|\or **\or \dagger\dagger
   \or \ddagger\ddagger \else\@ctrerr\fi}}}
\providecommand{\email}[1]{\href{mailto:#1}{\nolinkurl{#1}\xspace}}
\definecolor{ceruleanblue}{rgb}{0.16, 0.32, 0.75}
\definecolor{darkmidnightblue}{rgb}{0.0, 0.2, 0.4}
\definecolor{darkpastelgreen}{rgb}{0.01, 0.75, 0.24}
\definecolor{bleudefrance}{rgb}{0.19, 0.55, 0.91}
\title{Private Data Stream Analysis for Universal Symmetric Norm Estimation}
\author{
Vladimir Braverman
\\Rice University\thanks{E-mail: \email{vb21@rice.edu}. Work done in part while at Johns Hopkins University.}
\and
Joel Manning
\\Carnegie Mellon University\thanks{E-mail: \email{joelmanning@cmu.edu}}
\and
Zhiwei Steven Wu
\\Carnegie Mellon University\thanks{E-mail: \email{zhiweiw@andrew.cmu.edu}. ZSW was supported in part by the NSF Award \#2120667 and a Cisco Research Grant.}
\and
Samson Zhou
\\UC Berkeley and Rice University\thanks{E-mail: \email{samsonzhou@gmail.com}. Work done in part while at Carnegie Mellon University.}
}
\date{\today}
\begin{document}

\maketitle

\begin{abstract}
We study how to release summary statistics on a data stream subject to the constraint of differential privacy. In particular, we focus on releasing the family of \emph{symmetric norms}, which are invariant under sign-flips and coordinate-wise permutations on an input data stream and include $L_p$ norms, $k$-support norms, top-$k$ norms, and the box norm as special cases. Although it may be possible to design and analyze a separate mechanism for each symmetric norm, we propose a general parametrizable framework that differentially privately releases a number of sufficient statistics from which the approximation of all symmetric norms can be simultaneously computed. Our framework partitions the coordinates of the underlying frequency vector into different levels based on their magnitude and releases approximate frequencies for the ``heavy'' coordinates in important levels and releases approximate level sizes for the ``light'' coordinates in important levels.  Surprisingly, our mechanism allows for the release of an \emph{arbitrary} number of symmetric norm approximations without any overhead or additional loss in privacy. Moreover, our mechanism permits $(1+\alpha)$-approximation to each of the symmetric norms and can be implemented using sublinear space in the streaming model for many regimes of the accuracy and privacy parameters.
\end{abstract}

\section{Introduction}
The family of $L_p$ norms represent important statistics on an underlying dataset, where the $L_p$ norm\footnote{$L_p$ for $p\in(0,1)$ does not satisfy the triangle inequality and therefore is not a norm, but is still well-defined/well-motivated and can be computed} of an $n$-dimensional frequency vector $x$ is defined as the number of nonzero coordinates of $x$ for $p=0$ and $L_p(x)=\left(x_1^p+\ldots+x_n^p\right)^{1/p}$ for $p>0$. 
Thus, the $L_0$ norm counts the number of distinct elements in the dataset and, e.g., is used to detect denial of service or port scan attacks in network monitoring~\cite{akella2003detecting,estan2003bitmap}, to understand the magnitude of quantities such as search engine queries or internet graph connectivity in data mining~\cite{palmer2001connectivity}, to manage workload in database design~\cite{FinkelsteinST88}, and to select a minimum-cost query plan in query optimization~\cite{SelingerACLP79}. 
The $L_1$ norm computes the total number of elements in the dataset and, e.g., is used for data mining~\cite{CormodeMR05} and hypothesis testing~\cite{IndykM08}, while the $L_2$ norm, e.g., is used for training random forests in machine learning~\cite{breiman2001random}, computing the Gini index in statistics~\cite{lorenz1905methods,gini1912variabilita}, and network anomaly detection in traffic monitoring~\cite{krishnamurthy2003sketch,ThorupZ04}, in particular in the context of heavy-hitters, e.g.,~\cite{CharikarCF04,BravermanCIW16,BravermanCINWW17,BravermanGLWZ18,LiuZRBR20,BlockiLMZ23}. 
More generally, $L_p$ norms for $p\in(0,2)$ have been used for entropy estimation~\cite{HarveyNO08}. 
Consequently, $L_p$ estimation has been extensively studied in the data stream model~\cite{AlonMS99,IndykW05,Indyk06,Li08,KaneNPW11,Andoni17,BravermanVWY18,GangulyW18,WoodruffZ21,WoodruffZ21b}. 
The simplest streaming model is perhaps the insertion-only model, in which a sequence of $m$ updates increments coordinates of an $n$-dimensional frequency vector $x$ and the goal is to compute or approximate some statistic of $x$ in space that is sublinear in both $m$ and $n$. 
For a more formal introduction to the streaming model, see \secref{sec:prelim:stream}.

In many cases, the underlying dataset contains sensitive information that should not be leaked. 
Hence, an active line of work has focused on estimating $L_p$ norms for various values of $p$, while preserving differential privacy~\cite{MirMNW11,BlockiBDS12,Smith0T20,BuGKLST21,WangPS22}. 

\begin{definition}[Differential privacy]
\deflab{def:dp}
\cite{DworkMNS06}
Given $\eps>0$ and $\delta\in(0,1)$, a randomized algorithm $\calA:\frakU^*\to\calY$ is $(\eps,\delta)$-differentially private if, for every neighboring streams $\frakS$ and $\frakS'$ and for all $E\subseteq\calY$,
\[\PPr{\calA(\frakS)\in E}\le e^{\eps}\cdot\PPr{\calA(\frakS')\in E}+\delta.\]
\end{definition}

For example, \cite{BlockiBDS12} showed that the Johnson-Lindenstrauss transformation preserves differential privacy (DP), thereby showing one of the main techniques in the streaming model for $L_2$ estimation already guarantees DP. 
Similarly, \cite{Smith0T20} showed that the Flajolet-Martin sketch, which is one of the main approaches for $L_0$ estimation in the streaming model, also preserves DP. 

However, algorithmic designs for $L_p$ estimation in the streaming model differ greatly and require individual analysis to ensure DP, especially because it is known that for some problems, guaranteeing DP provably requires more space~\cite{DinurSWZ23}. 
Unfortunately, the privacy and utility analysis can be quite difficult due to the complexity of the various techniques.  
This is especially pronounced in the work of \cite{WangPS22}, who studied the $p$-stable sketch~\cite{Indyk06}, which estimates the $L_p$ norm for $p\in(0,2]$. 
\cite{WangPS22} showed that for $p\in(0,1]$, the $p$-stable sketch preserves DP, but was unable to show DP for $p\in(1,2]$, even though the general algorithmic approach remains the same. 
Thus the natural question is whether differential privacy can be guaranteed for an approach that simultaneously estimates the $L_p$ norm in the streaming model, for all $p$. 
More generally, the family of $L_p$ norms are all symmetric norms, which are invariant under sign-flips and coordinate-wise permutations on an input data stream. 
Symmetric norms thus also include other important families of norms such as the $k$-support norms and the top-$k$ norms.

\subsection{Our Contributions} 
In this paper, we show that not only does there exist a differentially private algorithm for the estimation of symmetric norms in the streaming model, but also that there exists an algorithm that privately releases a set of statistics, from which estimates of all (properly parametrized) symmetric norms can be simultaneously computed. 
To illustrate the difference, suppose we wanted to release approximations of the $L_p$ norm of the stream for $k$ different values of $p$. 
To guarantee $(\eps,\delta)$-DP for the set of $k$ statistics, we would need, by advanced composition, to demand $\left(\O{\frac{\eps}{\sqrt{k}}},\O{\frac{\delta}{k}}\right)$-DP from $k$ instances of a single differentially private $L_p$-estimation algorithm, corresponding to the $k$ different values of $p$. 
Due to accuracy-privacy tradeoffs, the quality of the estimation will degrade severely as $k$ increases. 
For an extreme example, consider when $k$ is some large polynomial of $n$ and $m$ so that the added noise will also be polynomial in $n$ and $m$, and then there is no utility at all -- the private algorithm might as well just release $0$ for all queries! 

In contrast, our algorithm releases a single set $C$ of private statistics.
By post-processing, we can then estimate the $L_p$ norms for $k$ different values of $p$ while only requiring $(\eps,\delta)$-DP from $C$. 
Hence, our algorithm can simultaneously handle any large number of estimations of symmetric norms without compromising the quality of approximation. 

We first informally introduce the definition of the maximum modulus of concentration of a norm, which measures the worst-case ratio of the maximum value of a norm on the $L_2$-unit sphere to the median value of a norm on the $L_2$-unit sphere, where the median can be taken over any restriction of the coordinates. 
Intuitively, maximum modulus of concentration of a norm quantifies the complexity of computing a norm. 
For example, the $L_1$ norm is generally ``easy'' to compute and has maximum modulus of concentration $\O{\log n}$. 
See \defref{def:mmc} for a more formal definition. 
Then our main result can informally be stated as follows: 

\begin{theorem}[Informal]
\thmlab{thm:main}
There exists a $(\eps,\delta)$-differentially private algorithm that outputs a set $C$, from which the $(1+\alpha)$-approximation to any norm, with \emph{maximum modulus of concentration} at most $M$ of a vector $x\in\mathbb{R}^n$ induced by a stream of length $\poly(n)$ can be computed, with probability at least $1-\delta$. 
The algorithm uses $M^2\cdot\poly\left(\frac{1}{\alpha},\frac{1}{\eps},\log n,\log\frac{1}{\delta}\right)$ bits of space. 
\end{theorem}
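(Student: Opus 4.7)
The plan is to adapt the level-set framework for (non-private) symmetric norm estimation to the differentially private streaming setting, and to show that a single private release of per-level statistics suffices for approximating arbitrarily many symmetric norms via post-processing.

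\textbf{Level-set decomposition.} First I would partition the coordinates of the underlying frequency vector $\bx\in\mathbb{R}^n$ into geometric buckets $B_i = \set{j\in[n]\,:\,(1+\alpha)^{i-1} < |\bx_j| \le (1+\alpha)^i}$ for $i = 0,1,\ldots,T$ with $T = \O{\frac{1}{\alpha}\log(mn)}$. The classical structural fact for symmetric norms is that $L(\bx)$ is determined up to a $(1+\alpha)$ factor by the bucket sizes $b_i = |B_i|$ together with the representatives $(1+\alpha)^i$; moreover, only the ``important'' buckets (those whose aggregate contribution to $L(\bx)$ is non-negligible) need be estimated accurately, and the required accuracy is controlled by the maximum modulus of concentration.

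\textbf{Private release.} For each bucket $B_i$ the algorithm would release two pieces of information: (i) the identities and approximately rounded magnitudes of the heavy hitters in $B_i$, obtained via a differentially private heavy-hitters primitive along the lines of \privcountsketch; and (ii) a DP cardinality estimate $\widetilde{b}_i$ of the number of non-heavy coordinates in $B_i$, obtained by feeding only the substream whose current running frequency lies in $((1+\alpha)^{i-1},(1+\alpha)^i]$ into a DP distinct-elements sketch such as a privatized Flajolet--Martin estimator. Across the $T = \polylog(n)/\alpha$ buckets, overall $(\eps,\delta)$-DP follows by advanced composition of per-bucket $(\eps/\sqrt{T},\delta/T)$-private mechanisms, which contributes only a $\polylog(n)$ factor to the noise scale.

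\textbf{Utility via post-processing.} Given the released set $C$, for any symmetric norm $L$ with $\mmc(L) \le M$ I would construct a surrogate vector $\widetilde{\bx}$ that places every reported heavy hitter at its estimated value and, in each bucket, fills $\widetilde{b}_i$ fresh coordinates of magnitude $(1+\alpha)^i$; symmetry of $L$ makes the exact placement irrelevant. The core utility claim is that $L(\widetilde{\bx}) = (1\pm\alpha) L(\bx)$ with high probability, proved via a concentration argument driven by the $\mmc$-bound: $M$ upper-bounds how much a fixed symmetric norm can spike on an unfavorable configuration of coordinates, letting one convert per-bucket additive error into a multiplicative error on $L(\widetilde{\bx})$. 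Because every norm estimate is computed by post-processing $C$ alone, arbitrarily many symmetric norms can be answered at no further privacy cost.

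\textbf{Main obstacle.} The crux of the argument is the utility analysis: one must show that the coarse per-level information in $C$, each component degraded by DP noise of magnitude $\poly(1/\eps,\log n,\log(1/\delta))$, still yields a $(1+\alpha)$-approximation to \emph{every} symmetric norm with $\mmc$ at most $M$. The dependence of the required bucket-size accuracy on $M$ is precisely what drives the $M^2$ factor in the space bound, and the delicate step is detecting ``important'' buckets in a differentially private manner while simultaneously ensuring that DP noise neither pushes important buckets below nor unimportant buckets above the importance threshold. The streaming space bound then follows by summing the sketch sizes across the $T$ buckets and invoking known sublinear-space DP heavy-hitter and DP distinct-element constructions.
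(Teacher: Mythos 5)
There is a genuine gap in your proposal, and it sits in the streaming part: the per-bucket ``substream'' you describe is not a valid single-pass streaming object. You propose feeding ``only the substream whose current running frequency lies in $((1+\alpha)^{i-1},(1+\alpha)^i]$'' into a distinct-elements sketch. But when update $u_t=j$ arrives, the algorithm does not know the final value $x_j$; it can at best observe the running count of $j$, which it cannot store for all $j$ in sublinear space, and which, even if tracked, migrates upward through the buckets as the stream progresses. To maintain the ``current membership'' of bucket $i$ you would need to delete $j$ from bucket $i$'s sketch when its running count crosses $(1+\alpha)^i$ and insert it into bucket $i+1$'s sketch, and a Flajolet--Martin sketch does not support deletions. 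So the primitive you rely on for estimating the bucket sizes of light coordinates does not exist in the model, and this is exactly the regime where the whole problem lives: the heavy coordinates in a bucket are easy, it is the many light coordinates that dominate the bucket sizes $b_i$.

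The paper resolves this differently, and the mechanism is the thing your outline is missing. It subsamples the \emph{universe} $[n]$ at geometric rates $2^{-j}$ (a decision made upfront via a hash function, independently of the frequencies), forming substreams $S_0,S_1,\ldots,S_s$. A light-but-important level, which has many members, has $\Theta(b_i/2^j)$ of them survive in $S_j$, and for the right $j$ those survivors become $L_2$-heavy hitters of $S_j$ because the tail $F_2$ of $S_j$ has dropped by a $2^{-j}$ factor. One then rescales the observed count by $2^j$. Your proposal names $\privcountsketch$ only for the heavy identities and does not include this subsampling step, without which the light-level cardinalities are unrecoverable.

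Two further points where the paper departs from your plan. First, the privacy bookkeeping: you compose over $T=\polylog(n)/\alpha$ buckets with advanced composition, paying $\eps/\sqrt{T}$ per bucket. The paper avoids this. It splits the levels into only three regimes (high, medium, low) and uses plain composition over those three, plus a private top-$K$ release for the partitioning itself; each gets $\eps/4$. The reason this works is a joint-sensitivity observation: a single changed stream update alters at most two level-set sizes, each by at most one, so one Laplace draw per level at scale $\Theta(1/\eps)$ protects the entire vector of level-set sizes simultaneously, with no $\sqrt{T}$ penalty. Second, you do not address the obstacle that for low-frequency coordinates the Laplacian noise required to privatize a per-coordinate frequency estimate is comparable to the coordinate itself, which would wreck the bucket assignment. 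The paper's workaround is specific: run $\privcountsketch$ at high enough accuracy that it returns the \emph{exact} frequency for small coordinates, then add noise to the level-set counts rather than to the frequencies. Without some substitute for that idea, the low-frequency regime of your construction would not achieve $(1+\alpha)$ utility.

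Your high-level utility intuition (post-processing one released set $C$ to answer arbitrarily many norms, and $\mmc(L)\le M$ governing how accurately bucket sizes must be estimated) matches the paper's, and the $M^2$ dependence is indeed driven by the importance threshold. But as written, the streaming realization of the bucket-size estimates is not implementable, so the proof does not go through.
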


We remark that as is standard in differential privacy on data streams, both the privacy parameter $\eps$ and the accuracy parameter $\alpha$ cannot be too small or the additive noise will be too large and cannot be absorbed into the $(1+\alpha)$-multiplicative bounds. 
See \thmref{thm:main:formal} for the formal statement of \thmref{thm:main} describing these bounds.  

We also remark that in the statement of \thmref{thm:main}, the $\delta$ failure parameter of approximate DP is equal to the failure parameter $\delta$ of the utility guarantees of the algorithm. 
More generally, if the desired failure probability $\delta'$ of the utility guarantee is not equal to the privacy parameter $\delta$, then the dependencies will change from $\log\frac{1}{\delta}$ to $\log\frac{1}{\delta\delta'}$. 

We emphasize that prior to our work, there is no algorithm that can handle private symmetric norm estimation for arbitrary symmetric norms, much less simultaneously for all parametrized symmetric norms. 
Although there is specific analysis for various norm estimation algorithms, e.g., see the discussion on related work in \secref{sec:related}, these algorithms require a specific predetermined norm for their input. 
Thus a separate private algorithm must be run for each estimation, which increases the overall space. 
Moreover, for a large number of queries, the privacy parameter will need to be much smaller due to the composition of privacy, and thus to ensure privacy, the utility of each algorithm is provably poor. 
Our algorithm sidesteps both the space and accuracy problems and is the first and only work to do so, as of yet. 

\paragraph{Applications.}
We briefly describe a number of specific symmetric norms that are handled by \thmref{thm:main} and commonly used across various applications in machine learning. 
We first note the following parameterization of the previously discussed $L_p$ norms. 
\begin{lemma}
\lemlab{lem:mmc:lp}
\cite{milman2009asymptotic,klartag2007small}
For $L_p$ norms, we have that $\mmc(L)=\O{\log n}$ for $p\in[1,2]$ and $\mmc(L)=\O{n^{1/2-1/p}}$ for $p>2$. 
\end{lemma}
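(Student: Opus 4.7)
The plan is to analyze the numerator (the supremum of $L_p$ on the Euclidean unit sphere) and the denominator (the median on that sphere) in the definition of $\mc$ separately, and then to take the worst case over coordinate restrictions to obtain $\mmc$. The numerator reduces to a direct application of classical norm inequalities, while the denominator is a standard concentration-of-measure computation on the sphere, originating in the work of L\'evy and Milman and sharpened in \cite{milman2009asymptotic,klartag2007small}.

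For the numerator, fix any coordinate subspace of dimension $k$. When $p \in [1,2]$, H\"older's inequality gives $\|x\|_p \le k^{1/p - 1/2}\|x\|_2$ with equality at the uniformly spread vector $(1,\ldots,1)/\sqrt{k}$; when $p > 2$, monotonicity of $L_p$ norms in $p$ yields $\|x\|_p \le \|x\|_2 = 1$, attained at any standard basis vector. For the denominator, I would use the Gaussian representation $x \stackrel{d}{=} g/\|g\|_2$ for $g \sim \mathcal{N}(0, I_k)$, so that $\|x\|_p = \|g\|_p/\|g\|_2$. Gaussian concentration applied to the Lipschitz functional $g \mapsto \|g\|_p$, combined with $\|g\|_2 \approx \sqrt{k}$, shows that $\|x\|_p$ concentrates around $c_p \cdot k^{1/p - 1/2}$ where $c_p = (\mathbb{E}|g_1|^p)^{1/p}$ is a constant depending only on $p$; in particular the median satisfies $M \asymp k^{1/p - 1/2}$.

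Combining these, the modulus of concentration on a $k$-dimensional coordinate restriction is $\O{1}$ for $p \in [1,2]$ and $\Theta(k^{1/2 - 1/p})$ for $p > 2$. Taking the supremum over $k \le n$: monotonicity of $k \mapsto k^{1/2 - 1/p}$ implies the supremum is attained at $k = n$ for $p > 2$, giving the claimed $\O{n^{1/2 - 1/p}}$ bound; for $p \in [1,2]$, tracking the concentration bound uniformly across the $n$ choices of $k$ via a union bound leaves at most an $\O{\log n}$ slack. The main technical step is maintaining sharp constants for the median as $p \to 2^+$, where $k^{1/2 - 1/p} \to 1$ and the regime transitions, which I would address by invoking L\'evy's sharp spherical concentration inequality to convert the exponential tail of $\|g\|_p$ around its mean into a two-sided constant-factor bound on the median.
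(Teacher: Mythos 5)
The paper does not actually prove this lemma---it cites \cite{milman2009asymptotic,klartag2007small}---but the appendix sketches the same numerator/denominator decomposition you use, so your approach matches the intended one. Your treatment of the numerator (H\"older for $p\in[1,2]$, $\ell_p\le\ell_2$ for $p>2$) and of the denominator via the Gaussian representation $x=g/\|g\|_2$ with L\'evy/Gaussian concentration is correct and standard, and for $p>2$ the supremum over $k\le n$ indeed lands at $k=n$, giving $\O{n^{1/2-1/p}}$.

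One step in your write-up is spurious and reflects a misconception worth flagging: there is no ``union bound over the $n$ choices of $k$.'' For each fixed $k$, the quantity $\mc(L^{(k)})=b_{L^{(k)}}/M_{L^{(k)}}$ is a \emph{deterministic} ratio (the median $M_{L^{(k)}}$ is a number, not a random variable), and $\mmc$ is just the max of these $n$ deterministic ratios; no uniform-in-$k$ probabilistic control is required, so no $\log n$ slack arises from that source. In fact, your own computation gives $\mc(L^{(k)})=\O{1}$ uniformly in $k$ for fixed $p\in[1,2]$ (the constants $c_p=(\mathbb{E}|g_1|^p)^{1/p}$ are bounded away from $0$ and $\infty$ on $[1,2]$), so you actually prove the \emph{stronger} bound $\mmc(L)=\O{1}$ for $p\in[1,2]$. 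The lemma's weaker $\O{\log n}$ statement is simply the bound inherited from the general $Q'$-norm result in \lemref{lem:mmc:qnorm}, of which $L_p$ for $p\in[1,2]$ is a special case; there is nothing to ``pay'' for to match it. I would delete the union-bound sentence and either state the $\O{1}$ conclusion directly or note that it trivially implies the claimed $\O{\log n}$.
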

Thus our algorithm immediately introduces a differentially private mechanism for the approximation of $L_p$ norms that unlike previous work, e.g.,~\cite{BlockiBDS12,Sheffet19,ChoiDKY20,Smith0T20,BuGKLST21,WangPS22}, does not need to provide separate analysis for specific values of $p$. 
Moreover for constant-factor approximation, the space complexity is tight with the \emph{optimal} $L_p$-approximation algorithms that do not consider privacy, up to polylogarithmic factors~\cite{KaneNW10,LiW13,Ganguly15,WoodruffZ21} in the universe size $n$.  

\begin{definition}[$Q$-norm and $Q'$-norm]
We call a norm $L$ a $Q$-norm if there exists a symmetric norm $L'$ such that $L(x)=L'(x^2)^{1/2}$ for all $x\in\mathbb{R}^n$. 
Here, we use $x^2$ to denote the coordinate-wise square power of $x$. 
We also call a norm $L'$ a $Q'$-norm if its dual norm is a $Q$-norm. 
\end{definition}
The family of $Q'$-norms includes the $L_p$ norms for $1\le p\le 2$, the $k$-support norm, and the box norm~\cite{bhatia2013matrix} and thus $Q'$-norms have been proposed to regularize sparse recovery problems in machine learning. 
For instance, \cite{ArgyriouFS12} showed that $Q'$ norms have tighter relaxations than elastic nets and can thus be more effective for sparse prediction. 
Similarly, \cite{McDonaldPS14} used $Q'$ norms to optimize sparse prediction algorithms for multitask clustering.  
\begin{lemma}
\lemlab{lem:mmc:qnorm}
\cite{BlasiokBCKY17}
$\mmc(L)=\O{\log n}$ for every $Q'$-norm $L$. 
\end{lemma}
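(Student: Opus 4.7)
The plan is to reduce the statement to a concentration bound on the sphere by exploiting the duality that defines $Q'$-norms, then apply a standard concentration-of-measure argument in the style of \cite{BlasiokBCKY17}.

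First, I would unpack the definition of $\mmc$. Since $\mmc(L)$ is a supremum over coordinate restrictions of the usual modulus of concentration $\mc$, it suffices to show that $\mc(L) = \O{\log n}$ for every $Q'$-norm $L$ on $\mathbb{R}^n$. The key preliminary observation is that the class of $Q'$-norms is closed under coordinate restriction: if $L$ is a $Q'$-norm with dual $L^*$ satisfying $L^*(x) = L''(x^2)^{1/2}$ for some symmetric norm $L''$, then the restriction of $L$ to a coordinate subspace $\mathbb{R}^S$ has dual norm (in that subspace) of the form $L''_S(y^2)^{1/2}$, where $L''_S$ is the restriction of the symmetric norm $L''$; this is because coordinate projection commutes with coordinatewise squaring. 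Hence each restriction is again a $Q'$-norm and the inductive reduction is clean.

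Next, I would express $L$ in a form amenable to concentration. Since $L^*$ is a $Q$-norm, the unit ball of $L^*$ admits a description through the symmetric norm $L''$ on the squared coordinates, which in Banach-space-theoretic terms says that $L^*$ is 2-convex. Dually, $L$ is 2-concave, which implies that $L$ can be written as a supremum of weighted $L_2$ norms: $L(x) = \sup_{w \in \calW} \bigl(\sum_i w_i x_i^2\bigr)^{1/2}$ for a suitable symmetric set of nonnegative weight vectors $\calW$ determined by $L''$. Every function of this form is $\max_{w \in \calW} \sqrt{\|w\|_\infty}$-Lipschitz with respect to the Euclidean norm, and in fact Lipschitz with constant $\max_{x \in S^{n-1}} L(x)$.

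The final step is to invoke Lévy's concentration inequality for the sphere (or equivalently Gaussian concentration after transferring through the standard Gaussian-to-sphere passage): if $f$ is $b_L$-Lipschitz on $S^{n-1}$ with $\median f = M$, then the $L_2$-measure of $\{x : f(x) \le M/2\}$ is at most $\exp(-\Omega(n M^2/b_L^2))$. Plugging $f = L$ and using the supremum-of-$L_2$-norms representation to bound both $b_L$ and $\max L$ in terms of $\sup_{w} \|w\|_\infty^{1/2}$, together with a union bound over a net of the level sets to obtain the \emph{maximum} over restrictions, converts the measure estimate into a ratio bound $\max L / \median L = \O{\log n}$. The factor of $\log n$ appears through the union bound over coordinate subsets and weight vectors, rather than through the single-sphere concentration itself.

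The main obstacle is the middle step: justifying the supremum-of-$L_2$-norms representation for an arbitrary $Q'$-norm and in particular controlling the weight set $\calW$ tightly enough that the Lipschitz constant does not inflate the final bound beyond $\O{\log n}$. This is exactly the technical content extracted from \cite{BlasiokBCKY17}, which I would cite rather than reprove; the rest of the argument (duality reduction, restriction closure, and Lévy's inequality) is then a short packaging around their structural result.
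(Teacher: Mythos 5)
The paper does not prove this lemma; it is stated as a direct citation of \cite{BlasiokBCKY17}, so there is no in-paper argument to compare your sketch against. Your proposal attempts to reconstruct the cited proof, but the middle step---the one your own closing paragraph singles out as the main obstacle and defers back to \cite{BlasiokBCKY17}---is wrong as stated, not merely technical.

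You claim that because a $Q'$-norm $L$ is 2-concave, it admits a representation $L(x) = \sup_{w \in \calW}\bigl(\sum_i w_i x_i^2\bigr)^{1/2}$. That representation characterizes $Q$-norms (the 2-convex class), not $Q'$-norms. Indeed, if $L^*$ is a $Q$-norm with $L^*(y) = L''(y^2)^{1/2}$, then duality applied to $L''$ gives $L^*(y)^2 = \sup\{\langle z, y^2\rangle : (L'')^*(z) \le 1\}$, so it is $L^*$, not $L$, that is a supremum of weighted Euclidean seminorms. Any supremum of such seminorms is automatically 2-convex and generically fails 2-concavity, since $\sup_w(a_w + b_w) \le \sup_w a_w + \sup_w b_w$ with strict inequality in general. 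For a concrete counterexample take $L = L_1$ on $\mathbb{R}^2$, which is a $Q'$-norm (its dual $L_\infty$ is a $Q$-norm). Plugging in $x=(1,0)$ and $x=(0,1)$ forces $\sup_w w_1 = \sup_w w_2 = 1$, hence $\sup_w(w_1+w_2) \le 2$; but $x=(1,1)$ requires $\sup_w(w_1+w_2) = 4$. So no fixed weight family $\calW$ can exist.

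Even granting a correct structural representation, the L\'evy step as written does not close the gap. L\'evy's inequality controls fluctuations of $L$ around its median $M_L$, whereas bounding $\mmc(L)$ requires showing $b_L / M_L = \O{\log n}$, i.e.\ that the \emph{maximum} on the sphere is within a logarithmic factor of a typical value---a statement about the location of the maximizer relative to a random direction, not about deviations from the median. That geometric input (roughly, that 2-concavity forces the maximizer of a symmetric norm on the sphere to look like a flat vector, which a random sphere point resembles up to logarithmic factors) is the real content of the argument in \cite{BlasiokBCKY17}. I would either work through the dual $Q$-norm explicitly, or look up the precise structural lemma there rather than positing the supremum representation for the $Q'$ side.
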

\thmref{thm:main} and \lemref{lem:mmc:qnorm} thus present a differentially private algorithm for $Q'$-norm approximation that uses polylogarithmic space.  
\begin{definition}[Top-$k$ norm]
The top-$k$ norm for a vector $x\in\mathbb{R}^n$ is the sum of the largest $k$ coordinates of $|x|$, where we use $|x|$ to denote the vector whose entries are the coordinate-wise absolute value of $x$. 
\end{definition}
The top-$k$ norm is frequently used to understand the more general Ky Fan $k$-norm~\cite{WuDST14}, which is used to regularize optimization problems in numerical linear algebra. 
Whereas the Ky Fan $k$ norm is defined as the sum of the $k$ largest singular values of a matrix, the top-$k$ norm is equivalent to the Ky Fan $k$ norm when the input vector $x$ represents the vector of the singular values of the matrix. 
\begin{lemma}
\lemlab{lem:mmc:topk}
\cite{BlasiokBCKY17}
$\mmc(L)=\tO{\sqrt{\frac{n}{k}}}$ for the top-$k$ norm $L$. 
\end{lemma}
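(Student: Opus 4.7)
\medskip\noindent\textbf{Proof proposal.}
The plan is to upper bound the numerator (the maximum of the top-$k$ norm $L$ on the $L_2$-unit sphere) and lower bound the denominator (the median of $L$ on every coordinate-restricted unit sphere) separately. For the maximum, Cauchy--Schwarz applied to the $k$ largest entries of any $x$ with $\|x\|_2 = 1$ gives $L(x) = \sum_{i=1}^{k} |x|_{(i)} \le \sqrt{k}$, with equality when $k$ coordinates equal $1/\sqrt{k}$ and the rest vanish. Thus the maximum on the sphere is exactly $\sqrt{k}$.

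For the denominator, I would fix a restriction to an arbitrary subset of $k'$ coordinates and analyze the median of $L$ on the corresponding $(k'-1)$-sphere via the standard identification $g/\|g\|_2 \sim \mathrm{Unif}(S^{k'-1})$ for $g \sim N(0, I_{k'})$. Since $\|g\|_2 = \Theta(\sqrt{k'})$ with high probability, it suffices to estimate $L(g)$. When $k' \le k$, the top-$k$ norm coincides with the $L_1$ norm on $k'$ coordinates, which has expectation $\Theta(k')$ and concentrates tightly, giving a median on the sphere of order $\sqrt{k'} \ge \sqrt{k}$. When $k' > k$, standard Gaussian order-statistic estimates give $|g|_{(i)} = \Theta(\sqrt{\log(k'/i)})$ with high probability for $i \le k$, and summing yields $\sum_{i=1}^{k} |g|_{(i)} = \Theta\bigl(k\sqrt{\log(k'/k)}\bigr)$; dividing by $\sqrt{k'}$ gives a median of order $k \sqrt{\log(k'/k)/k'}$.

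Next, I would minimize this lower bound over $k'$ to identify the worst-case restriction. A direct calculus check shows that $k \sqrt{\log(k'/k)/k'}$ is eventually decreasing in $k'$ (for $k' \gtrsim e k$), so its minimum over $k' \in [k, n]$ is attained up to constants at $k' = n$, yielding a minimum median of $\Omega\bigl(k \sqrt{\log(n/k)/n}\bigr)$. Combining with the numerator bound gives
\[
\mmc(L) = \O{\frac{\sqrt{k}}{k\sqrt{\log(n/k)/n}}} = \O{\sqrt{\frac{n}{k\,\log(n/k)}}} = \tO{\sqrt{n/k}},
\]
as claimed.

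The main obstacle will be making the Gaussian order-statistic estimates rigorous, in particular controlling the \emph{median} (rather than just the expectation) of $\sum_{i=1}^{k} |g|_{(i)}$. This requires Gaussian concentration, for instance Borell--TIS applied to the $1$-Lipschitz map $g \mapsto L(g)$, together with matching anti-concentration so that the median is of the same order as the expectation, plus careful quantile estimates for $|g|_{(i)}$ rather than just one-sided tail bounds. A secondary subtlety is the small-$k'$ regime near $k' = k$, where the Gaussian approximation and the Cauchy--Schwarz regime must be matched so that no restriction drives the median below the $k' = n$ bound.
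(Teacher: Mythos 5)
The paper does not prove this lemma; it is cited from \cite{BlasiokBCKY17}, whose argument (like yours) represents the uniform measure on the sphere via normalized Gaussians $g/\|g\|_2$ and analyzes Gaussian order statistics, so your overall plan is the right one. Your calculation in the regime $k' > k$ is essentially correct, although $\sum_{i\le k}|g|_{(i)} = \Theta(k\sqrt{\log(k'/k)})$ should really be $\Theta\bigl(k\max\{1,\sqrt{\log(k'/k)}\}\bigr)$ to cover $k'$ close to $k$, where the unguarded form vanishes.

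The genuine gap is in the decoupling and the $k'\le k$ regime. By definition $\mmc(L)=\max_{k'\le n} b_{L^{(k')}}/M_{L^{(k')}}$ is a maximum of \emph{ratios}; you instead bound $(\max_{k'} b_{L^{(k')}})/(\min_{k'} M_{L^{(k')}})$, which upper-bounds $\mmc(L)$ but is not tight for this norm. For $k'\le k$ the maximum of $L^{(k')}$ on $S^{k'-1}$ is $\sqrt{k'}$, not $\sqrt{k}$ (the flat vector lives in $\mathbb{R}^{k'}$), and the median is also $\Theta(\sqrt{k'})$, so the inequality ``$\sqrt{k'}\ge\sqrt{k}$'' in your argument is backwards: at $k'=1$ the median is $\Theta(1)$, the smallest it ever gets. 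Consequently, once $k$ is large (roughly $k\gtrsim\sqrt{n/\log(n/k)}$), the global minimum $\min_{k'}M_{L^{(k')}}$ is attained near $k'=1$ with value $\Theta(1)$, and your decoupled bound degrades to $\Theta(\sqrt{k})$; for $k=n^{2/3}$ this is $n^{1/3}$, much larger than the claimed $\tO{\sqrt{n/k}}=\tO{n^{1/6}}$. The fix is to compute $b_{L^{(k')}}/M_{L^{(k')}}$ separately for each $k'$: for $k'\le k$ both numerator and denominator are $\Theta(\sqrt{k'})$ so those dimensions contribute only $O(1)$, and the binding regime is $k'>k$, where your ratio $\Theta\bigl(\sqrt{k'/(k\max\{1,\log(k'/k)\})}\bigr)$, maximized at $k'=n$, yields $\tO{\sqrt{n/k}}$. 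Your closing remarks about making the order-statistic quantiles and concentration rigorous are valid but standard; the decoupling is the real issue.
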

In particular, the top-$k$ norm for a vector of singular values when $k=n$ is equivalent to the Schatten-$1$ norm of a matrix, which is a common metric for matrix fitting problems such as low-rank approximation~\cite{LiW20}. 

\begin{definition}[Shannon entropy]
For a frequency vector $v\in\mathbb{R}^n$, we define the Shannon entropy by $H(v)=-\sum_{i=1}^n v_i\log v_i$. 
\end{definition}

To achieve an additive approximation to the Shannon entropy, we instead compute a multiplicative approximation to the exponential form, as follows:

\begin{observation}
\obslab{obs:entropy:addmult}
A $(1+\alpha)$-multiplicative approximation of the function $h(v):=2^{H(v)}$ corresponds to an $\alpha$-additive approximation of the Shannon Entropy $H(v)$ (and vice versa). 
\end{observation}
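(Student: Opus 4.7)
The plan is to show that the monotone bijection $x \mapsto 2^x$ (and its inverse $\log_2$) translates multiplicative approximations into additive approximations and vice versa, up to a constant factor in $\alpha$ that is absorbed into the notation.

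First, for the forward direction, I would assume we have a value $\hat{h}$ satisfying the $(1+\alpha)$-multiplicative guarantee
\[\frac{h(v)}{1+\alpha} \;\le\; \hat{h} \;\le\; (1+\alpha)\, h(v).\]
Since $\log_2$ is monotone, applying it term-wise and using $\log_2 h(v) = H(v)$ yields
\[ H(v) - \log_2(1+\alpha) \;\le\; \log_2 \hat{h} \;\le\; H(v) + \log_2(1+\alpha).\]
Because $\log_2(1+\alpha) \le \alpha$ for $\alpha \in (0,1]$ after an appropriate constant rescaling (concretely, $\log_2(1+\alpha) = \Theta(\alpha)$ via Taylor expansion), the quantity $\log_2 \hat{h}$ is an $\alpha$-additive approximation of $H(v)$.

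For the converse direction, suppose $\tilde{H}$ satisfies $|\tilde{H} - H(v)| \le \alpha$. Then
\[\frac{2^{\tilde{H}}}{h(v)} \;=\; 2^{\tilde{H} - H(v)} \;\in\; [2^{-\alpha}, 2^{\alpha}],\]
and since $2^{\pm \alpha} \in [1 - \Theta(\alpha),\, 1 + \Theta(\alpha)]$ for small $\alpha$, the value $2^{\tilde{H}}$ is a $(1+\alpha)$-multiplicative approximation of $h(v)$ (again after rescaling $\alpha$ by a constant).

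There is essentially no obstacle here: the content of the observation is just that the $\log_2$/exponential change of variables interchanges the two approximation notions, and the only subtlety is absorbing the $1/\ln 2$ factor from $\log_2(1+\alpha) = \ln(1+\alpha)/\ln 2$ into the $\Theta(\alpha)$ bounds. The purpose of the observation in the paper is simply to allow the downstream algorithm to target a multiplicative approximation of $h(v)$ rather than working with additive approximations of $H(v)$ directly.
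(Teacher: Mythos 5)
The paper states this as an observation with no accompanying proof, evidently regarding it as immediate; your argument is the natural one the authors have in mind, and it is correct. The one caveat is that the inequality you invoke is actually reversed for small $\alpha$: since $2^\alpha \le 1+\alpha$ on $[0,1]$ by convexity, we in fact have $\log_2(1+\alpha) \ge \alpha$ there (and similarly $|\log_2(1-\alpha)| \ge \alpha$), so the statement is only true after rescaling $\alpha$ by a constant factor on the order of $1/\ln 2$. You flag this yourself with the $\Theta(\alpha)$ parenthetical, so the proof is sound as written; the observation in the paper is best read with the same implicit constant-factor slack, which is harmless because it is absorbed into the $\O{\cdot}$ notation wherever $\alpha$ enters the space and error bounds.
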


Moreover, computing a $(1+\alpha)$-approximation to $2^{H(v)}$ can be achieved through computing a $(1+\alpha)$-approximation to various $L_p$ norms for $p\in(0,2)$.

\begin{lemma}[Section 3.3 in~\cite{HarveyNO08}]
\lemlab{lem:entropy:reduction}
Let $k=\log\frac{1}{\alpha}+\log\log m$ and $\alpha'=\frac{\alpha}{12(k+1)^3\log m}$. 
There exists an explicit set $\{y_0,\ldots,y_k\}$ with $y_i\in(0,2)$ for all $i$ and a post-processing function that takes $(1+\alpha')$-approximations to $F_{y_i}(x)$, i.e., the $(y_i)$-th frequency moment of $x$, and outputs a $(1+\alpha)$-approximation to $h(v)=2^{H(x)}$. 
Furthermore, the set $\{y_0,\ldots,y_k\}$ and post-processing function are both efficiently computable, i.e., polynomial runtime. 
\end{lemma}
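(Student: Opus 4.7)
The plan is to follow the polynomial-interpolation approach of~\cite{HarveyNO08}. I first observe that the map $y\mapsto F_y(x)=\sum_i x_i^y$ is real-analytic in $y$, with
\[
\frac{d}{dy}F_y(x)\Big|_{y=1}=\sum_i x_i\ln x_i=-H(x)\ln 2,
\]
so that $h(v)=2^{H(v)}=\exp\bigl(-\tfrac{d}{dy}F_y|_{y=1}\bigr)$. Combined with \obsref{obs:entropy:addmult}, this reduces the task of obtaining a $(1+\alpha)$-multiplicative approximation to $h$ to obtaining an $O(\alpha)$-additive approximation to $\tfrac{d}{dy}F_y|_{y=1}$.

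Next I would pick $k+1$ evaluation points $y_0,\ldots,y_k\in(0,2)$ (to keep each $F_{y_i}$ amenable to sublinear-space streaming estimation) clustered in a small interval of width $\tau$ around $y=1$, and approximate $\tfrac{d}{dy}F_y|_{y=1}$ by a Lagrange-type finite difference $\sum_{i=0}^k w_i F_{y_i}$ whose weights $w_i$ differentiate the degree-$k$ polynomial interpolant through the samples $(y_i,F_{y_i})$. A Taylor expansion of $F_y$ about $y=1$, combined with the crude bound $|\ln x_i|\le\log m$, shows that the $(k+1)$-st derivative of $F_y$ near $y=1$ is bounded by $(\log m)^{k+1}\cdot\max_i F_{y_i}$, so the truncation error of the interpolant is $O\bigl(\tau^k(\log m)^{k+1}\cdot\max_i F_{y_i}\bigr)$. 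Taking $k=\log\tfrac{1}{\alpha}+\log\log m$ and $\tau$ correspondingly small drives the truncation contribution to $H(x)$ down to $O(\alpha)$.

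I would then track how multiplicative $(1+\alpha')$ errors in each $F_{y_i}$ propagate through the post-processing. Writing $\widetilde F_{y_i}=(1+\eta_i)F_{y_i}$ with $|\eta_i|\le\alpha'$, the error in the derivative estimate is $\sum_i w_i\eta_i F_{y_i}$; the interpolation weights scale like $\poly(k)/\tau$, so the accumulated error is at most $\alpha'\cdot(k+1)\poly(k)/\tau\cdot\max_i F_{y_i}$. Setting $\alpha'=\Theta\bigl(\alpha/(12(k+1)^3\log m)\bigr)$ as stated then balances the truncation and propagation errors so that the total additive error in $H$ remains $O(\alpha)$. Since the post-processing map is an explicit finite linear combination followed by an exponential, it is computable in polynomial time once $y_0,\ldots,y_k$ and $w_0,\ldots,w_k$ are fixed.

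The main obstacle is the familiar tension between interpolation and numerical differentiation: $\tfrac{d}{dy}F_y|_{y=1}$ must be extracted from differences of nearly equal quantities $F_{y_i}$, so the raw multiplicative slack $\alpha'$ is amplified by the $\Theta(1/\tau)$ scaling of the finite-difference weights, while a $(\log m)^{k+1}$ factor appears in the truncation error because the derivatives $\sum_i x_i^y(\ln x_i)^j$ can be as large as $(\log m)^j F_y$. Balancing the $\tau^k$ decay of the interpolation remainder against these two sources of blowup is what pins down the specific polylogarithmic relation $\alpha'=\Theta(\alpha/((k+1)^3\log m))$ with $k=\Theta(\log(1/\alpha)+\log\log m)$ claimed in the lemma.
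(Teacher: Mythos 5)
The paper does not prove this lemma itself; it cites it verbatim from Harvey--Nelson--Onak, and your reconstruction — recovering $H$ from the derivative of the analytic map $y\mapsto F_y(x)$ at $y=1$ via a clustered Lagrange/finite-difference interpolant, then balancing the $\tau^k(\log m)^{k+1}$ truncation error against the $\Theta(1/\tau)\cdot\poly(k)$ amplification of the per-point multiplicative slack — is exactly the approximation-theoretic argument in the cited Section 3.3. Your sketch is directionally correct and captures the essential tension (numerical differentiation amplifies the $(1+\alpha')$ error, which is what forces $\alpha'$ to shrink by the $\poly(k)\log m$ factor), so it matches the approach the paper is invoking.
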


Since our mechanism releases a private set of statistics from which $(1+\alpha)$-approximations to $L_p$ norms can be computed for any $p\in(0,2)$, then our mechanism also privately achieves an additive $\alpha$-approximation to Shannon entropy. 

\subsection{Algorithmic Intuition and Overview}
Our starting point is the $L_p$ estimation algorithm of \cite{IndykW05}, which was parametrized by \cite{BlasiokBCKY17} to handle symmetric norms. 
For a $(1+\alpha)$-approximation, the algorithm partitions the $n$ coordinates of the frequency vector $x$ into powers of $\xi$-based on their magnitudes, where $\xi>1$ is a fixed function of $\alpha$. 
Each partition forms a level set, so that the $i$-th level set consists of the coordinates of $x$ with frequency $[\xi^i,\xi^{i+1})$, but \cite{IndykW05,BlasiokBCKY17} showed that it suffices to accurately count the size of each \emph{important} level set and zero out to the other level sets, where a level set is considered important if its size is large enough to contribute an $\frac{\alpha^2}{\log m}$ fraction of the symmetric norm. 
In other words, if $\tilde{x}$ is a vector whose coordinates match those of $x$ in important levels sets and are $0$ elsewhere, then $(1-\alpha)L(x)\le L(\tilde{x})\le(1+\alpha)L(x)$. 
We formalize the definition of importance in \secref{sec:prelims:sym}. 

\paragraph{Private symmetric norm estimation in the centralized setting.} 
To preserve $(\eps,\delta)$-differential privacy, one initial approach would be to view the frequency vector as a histogram and add Laplacian noise with scale $\O{\frac{1}{\eps}}$ to the frequency of each element. 
However, the level sets consisting of elements with frequencies between $[\xi^i,\xi^{i+1})$ for small $i$, say $i=0$, could be largely perturbed by such Laplacian noise. 
For example, it is possible that for some coordinate $j$ in an important level set, we have $x_j=1$, in which case adding Laplacian noise with scale $\O{\frac{1}{\eps}}$ to $x_j$ will heavily distort the coordinate. 
This can happen to all coordinates in the important level set, which results in an inaccurate estimation of the norm. 

Fortunately, if $i$ is small, the corresponding level set must contain a large number of elements if it is important, so it seems possible to privately release the size $\Gamma_i$ of the level set. 
Indeed, we can show that the $L_1$ sensitivity of the vector corresponding to level set sizes is small and so we can add Laplacian noise with scale $\O{\frac{1}{\eps}}$ to each level set size. 
Hence if the level set has size $\Gamma_i$ roughly $\Omega\left(\frac{1}{\alpha\eps}\right)$, then the Laplacian noise will affect $\Gamma_i$ by a $(1+\alpha)$-factor. 

Unfortunately, there can be level sets that are both important and small in size. 
For example, if there is a single element with frequency $m$, then the size of the corresponding level set is just one. 
Then adding Laplacian noise with scale $\O{\frac{1}{\eps}}$ will severely affect the size of the level set and thus the estimation of the symmetric norm. 
On the other hand, for $m>\frac{1}{\alpha\eps}$, the frequency of the coordinate is quite large so again it seems like we can just add Laplacian noise with scale $\O{\frac{1}{\eps}}$ and output the noisy frequency of the coordinate. 

\paragraph{New approach: classifying and separately handling high, medium, and low frequency levels.} 
The main takeaway from these challenges is that we should handle different level sets separately. 
For the level sets of small coordinates, the important level sets must have large size and thus we would like to release noisy sizes. 
For the important level sets of large coordinates, we would like to release noisy frequencies of the coordinates. 

In that vein, we partition the levels into three groups after defining thresholds $T_1$ and $T_2$, with $T_1>T_2$. 
We define the ``high frequency levels'' as the levels whose coordinates exceed $T_1$ in frequency. 
The intuition is that because the high frequency levels have such large magnitude, their frequencies can be well-approximated by running an $L_2$-heavy hitters algorithm on the stream $S$. 

We define the ``medium frequency levels'' as the levels whose coordinates are between $T_1$ and $T_2$ in frequency. 
These coordinates are not large enough to be detected by running an $L_2$-heavy hitters algorithm on the stream $S$. 
However, the sizes of these level sets must be large if the level set is important. 
Thus there exists a substream $S_j$ for which a large number of these coordinates are subsampled and their frequencies can be well-approximated by running an $L_2$-heavy hitters algorithm on the substream $S_j$. 

Finally, we define the ``low frequency levels'' as the levels whose coordinates are less than $T_2$ in frequency. 
These coordinates are small enough that we cannot add Laplacian noise to their frequencies without affecting the level sets they are mapped to. 
Instead, we show that the $L_1$ sensitivity for the level set estimations is particularly small for the low frequency levels. 
Thus, for these frequency levels, we report the size of the frequency levels rather than the approximate frequencies of the heavy-hitters. 
We remark that if our goal was to just approximate the symmetric norms without preserving differential privacy, then it would suffice to just consider the high and medium frequency levels, since the low frequency levels are particularly problematic when Laplacian noise is added to the frequency vector. 
We also remark that we only use the thresholds $T_1$ and $T_2$ for the purposes of describing our algorithm -- in the actual implementation of the algorithm, the thresholds $T_1$ and $T_2$ will be implicitly defined by each of the substreams. 
We summarize our new approach in \figref{fig:flowchart}.  

\begin{figure*}[!htb]
\centering
\begin{tikzpicture}[scale=0.95]
\draw (0,0) rectangle+(3.5,1);
\node at (1.75,0.5){Low Level Sets};

\draw[purple,->] (3.6,0.5) -- (3.9,0.5);

\draw (4,0) rectangle+(3.5,1);
\node at (1.75+4,0.5){Subsampling};

\draw[purple,->] (7.6,0.5) -- (7.9,0.5);

\draw (8,0) rectangle+(4.5,1);
\node at (8+2.25,0.5){Noisy level set sizes};

\node at (14.75,0.5){Private level set sizes};

\draw (0,1.5) rectangle+(3.5,1);
\node at (1.75,2){Medium Level Sets};

\draw[purple,->] (3.6,2) -- (3.9,2);

\draw (4,1.5) rectangle+(3.5,1);
\node at (1.75+4,2){Subsampling};

\draw[purple,->] (7.6,2) -- (7.9,2);

\draw (8,1.5) rectangle+(4.5,1);
\node at (8+2.25,2){$\privcountsketch$};

\node at (14.75,2){Private level set sizes};

\draw (0,3) rectangle+(3.5,1);
\node at (1.75,3.5){High Level Sets};

\draw[purple,->] (3.6,3.5) -- (4.9,3.5);

\draw (5,3) rectangle+(4.5,1);
\node at (5+2.25,3.5){$\privcountsketch$};

\node at (13.75,3.5){Private coordinate magnitudes};

\end{tikzpicture}
\caption{Illustration of separate handling of the high, medium, and low level sets.}
\figlab{fig:flowchart}
\end{figure*}
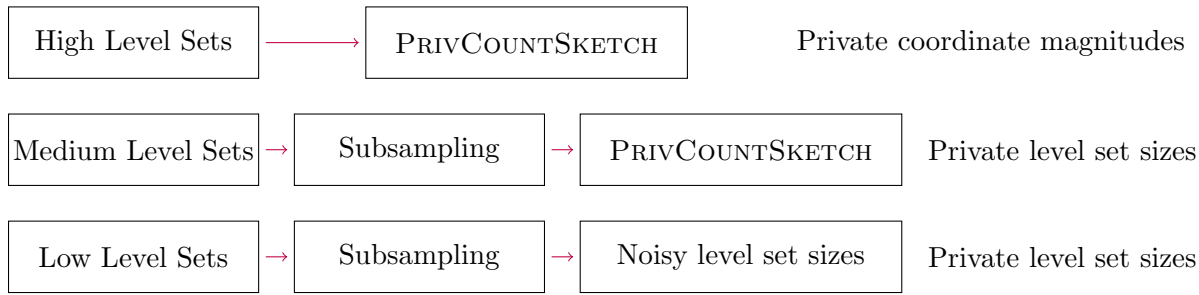

\paragraph{Private symmetric norm estimation in the streaming model.} 
Although the previously discussed intuition builds towards a working algorithm, the main caveat is that so far, we have mainly discussed the centralized model, where space is not restricted and so each coordinate and thus each level set size can be counted exactly. 
In the streaming model, we cannot explicitly track the frequency vector, or even the frequencies of a constant fraction of coordinates. 
Instead, to estimate the sizes of each level set, \cite{IndykW05,BlasiokBCKY17} take the stream $S$ and form $s=\O{\log n}$ substreams $S_1,\ldots,S_s$, where the $j$-th substream is created by sampling the universe of size $n$ at a rate of $\frac{1}{2^{j-1}}$. 
Then $S_j$ will only consist of the stream updates to the particular coordinates of $x$ that are sampled. 
Thus in expectation, the frequency vector induced by $S_j$ will have sparsity $\frac{\|x\|_0}{2^{j-1}}$. 
Similarly, if a level set $i$ has size $\Gamma_i$, then $\frac{\Gamma_i}{2^{j-1}}$ of its members will be sampled in $S_j$ in expectation. 
It can then be shown through a variance argument that if level set $i$ is important, then there exists an explicit substream $j$ from which $\Gamma_i$ can be well-approximated using the $L_2$-heavy hitter algorithm $\countsketch$ and as a result, the symmetric norm of $x$ can be well-approximated. 
The main point of the subsampling approach is that if there exists a level set with large size consisting of small coordinates, then the coordinates will not be detected by the $\countsketch$ on $S$, but because $S_j$ has significantly smaller $L_2$ norm, then the coordinates will be detected by $\countsketch$ on $S_j$. 

However, adapting the subsampling and heavy-hitter approach introduces additional challenges for privacy. 
For instance, we can analyze the $L_2$-heavy hitter algorithm $\countsketch$ and show that although the $L_1$ sensitivity of the estimated frequency for a single coordinate is small, the $L_1$ sensitivity of the estimated frequency vector for all the coordinates may be large. 
Instead, we use the view that $\countsketch$ is a composition function that first only estimates frequencies for the top $\poly\left(\frac{1}{\alpha},\frac{1}{\eps},\log n\right)$ and then outputs only those estimates that are above a certain threshold. 
Similarly, the Laplacian noise added to privately use $\countsketch$ can alter the sizes of a significant number of level sets for small coordinates. 
Thus for the small coordinates (corresponding to the substreams $S_j$ with large $j$), we invoke $\countsketch$ with much higher accuracy, so that with high probability, it will return \emph{exactly} the frequencies for the small coordinates. 
For example, note that if the frequency $x_k$ of a coordinate $k\in[n]$ is at most $\frac{1}{2\alpha^2\eps}$, then any $(1+\alpha^2\eps)$-approximation to $x_k$ can be rounded to exactly recover $x_k$. 
This decreases the $L_1$ sensitivity of the vector of estimated level set sizes, therefore allowing us to add Laplacian noise without greatly affecting the quality of approximation. 

\subsection{Related Work}
\seclab{sec:related}
Non-private $L_p$ norm estimation is one of the fundamental problems in the streaming model, beginning with \cite{AlonMS99}'s seminal work that tracks the inner product of the frequency vector with a random sign vector for $L_2$ estimation (as well as a telescoping argument for integer $p>0$). 
\cite{Indyk06,Li08} later showed that this approach could be generalized for $p\in(0,2]$ by tracking the inner product of the frequency vector with a vector with randomly generated $p$-stable variables, which only exist for $p\in(0,2]$. 
For $p>2$, \cite{Andoni17} gave an $L_p$ estimation algorithm using the max-stability property of exponential random variables. 
More generally, \cite{IndykW05} introduced the framework of subsampling and using heavy-hitters for $L_p$ estimation, which \cite{BlasiokBCKY17} parametrized to all symmetric norms. 
It should be emphasized that these techniques all handle the more general turnstile model, in which $\pm1$ updates are allowed to each coordinate, rather than single positive increments. 
Hence our techniques also extend to the turnstile model with a minor change on the conditions. 

More recently, \cite{BlockiGMZ22,Tetek22} given a general framework for converting non-private approximation algorithms into private approximation algorithms, provided that the accuracy of these algorithms could be tuned with an input parameter $\eps>0$, i.e., the algorithms can achieve $(1+\eps)$-approximation for a wide range of $\eps>0$. 
Their results presented a solution that addresses the difficulty of adapting privacy specifically to each non-private algorithm separately. 
However, their framework only applies to problems with scalar outputs and thus do not handle synthetic data release. 
Therefore, privately answering multiple norm queries while circumventing composition bounds is still a challenge that their results cannot handle. 

Symmetric norms have also recently received attention in other big data models as well. 
\cite{AndoniNNRW17} studied approximate near neighbors for general symmetric norms while \cite{LiuMVSB16} studied symmetric norm estimation for network monitoring. 
\cite{SongWYZZ19} considered Orlicz norm regression and other loss functions where the penalty is a symmetric norm. 
\cite{BravermanWZ21} gave an algorithm to approximate the symmetric norm in the sliding window model, where updates in the data stream implicitly expire after a fixed amount of time. 

Specific cases of private $L_p$ estimation in the streaming model have also been previously well-studied. 
\cite{ChoiDKY20,Smith0T20} studied private $L_0$ estimation using the Flajolet-Martin sketch, while \cite{WangPS22} studied private $L_p$ estimation for $p\in(0,1]$ using the $p$-stable sketch and \cite{BlockiBDS12,Sheffet19,ChoiDKY20,BuGKLST21} studied private $L_2$ estimation using the Johnson-Lindenstrauss projection. 
Specifically, \cite{BlockiBDS12} gave an $(\eps,\delta)$-DP algorithm for $L_2$ estimation that achieves a $(1+\eps)$-approximation while using $\O{\frac{1}{\eps^2}\log n\log\frac{1}{\delta}}$ bits of space and \cite{WangPS22} gave an $(\eps,\delta)$-DP algorithm for $L_p$ estimation that achieves a $(1+\alpha)$-approximation while using $\O{\frac{1}{\alpha^2}\log n\log\frac{1}{\delta}}$ bits of space for constant $\eps$ and $p\in(0,1)$. 
For fractional $p>1$, private distribution estimation algorithms~\cite{AcsCC12,XuZXYYW13,BassilyS15,WangLLSL20} can be used to approximate the $L_p$ norm, but since the algorithms provide information over a much larger distribution, e.g., much larger histograms of frequencies, the privacy-accuracy trade-off is sub-optimal and the space complexity is exponentially worse. 

The related problem of privately releasing heavy-hitters in big data models has also been well-studied. 
\cite{ChanLSX12} studied the problem of continually releasing $L_1$-heavy hitters in a stream, while \cite{DworkNPRY10} studied $L_1$-heavy hitters and other problems in the pan-private streaming model. 
The heavy-hitter problem has also received significant attention in the local model, e.g.,~\cite{BassilyS15,DingKY17,AcharyaS19,BunNS19,BassilyNST20}, where individual users should locally randomize their data before sending differentially private information to an untrusted server that aggregates the statistics across all users. 

\section{Preliminaries}
In this section, we introduce definitions and simple or well-known results from differential privacy, sketching algorithms, and symmetric norms. 
For notation, we use $[n]$ for an integer $n>0$ to denote the set $\{1,\ldots,n\}$. 
We also use the notation $\poly(n)$ to represent a constant degree polynomial in $n$ and we say an event occurs \emph{with high probability} if the event holds with probability $1-\frac{1}{\poly(n)}$. 
Similarly, we use $\polylog(n)$ to denote $\poly(\log n)$. 
Given a vector $x\in\mathbb{R}^n$, we define its second frequency moment $F_2(x)=x_1^2+\ldots+x_n^2$. 
Finally, for a parameter $c\ge 1$, we say that $X$ provides a $C$-approximation to a quantity $Y$ if $\frac{X}{C}\le Y\le C\cdot X$.

\subsection{Differential privacy}
We first recall standard definitions and results from differential privacy. 
We define the concept of neighboring streams used in \defref{def:dp}. 
\begin{definition}[Neighboring streams]
Data streams $\frakS$ and $\frakS'$ are neighboring if there exists a single update $i\in[m]$ such that $u_i\neq u'_i$, where $u_1,\ldots,u_m$ are the updates of $\frakS$ and $u'_1,\ldots,u'_m$ are the updates of $\frakS'$. 
\end{definition}
A standard approach to guarantee differential privacy is to add Laplacian noise, which is drawn from the following distribution. 
\begin{definition}[Laplace distribution]
A random variable $x$ is drawn from the \emph{Laplace distribution} with mean $\mu$ and scale $s>0$, $x\sim\Lap(\mu,s)$, if the probability density function of $x$ is $\frac{1}{2s}\exp\left(-\frac{|x-\mu|}{s}\right)$. 
If $\mu=0$, we say $x\sim\Lap(s)$. 
\end{definition}
The amount of Laplacian noise that must be added to an output depends on the $L_1$ sensitivity of the function. 
\begin{definition}[$L_1$ sensitivity]
We define the $L_1$ sensitivity of a function $f:\mathbb{N}^{|\frakU|}\to\mathbb{R}^k$ by
\[\Delta_f=\max_{x,y\in\mathbb{N}^{|\frakU|},\|x-y\|_1=1}\|f(x)-f(y)\|_1.\]
\end{definition}
Intuitively, the $L_1$ sensitivity of a function is the largest amount that $f$ can change when a single update in the stream that defines $f$ changes.
\begin{definition}[Laplace mechanism]
Given a function $f:\mathbb{N}^{|\frakU|}\to\mathbb{R}^k$, we define the Laplace mechanism by
\[\calM_L(x,f,\eps)=f(x)+(X_1,\ldots,X_k),\]
where $X_i\sim\Lap(\Delta f/\eps)$.
\end{definition}
The Laplace mechanism is one of the most fundamental ways to ensure differential privacy:
\begin{theorem}
\cite{DworkR14}
The Laplace mechanism preserves $(\eps,0)$-differential privacy.
\end{theorem}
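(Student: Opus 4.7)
The plan is to establish the $(\eps,0)$-DP guarantee by directly comparing the probability density functions of $\calM_L(x,f,\eps)$ and $\calM_L(y,f,\eps)$ at an arbitrary point $z\in\mathbb{R}^k$, for two neighboring inputs $x,y$. Since the output is a continuous random vector with an explicit product density (the coordinates $X_1,\ldots,X_k$ are independent Laplace random variables), the ratio $p_x(z)/p_y(z)$ can be written in closed form and bounded pointwise by $e^\eps$; this then yields the DP condition by integrating over any measurable event $E\subseteq\mathbb{R}^k$.

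Concretely, first I would write
\[
\frac{p_x(z)}{p_y(z)} \;=\; \prod_{i=1}^k \exp\!\left(\frac{\eps}{\Delta_f}\bigl(|f(y)_i - z_i| - |f(x)_i - z_i|\bigr)\right),
\]
using that the Laplace density at scale $s=\Delta_f/\eps$ contributes a factor $\exp(-|t|/s)$ in each coordinate and the normalizing constants cancel. The main step is then to apply the reverse triangle inequality coordinatewise, bounding $|f(y)_i - z_i| - |f(x)_i - z_i| \le |f(x)_i - f(y)_i|$, which is where the independence of $z$ drops out and only the difference $f(x)-f(y)$ remains.

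Summing over $i$ and invoking the definition of $L_1$ sensitivity yields
\[
\frac{p_x(z)}{p_y(z)} \;\le\; \exp\!\left(\frac{\eps\,\|f(x)-f(y)\|_1}{\Delta_f}\right) \;\le\; e^{\eps},
\]
since neighboring streams $\frakS,\frakS'$ satisfy $\|x-y\|_1 \le 1$ and thus $\|f(x)-f(y)\|_1 \le \Delta_f$ by definition. Finally I would integrate this pointwise bound over any measurable $E\subseteq\mathbb{R}^k$ to conclude $\PPr{\calM_L(x,f,\eps)\in E}\le e^\eps\cdot\PPr{\calM_L(y,f,\eps)\in E}$, giving $(\eps,0)$-differential privacy with $\delta=0$.

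There is no real obstacle here: the argument is entirely mechanical once the Laplace densities are written out, and the only place where the structure of the problem enters is the triangle inequality step combined with the definition of $\Delta_f$. The one small subtlety to be careful about is to state neighboring-ness as $\|x-y\|_1 \le 1$ (consistent with the neighboring streams definition above) rather than requiring equality, so that the sensitivity bound applies cleanly; the rest is a standard calculation.
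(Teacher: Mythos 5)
Your proposal is correct and is the standard textbook argument for the Laplace mechanism (pointwise density-ratio bound via the reverse triangle inequality, then integrate); the paper does not give its own proof of this statement but simply cites \cite{DworkR14}, and your argument is the one found there. One minor nit: the paper's \emph{neighboring streams} definition (a single changed update) actually implies $\|x-y\|_1 \le 2$ on the induced frequency vectors rather than $\le 1$, which is slightly inconsistent with the paper's $L_1$-sensitivity definition using $\|x-y\|_1=1$; your proof is internally consistent with the sensitivity definition as stated, so this is an artifact of the paper's conventions rather than a flaw in your argument.
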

Privately releasing multiple statistics that are individually differentially private can also be done, but comes at a slight cost. 
\begin{theorem}[Composition and post-processing of differential privacy]
\cite{DworkR14}
\thmlab{thm:dp:comp}
Let $\calA_i:\frakU_i\to X_i$ be an $(\eps_i,\delta_i)$-differential private algorithm for $i\in[k]$. 
Then $\calA_{[k]}(x)=(\calA_1(x),\ldots,\calA_k(x))$ is $\left(\sum_{i=1}^k\eps_i,\sum_{i=1}^k\delta_i\right)$-differentially private. 
Furthermore, if $g_i:X_i\to X'_i$ is an arbitrary random mapping, then $g_i(\calM_i(x))$ is $(\eps_i,\delta_i)$-differentially private. 
\end{theorem}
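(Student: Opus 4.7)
The theorem has two parts: the composition bound and the post-processing invariance. I would prove them separately, handling post-processing first since it follows essentially by unpacking definitions, and then tackling composition via a privacy-loss-random-variable argument.

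For post-processing, the plan is to model the random mapping $g_i$ as a deterministic function $g_i(y) = h_i(y, R)$ where $R$ is independent auxiliary randomness drawn from some distribution $\mu$. Fix any measurable event $E' \subseteq X'_i$. For each realization $r$ in the support of $\mu$, the preimage $F_r := \{y \in X_i : h_i(y, r) \in E'\}$ is a deterministic subset of $X_i$, so the $(\eps_i, \delta_i)$-DP guarantee of $\calA_i$ directly gives $\PPr{\calA_i(x) \in F_r} \le e^{\eps_i} \PPr{\calA_i(x') \in F_r} + \delta_i$. Integrating both sides against $\mu$ and invoking Fubini together with the independence of $R$ from $\calA_i$'s internal randomness yields the corresponding bound for $g_i(\calA_i(x))$ versus $g_i(\calA_i(x'))$, as desired.

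For composition, I would use the standard characterization that $(\eps, \delta)$-DP is essentially equivalent to the privacy loss $\calL_i(y; x, x') := \log\left( p_{\calA_i(x)}(y) / p_{\calA_i(x')}(y) \right)$ being at most $\eps_i$ except on an event of probability at most $\delta_i$ under $\calA_i(x)$. Assuming the $\calA_i$'s use mutually independent internal randomness, the joint density of $\calA_{[k]}(x)$ factors as a product, so the privacy loss of the composed mechanism equals $\sum_{i=1}^k \calL_i(y_i; x, x')$. A union bound over the $k$ ``bad'' events shows that, except with probability at most $\sum_i \delta_i$ under $\calA_{[k]}(x)$, each $\calL_i$ is at most $\eps_i$ and hence the total is at most $\sum_i \eps_i$; converting this back through the same characterization yields the claimed $\left(\sum \eps_i, \sum \delta_i\right)$-DP guarantee.

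The main technical obstacle is that the privacy-loss characterization is not literally a clean biconditional in the approximate-DP setting, so care is needed with measure-theoretic details such as densities being zero, log-ratios being infinite, or continuous output spaces (as arise with the Laplace mechanism). The standard route, following \cite{DworkR14}, passes through the approximate max-divergence $D_\infty^\delta$ and the decomposition of an $(\eps,\delta)$-DP output distribution as a mixture of an $(\eps,0)$-indistinguishable component and a ``failure'' component of total mass at most $\delta$; verifying these conversions is routine but requires attention to detail. An alternative that avoids these subtleties is an induction on $k$: the base case $k=1$ is immediate, and the inductive step reduces to the $k=2$ case, which can be handled directly by conditioning on the first output $y_1 \sim \calA_1(x)$, applying the DP guarantee of $\calA_2$ pointwise in $y_1$, and then applying the DP guarantee of $\calA_1$ to the resulting $[0,1]$-valued function via a layer-cake identity.
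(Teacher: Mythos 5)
The paper does not prove this statement; it cites it directly from \cite{DworkR14}, so there is no in-paper proof to compare against. That said, your proposal is essentially the standard textbook treatment and is mostly sound: the post-processing argument (averaging the deterministic preimage bound over the independent auxiliary randomness $R$) is correct, and you correctly flag that the privacy-loss characterization of $(\eps,\delta)$-DP is not a clean biconditional, so the ``union bound over bad events'' route needs the max-divergence/mixture machinery to avoid inflating the $\delta$'s.

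There is, however, a real gap in the induction alternative you offer as the way to ``avoid these subtleties.'' Taken literally, conditioning on $y_1$, applying $\calA_2$'s guarantee pointwise, and then applying $\calA_1$'s guarantee via layer-cake to $f(y_1):=\PPr{\calA_2(x')\in E_{y_1}}$ yields
\[
\PPr{\calA_{[2]}(x)\in E}\le e^{\eps_2}\left(e^{\eps_1}\,\E[y_1\sim\calA_1(x')]{f(y_1)}+\delta_1\right)+\delta_2
= e^{\eps_1+\eps_2}\PPr{\calA_{[2]}(x')\in E}+e^{\eps_2}\delta_1+\delta_2,
\]
which has $e^{\eps_2}\delta_1$ rather than $\delta_1$, so it does not literally establish $\left(\sum\eps_i,\sum\delta_i\right)$-DP. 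The fix is to peel off $\delta_2$ \emph{before} the layer-cake: set $h(y_1):=\max\left(0,\PPr{\calA_2(x)\in E_{y_1}}-\delta_2\right)$, observe that $h\in[0,1]$ and $h(y_1)\le e^{\eps_2}\PPr{\calA_2(x')\in E_{y_1}}$, apply the layer-cake identity to $h$ to get $\E[\calA_1(x)]{h}\le e^{\eps_1}\E[\calA_1(x')]{h}+\delta_1$, and then re-add $\delta_2$ at the end. This gives exactly $\delta_1+\delta_2$. Without this truncation, your inductive step only proves a weaker statement; with it, the sketch is complete and is the argument one finds in \cite{DworkR14}.
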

Although there exists more sophisticated approaches for composition, such as advanced composition, we do not need them for our purposes. 

\subsection{Streaming and Sketching Algorithms}
\seclab{sec:prelim:stream}
In the streaming model, a frequency vector $x\in\mathbb{R}^n$ is induced by a sequence of updates. 
In the insertion-only streaming model, $x$ is defined through a stream of $m$ updates $u_1,\ldots,u_m$, where $u_t\in[n]$ for each $t\in[m]$ so that $x_i=|\{t\in[m]\,|\,u_t=i\}|$ for all $i\in[n]$. 
In other words, $x_i$ is the number of times that $i\in[n]$ appears in the stream. 
We remark that our techniques generalize to some degree to turnstile streams, where each update is an ordered pair $u_t=(\Delta_t,c_t)$, so that the $t$-th update changes the $c_t$-th coordinate by $\Delta_t$, i.e., $c_t\in[n]$ is a coordinate and $\Delta_t\in[-M,M]$ for some parameter $M>0$. 
In this turnstile model, the vector $x$ is defined so that $x_i=\sum_{t:c_t=i}\Delta_t$ for all $i\in[n]$. 
Although our techniques can apply to the general turnstile model with a minor change on the conditions and assumptions, we shall work with the insertion-only streaming model throughout the remainder of the paper. 

Given a frequency vector $x\in\mathbb{R}^n$ on a data stream, the $\ams$ algorithm for $L_2$-estimation first generates a sign vector $\sigma\in\{-1,+1\}^n$ and sets $S_1=(\langle\sigma,x\rangle)^2$. 
We remark that to maintain $\sigma$ in small space, it suffices for the coordinates of the sign vector $\sigma$ to be $4$-wise independent and therefore it suffices to randomly generate and store a $4$-wise independent hash function. 
The $\ams$ algorithm then repeats this process $b=\frac{6}{\alpha^2}$ independent times to obtain dot products $S_1,\ldots,S_b$, sets $Z^2$ to be the arithmetic mean of $S_1,\ldots,S_b$, and reports $Z$.

We define the $L_2$ norm of a vector $x\in\mathbb{R}^n$ by $L_2(x)=\sqrt{x_1^2+\ldots+x_n^2}$. 


\begin{definition}[$\nu$-approximate $\eta$ $L_2$-heavy hitters problem]
Given an accuracy parameter $\nu\in(0,1)$, a threshold parameter $\eta$, and a frequency vector $x\in\mathbb{R}^n$, compute a set $H\subseteq[n]$ and a set of approximations $\widehat{x_k}$ for all $k\in H$ such that: 
\begin{enumerate}
\item
If $x_k\ge\eta L_2(x)$ for any $k\in[n]$, then $k\in H$, so that $H$ contains all $\eta$ $L_2$-heavy hitters of $x$. 
\item
There exists a universal constant $C\in(0,1)$ so that if $x_k\le\frac{C\eta}{2} L_2(x)$ for any $k\in[n]$, then $k\notin H$, so that $H$ does not contain any index that is not an $\frac{C\eta}{2}$ $L_2$-heavy hitter of $x$. 
\item
If $k\in H$ for any $k\in[n]$, then compute $(1\pm\nu)$-approximation to the frequency $x_k$, i.e., a value $\widehat{x_k}$ such that $(1-\nu)x_k\le\widehat{x_k}\le(1+\nu)x_k$. 
\end{enumerate}
\end{definition}

The well-known $\countsketch$ algorithm can be parametrized to provide an estimated frequency to each item and then releases the approximate frequencies of each item that surpasses a threshold proportional to the output of $\ams$: 
\begin{theorem}[CountSketch for $\nu$-approximate $\eta$ $L_2$-heavy hitters]
\cite{CharikarCF04}
\thmlab{thm:countsketch} 
There exists a one-pass streaming algorithm $\countsketch$ that takes an accuracy parameter $\nu\in(0,1)$ and a threshold parameter $\eta^2$ and outputs a list $H$ that contains all indices $k\in[n]$ of an underlying frequency vector $x$ with $x_k\ge\eta\,L_2(x)$ and no index $k\in[n]$ with $x_k\le\eta(1-\nu)\,L_2(x)$. 
For each $k\in H$, $\countsketch$ also reports a estimated frequency $\widehat{x_k}$ such that $(1-\nu)x_k\le\widehat{x_k}\le(1+\nu)x_k$. 
The algorithm uses $\O{\frac{1}{\eta^2\nu^2}\log^2 n}$ bits of space and succeeds with probability $1-\frac{1}{\poly(m)}$. 
\end{theorem}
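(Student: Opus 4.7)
The plan is to invoke the standard \countsketch{} construction of~\cite{CharikarCF04}. We maintain $T = \Theta(\log n)$ independent hash tables $C_1,\ldots,C_T$, each with $B = \Theta(1/(\eta^2 \nu^2))$ buckets. For each $t \in [T]$, we draw a pairwise independent hash function $h_t : [n] \to [B]$ and a 4-wise independent sign function $\sigma_t : [n] \to \{-1, +1\}$ (which each need only $\O{\log n}$ bits of seed). On every stream update to coordinate $u \in [n]$, we increment bucket $h_t(u)$ of table $C_t$ by $\sigma_t(u)$ for all $t$. To estimate coordinate $k$, we set $\widehat{x_k}^{(t)} = \sigma_t(k) \cdot C_t[h_t(k)]$ and output $\widehat{x_k} = \median_{t \in [T]} \widehat{x_k}^{(t)}$.

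First I would establish the per-table guarantee. By linearity and pairwise independence of $\sigma_t$, each $\widehat{x_k}^{(t)}$ has mean $x_k$; using 4-wise independence of $\sigma_t$ together with pairwise independence of $h_t$, the variance is at most $L_2(x)^2 / B$. Chebyshev's inequality then gives
\[
\PPr{\abs{\widehat{x_k}^{(t)} - x_k} > \tfrac{\nu \eta}{4} L_2(x)} \le \tfrac{1}{3},
\]
once $B = \Theta(1/(\eta^2 \nu^2))$. Amplifying with the median over $T = \Theta(\log n)$ tables (a Chernoff bound on the number of tables that fail) yields, for each fixed $k$, that $\abs{\widehat{x_k} - x_k} \le \tfrac{\nu \eta}{4} L_2(x)$ with probability $1 - 1/\poly(n)$; a union bound over $k \in [n]$ makes this hold simultaneously.

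Given this additive guarantee, the heavy-hitter identification step and the multiplicative accuracy are essentially immediate. If $x_k \ge \eta L_2(x)$, then $\widehat{x_k} \ge (1 - \nu/4) \eta L_2(x)$, so thresholding at $(1 - \nu/2)\eta L_2(x)$ retains $k$. Conversely, if $x_k \le \eta(1-\nu) L_2(x)$, then $\widehat{x_k} \le \eta(1 - 3\nu/4) L_2(x) < (1 - \nu/2)\eta L_2(x)$, so $k$ is pruned. For any $k$ in the output list, $x_k = \Omega(\eta L_2(x))$, so the additive error $\tfrac{\nu\eta}{4} L_2(x)$ translates into the required $(1 \pm \nu)$ multiplicative bound on $\widehat{x_k}$.

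For the space bound, each of the $T = \O{\log n}$ tables holds $B = \O{1/(\eta^2\nu^2)}$ counters, each of absolute value at most $m = \poly(n)$ and hence representable in $\O{\log n}$ bits; together with the $\O{\log n}$-bit hash seeds, this totals $\O{\frac{1}{\eta^2\nu^2} \log^2 n}$ bits. I expect the only nontrivial technical issue to be producing the list $H$ itself without scanning all $n$ coordinates; the standard remedy is to layer the sketch over a dyadic decomposition of $[n]$ so that heavy hitters can be identified in $\polylog(n)$ time per query, and this does not change the asymptotic space bound. Everything else reduces to variance computations and a Chernoff bound on the median, which are routine.
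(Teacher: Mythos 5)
Your proposal is correct and is the standard textbook analysis of $\countsketch$. The paper does not actually prove this theorem --- it is stated as a citation to \cite{CharikarCF04} with the pseudocode given in Algorithm 1 but no accuracy argument --- so there is no in-paper proof to compare against, but your construction (median over $\O{\log n}$ rows, each a sign-hashed table with $\O{1/(\eta^2\nu^2)}$ buckets, Chebyshev on each row, Chernoff on the median) is exactly the canonical argument that the citation points to, and it matches the algorithm as written. Two minor remarks: pairwise independence of the sign functions already gives the variance bound $L_2(x)^2/B$, so invoking $4$-wise independence for $\sigma_t$ is an unnecessary overclaim (though harmless); and your thresholding step at $(1-\nu/2)\eta L_2(x)$ implicitly requires a $(1\pm\O{\nu})$-estimate of $L_2(x)$ (e.g., from an $\ams$ sketch, as the paper itself notes in the sentence introducing this theorem), which fits inside the same $\O{\frac{1}{\eta^2\nu^2}\log^2 n}$-bit budget.
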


\begin{algorithm}[!htb]
\caption{Heavy-hitter algorithm $\countsketch$}
\alglab{alg:cs}
\begin{algorithmic}[1]
\Require{Stream $\frakS$ inducing frequency vector $x\in\mathbb{R}^n$, accuracy parameter $\nu\in(0,1)$, and threshold parameter $\eta\in(0,1)$}
\Ensure{$L_2$ Heavy-hitter algorithm}
\State{$r\gets\O{\log n}$, $b\gets\O{\frac{1}{\eta^2\nu^2}}$}
\State{Pick hash functions $h^{(1)},\ldots,h^{(r)}:[n]\to[b]$ and $s^{(1)},\ldots,s^{(r)}:[n]\to\{-1,+1\}$}
\State{$S_{i,j}\gets0$ for $(i,j)\in[r]\times[b]$}
\For{each update $u_i\in[n]$, $i\in[m]$}
\For{each $j\in[r]$}
\State{$b_{i,j}\gets h^{(j)}(u_i)$ and $s_{i,j}\gets s^{(j)}(u_i)$}
\State{$S_{j,b_{i,j}}\gets S_{j,b_{i,j}}+s_{i,j}$}
\EndFor
\EndFor
\For{each $i\in[n]$}
\State{$b_{i,j}\gets h^{(j)}(u_i)$ for each $j\in[r]$}
\State{\Return $\median_{j\in[r]}|S_{j,b_{i,j}}|$ as the estimated frequency for $x_i$}
\EndFor
\end{algorithmic}
\end{algorithm}
We provide the standard analysis of the $L_1$ sensitivity of $\countsketch$ through the following statements.
\begin{lemma}
\lemlab{lem:sens:ams:one}
Let $s\in\{-1,+1\}^n$ and $x,x'\in\mathbb{R}^n$ with $\max(\|x-x'\|_0,\|x-x'\|_1)\le 2$. 
Then $|\langle s,x\rangle-\langle s,x'\rangle|\le 2$.
\end{lemma}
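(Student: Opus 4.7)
The plan is to reduce the bound on $|\langle s,x\rangle - \langle s,x'\rangle|$ to a statement about the $L_1$ norm of the difference vector $d := x - x'$, and then use the sign-vector constraint $s \in \{-1,+1\}^n$ together with the triangle inequality.

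First, I would use linearity of the inner product to write $\langle s, x\rangle - \langle s, x'\rangle = \langle s, x - x'\rangle = \sum_{i=1}^n s_i d_i$, where $d = x - x'$. Next, applying the triangle inequality and then $|s_i| = 1$ for each coordinate, I get
\[
\left|\sum_{i=1}^n s_i d_i\right| \;\le\; \sum_{i=1}^n |s_i|\cdot |d_i| \;=\; \sum_{i=1}^n |d_i| \;=\; \|d\|_1 \;=\; \|x - x'\|_1.
\]
The hypothesis $\max(\|x-x'\|_0, \|x-x'\|_1) \le 2$ then immediately yields $\|x-x'\|_1 \le 2$, completing the bound.

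There is essentially no obstacle here; the $L_0$ condition is not even required for the argument since the $L_1$ bound alone suffices. I would note this observation in the proof for the reader, as the $L_0$ hypothesis is likely included because it arises naturally in the application (e.g., neighboring streams differing in one or two update positions), rather than because it is needed for this particular inequality. The whole proof should be just two or three lines of display math.
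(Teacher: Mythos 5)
Your proof is correct and follows essentially the same route as the paper: both pass through $\langle s,x\rangle - \langle s,x'\rangle = \langle s, x-x'\rangle$ and then bound the inner product by $\|x-x'\|_1 \cdot \|s\|_\infty = \|x-x'\|_1 \le 2$ (the paper states this as a H\"older-type bound, you expand it via the triangle inequality, but these are the same estimate). Your side observation that the $L_0$ hypothesis is unused is also accurate.
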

\begin{proof}
Note that we have $\langle s,x\rangle-\langle s,x'\rangle=\langle s,x-x'\rangle$. 
Since $\max(\|x-x'\|_0,\|x-x'\|_1)\le 2$, then $|\langle s,x-x'\rangle|\le2\|s\|_\infty=2$. 
\end{proof}


\begin{lemma}[Sensitivity of median]
\lemlab{lem:sens:ams:three}
Let $x,y\in\mathbb{R}^n$ so that $\|x-y\|_\infty\le C$ for some constant $C\ge 0$. 
Then 
\[\left|\median(x_1,\ldots,x_n)-\median(y_1,\ldots,y_n)\right|\le C.\]
\end{lemma}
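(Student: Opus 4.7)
The plan is to show the stronger statement that sorting preserves the coordinate-wise $\ell_\infty$ bound, and then invoke this at the median index. Specifically, let $x_{(1)} \le x_{(2)} \le \cdots \le x_{(n)}$ and $y_{(1)} \le y_{(2)} \le \cdots \le y_{(n)}$ denote the order statistics of $x$ and $y$ respectively. I would prove that for every $k \in [n]$,
\[
|x_{(k)} - y_{(k)}| \le C,
\]
from which the lemma follows immediately by taking $k$ to be the median index (e.g., $k = \lceil n/2 \rceil$ under the convention used by the paper).

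To prove the order-statistic inequality, I would fix $k$ and argue each direction symmetrically. Let $I \subseteq [n]$ be the set of $k$ indices $i$ for which $x_i \le x_{(k)}$. For every $i \in I$, the hypothesis $\|x - y\|_\infty \le C$ gives $y_i \le x_i + C \le x_{(k)} + C$. Thus there are at least $k$ indices whose $y$-value is at most $x_{(k)} + C$, which by definition of the $k$-th order statistic forces $y_{(k)} \le x_{(k)} + C$. Swapping the roles of $x$ and $y$ yields $x_{(k)} \le y_{(k)} + C$, and together these give $|x_{(k)} - y_{(k)}| \le C$.

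I don't anticipate any serious obstacle here; the only mild subtlety is handling ties and the convention for the median when $n$ is even, but since both medians are selected by the same fixed rule (an order-statistic index, or an average of two adjacent order statistics), the bound transfers directly. For the even case, if the median is defined as $\tfrac{1}{2}(x_{(n/2)} + x_{(n/2+1)})$, applying the triangle inequality to the two order-statistic bounds preserves the $C$ bound. Thus the entire proof reduces to the one-line sorting argument above.
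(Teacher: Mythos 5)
Your proof is correct, and the counting step at its core is the same one the paper uses: if $k$ coordinates of $x$ are at most $v$, then the $\ell_\infty$ bound forces those same $k$ coordinates of $y$ to be at most $v+C$, which controls $y_{(k)}$. The packaging differs in a way worth noting. You first establish the general order-statistic bound $|x_{(k)}-y_{(k)}|\le C$ for every $k$ and then specialize to the median index, whereas the paper argues directly about the median and splits into separate even-$n$ and odd-$n$ cases. Your route is cleaner and more robust to the choice of median convention: the lower middle order statistic, the upper one, or the average of the two all follow immediately from the uniform order-statistic bound (via the triangle inequality in the last case), and you avoid the paper's case analysis, which is a little loose about exactly which order statistic is designated as the median when $n$ is even. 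One small cleanup: you should say ``let $I$ be \emph{a} set of $k$ indices with $x_i \le x_{(k)}$'' rather than ``the set,'' since ties can make the set of all such indices strictly larger than $k$; this wording fix does not affect the argument.
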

\begin{proof}
Without loss of generality, suppose $x_1\le\ldots\le x_n$ and first suppose that $n$ is even. 
Then there exist at least $\frac{n}{2}$ indices $i$ such that $x_i\le\median(x_1,\ldots,x_n)$ and thus at least $\frac{n}{2}$ indices $i$ with $y_i\le\median(x_1,\ldots,x_n)+C$. 
Hence, $\median(y_1,\ldots,y_n)\le\median(x_1,\ldots,x_n)+C$. 
Similarly, there exist at least $\frac{n}{2}$ indices $i$ such that $y_i\ge\median(x_1,\ldots,x_n)-C$ and thus, $\median(y_1,\ldots,y_n)\ge\median(x_1,\ldots,x_n)-C$. 
Therefore,
\[\left|\median(x_1,\ldots,x_n)-\median(y_1,\ldots,y_n)\right|\le C,\]
for even $n$.  

For odd $n$, there exist at least $\frac{n+1}{2}$ indices $i$ with $x_i\le\median(x_1,\ldots,x_n)$ and at least $\frac{n+1}{2}$ indices $i$ with $x_i\ge\median(x_1,\ldots,x_n)$, and so there are at least $\frac{n+1}{2}$ indices $i$ with $y_i\le\median(x_1,\ldots,x_n)+C$ and we have $\median(y_1,\ldots,y_n)\le\median(x_1,\ldots,x_n)+C$. 
Similarly, there exist at least $\frac{n+1}{2}$ indices $i$ such that $y_i\ge\median(x_1,\ldots,x_n)-C$, so that
\[\left|\median(x_1,\ldots,x_n)-\median(y_1,\ldots,y_n)\right|\le C,\]
for odd $n$ as well. 
\end{proof}


\begin{lemma}[Sensitivity of CountSketch]
\lemlab{lem:sens:countsketch}
Let $x,x'\in\mathbb{R}^n$ with $\max(\|x-x'\|_0,\|x-x'\|_1)\le 2$. 
Then we have $|\widehat{x_k}-\widehat{x'_k}|\le 2$ for each estimate of the frequency $x_k,x'_k$ of the $k$-th coordinate, $k\in[n]$ by $\countsketch$ with fixed internal randomness.
\end{lemma}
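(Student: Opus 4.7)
The plan is to decompose the $\countsketch$ estimate into three layers and propagate an $L_\infty$-sensitivity bound of $2$ through each. Recall that for a fixed coordinate $k$, the algorithm returns $\widehat{x_k}=\median_{j\in[r]}|S_{j,h^{(j)}(k)}|$, where the bucket value $S_{j,h^{(j)}(k)}$ equals the inner product of $x$ with the $\pm 1/0$ vector $\sigma^{(j,k)}\in\{-1,0,+1\}^n$ defined by $\sigma^{(j,k)}_i=s^{(j)}(i)\cdot\mathbb{1}[h^{(j)}(i)=h^{(j)}(k)]$. Crucially, since the hash functions and sign functions are part of the algorithm's internal randomness, which is held fixed, the vector $\sigma^{(j,k)}$ is the same for both inputs $x$ and $x'$.

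First I would bound the per-bucket sensitivity: for each $j\in[r]$ and each $k\in[n]$, since $\|\sigma^{(j,k)}\|_\infty\le 1$ and $\|x-x'\|_1\le 2$, H\"{o}lder's inequality (equivalently, the argument used in \lemref{lem:sens:ams:one}, which only relies on $\|s\|_\infty\le 1$) yields
\[\left|S_{j,h^{(j)}(k)}(x)-S_{j,h^{(j)}(k)}(x')\right|=\left|\langle\sigma^{(j,k)},x-x'\rangle\right|\le 2.\]

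Next I would push the bound through the absolute value using the reverse triangle inequality, $||a|-|b||\le|a-b|$, to obtain
\[\left||S_{j,h^{(j)}(k)}(x)|-|S_{j,h^{(j)}(k)}(x')|\right|\le 2\]
for every $j\in[r]$. Finally I would apply \lemref{lem:sens:ams:three} to the two length-$r$ vectors of coordinate-wise absolute bucket values, whose $L_\infty$ distance is at most $2$, and conclude
\[\left|\widehat{x_k}-\widehat{x'_k}\right|=\left|\median_{j\in[r]}|S_{j,h^{(j)}(k)}(x)|-\median_{j\in[r]}|S_{j,h^{(j)}(k)}(x')|\right|\le 2.\]

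I do not anticipate any real obstacle: each step is a direct invocation of a preceding lemma or the $1$-Lipschitz property of $|\cdot|$. The only subtlety worth flagging is that the argument critically uses that the internal randomness (the $h^{(j)}$ and $s^{(j)}$) is fixed, so that the linear functional defining each bucket is identical on the neighboring inputs; otherwise the decomposition into sign-vector inner products would not go through.
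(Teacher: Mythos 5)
Your proposal is correct and takes essentially the same route as the paper: decompose the estimate into bucket inner products (bounded via the sign-vector sensitivity argument of \lemref{lem:sens:ams:one}), then push the bound through the median via \lemref{lem:sens:ams:three}. The only substantive difference is that you explicitly handle the absolute-value layer with the reverse triangle inequality, a step the paper's proof leaves implicit, and you phrase the bucket functional as a $\{-1,0,+1\}$-valued vector rather than restricting $x,x'$ to a bucket---both are cosmetic variations on the same argument.
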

\begin{proof}
Let $b=\O{\frac{1}{\alpha^2}}$ with a sufficiently large constant. 
Then $\countsketch$ first generates $r=\O{\log n}$ hash functions $h^{(1)},\ldots,h^{(r)}:[n]\to[b]$. 
Fix $i\in[r]$ and $j\in[b]$ and consider the subset $S_{i,j}\subseteq[n]$ defined by $S_{i,j}:=\{x\in[n]\,:\, h^{(i)}(x)=j\}$. 
Define $g$ and $g'$ as the restriction of the vectors $x$ and $x'$ to the coordinates of $S_{i,j}$ and $s^{(i,j)}$ as the sign vector $s^{(i)}$ restricted to the coordinates of $S_{i,j}$. 

For each $i\in[r]$ and $k\in[n]$, let $\widehat{x^{(i)}_k}$ be the estimate of $x_k$ by $h^{(i)}$ and $\widehat{\overline{x}^{(i)}_k}$ be the estimate of $x'_k$ by $h^{(i)}$. 
Since $\max(\|g-g'\|_0,\|g-g'\|_1)\le 2$ and $s^{(i,j)}$ is a sign vector, then by \lemref{lem:sens:ams:one}, 
\[|\langle s^{(i,j)},g\rangle-\langle s^{(i,j)},g'\rangle|\le 2\]
and thus
\[|\widehat{x^{(i)}_k}-\widehat{\overline{x}^{(i)}_k}|\le 2\]
for all $i\in[r]$. 
By \lemref{lem:sens:ams:three}, 
\[\left|\median(\widehat{x^{(1)}_k},\ldots,\widehat{x^{(r)}_k})-\median(\widehat{\overline{x}^{(1)}_k},\ldots,\widehat{\overline{x}^{(r)}_k})\right|\le 2.\]
Because $\countsketch$ outputs the median of the estimated frequencies for each coordinates $k\in[n]$, then it follows that $|\widehat{x_k}-\widehat{x'_k}|\le 2$ for the estimates $\widehat{x_k},\widehat{x'_k}$ output by $\countsketch$ with fixed internal randomness on the vectors $x$ and $x'$, for all $k\in[n]$. 
\end{proof}
We can thus use a private variant $\privcountsketch$ of $\countsketch$ by adding noise to each coordinate and then using a standard private threshold routine to ensure differential privacy. 
Specifically, $\privcountsketch$ first uses the $\countsketch$ data structure to obtain an estimated frequency for each coordinate. 
It then adds Laplacian noise with scale parameter $\O{\frac{1}{\eta^2\nu^2}}$ to each estimated frequency, since by \lemref{lem:sens:countsketch}, the sensitivity of the estimated frequency for each coordinate by $\countsketch$ is at most $2$. 
It then acquires a threshold $T$ from a private $L_2$ norm estimation algorithm, e.g., $\ams$ with Laplacian noise, and releases all coordinates (and estimated frequencies) whose estimated frequencies are at least $\frac{\nu\eta T}{2}+X$, where $X$ is Laplacian noise with scale parameter $\O{\frac{1}{\eta^2\nu^2}}$. 
Then $\privcountsketch$ gives the following guarantees:
\begin{lemma}
\lemlab{lem:priv:cs}
There exists a one-pass streaming algorithm $\privcountsketch$ that takes an accuracy parameter $\nu\in(0,1)$ and a threshold parameter $\eta^2$ and outputs a list $H$ that contains all indices $k\in[n]$ of an underlying frequency vector $x$ with $x_k\ge\eta\,L_2(x)$ and no index $k\in[n]$ with $x_k\le\eta(1-\nu)\,L_2(x)$. 
For each $k\in H$, $\privcountsketch$ also reports a estimated frequency $\widehat{x_k}$ such that $(1-\nu)x_k-\O{\frac{\log m}{\eta\nu}}\le\widehat{x_k}\le(1+\nu)x_k+\O{\frac{\log m}{\eta\nu}}$. 
The algorithm uses $\O{\frac{1}{\eta^2\nu^2}\log^2 n}$ bits of space and succeeds with probability $1-\frac{1}{\poly(m)}$.
\end{lemma}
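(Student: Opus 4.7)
The plan is to instantiate $\privcountsketch$ exactly as described in the paragraph preceding the lemma statement: run an internal copy of $\countsketch$ (\thmref{thm:countsketch}) with accuracy parameter $\nu' = \Theta(\nu)$ and threshold $\eta^2$; add independent Laplace noise $Y_k \sim \Lap(O(1/\eps))$ to each estimated frequency (whose sensitivity is bounded by $2$ via \lemref{lem:sens:countsketch}); obtain a noisy $L_2$ threshold $T$ using $\ams$ combined with Laplace noise; and include $k$ in $H$ exactly when the noisy estimate exceeds $\frac{\nu\eta T}{2} + X$, where $X$ is an additional Laplace shift used for the private threshold comparison. The space bound is inherited immediately from \thmref{thm:countsketch}, since the Laplace variables and the $\ams$ sketch need only $\polylog(n)$ additional bits.

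For the utility analysis, I would proceed in three steps. First, \thmref{thm:countsketch} guarantees that with probability $1 - 1/\poly(m)$ every internal $\countsketch$ estimate $\widetilde{x_k}$ satisfies $(1-\nu')x_k \le \widetilde{x_k} \le (1+\nu')x_k$ whenever $k$ is returned by the non-private routine, and the list correctly contains every $k$ with $x_k \ge \eta L_2(x)$ while excluding every $k$ with $x_k \le \eta(1-\nu') L_2(x)$. Second, since $Y \sim \Lap(s)$ satisfies $\PPr{|Y| \ge s \cdot t} \le e^{-t}$, a union bound over the $O(1/(\eta^2\nu^2))$ candidate heavy hitters shows that with probability $1 - 1/\poly(m)$ every added Laplace noise term, as well as the noise in the $\ams$-based threshold $T$, is bounded in absolute value by $O(\log m / (\eta \nu))$. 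Third, conditioning on these two events, the noisy estimate of any true $\eta$ $L_2$-heavy hitter exceeds the (noisy) threshold while the noisy estimate of any index with $x_k \le \eta(1-\nu)L_2(x)$ lies below it, so the detection guarantees of the lemma hold; and the reported estimate $\widehat{x_k} = \widetilde{x_k} + Y_k$ satisfies
\[(1-\nu)x_k - O\!\left(\tfrac{\log m}{\eta\nu}\right) \;\le\; \widehat{x_k} \;\le\; (1+\nu)x_k + O\!\left(\tfrac{\log m}{\eta\nu}\right),\]
which is the claimed error bound.

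The main obstacle is calibrating the various approximation parameters so that the separation between heavy hitters and non-heavy hitters survives all three simultaneous sources of error: the $(1\pm\nu')$ multiplicative slack from $\countsketch$, the $O(\log m/(\eta\nu))$ additive Laplace noise on each estimated frequency, and the additive noise on the threshold $T$ obtained from the private $L_2$ estimator. This requires picking $\nu'$ as a small constant fraction of $\nu$ so that all slacks aggregate to at most a $\nu$-multiplicative deviation in the heavy-hitter detection criterion, while keeping the space overhead within the stated $\O{\frac{1}{\eta^2\nu^2}\log^2 n}$ budget. Once these parameters are tuned, the privacy of the released set $H$ and associated frequencies follows directly from the Laplace mechanism combined with the post-processing and composition properties from \thmref{thm:dp:comp}, and the remaining steps are routine tail bounds.
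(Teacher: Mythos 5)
Your construction follows the same blueprint the paper sketches just before the lemma (CountSketch, per-coordinate Laplace noise, private $L_2$ threshold from $\ams$, private thresholding), and the high-level utility argument via union bound over the $\O{1/(\eta^2\nu^2)}$ candidate heavy hitters is the right shape. However, there is a concrete mismatch between the noise scale you pick and the additive error you then claim.

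You add Laplace noise $Y_k\sim\Lap(\O{1/\eps})$ to each estimated frequency and then assert that, with probability $1-1/\poly(m)$, every $|Y_k|$ is at most $\O{\log m/(\eta\nu)}$. The Laplace tail bound $\PPr{|Y|\ge s\cdot t}\le e^{-t}$ with $t=\Theta(\log m)$ gives $|Y_k|=\O{(\log m)/\eps}$, not $\O{(\log m)/(\eta\nu)}$. These coincide only under an unstated assumption like $\eps=\Omega(\eta\nu)$. Notice also that the lemma's statement contains no $\eps$ at all, which strongly suggests the noise scale inside $\privcountsketch$ should be calibrated in terms of $\eta$ and $\nu$ rather than the privacy parameter; for the additive error $\O{\log m/(\eta\nu)}$ to come out, the scale should be $\Lap(\O{1/(\eta\nu)})$. (For what it's worth, the paper's own prose before the lemma says the scale is $\O{1/(\eta^2\nu^2)}$, which would actually yield the larger error $\O{\log m/(\eta^2\nu^2)}$, so the paper is also not perfectly consistent here --- but your $\O{1/\eps}$ choice is neither what the paper describes nor what makes the claimed bound follow.) To close the gap you need to (i) state the noise scale explicitly as a function of $\eta,\nu$ so the $\O{\log m/(\eta\nu)}$ bound is derived rather than asserted, and (ii) then verify, with that scale, that the private thresholding step still separates indices with $x_k\ge\eta L_2(x)$ from those with $x_k\le\eta(1-\nu)L_2(x)$ after accounting for the Laplace shift on the threshold and on each estimate; this requires a small constant-factor bookkeeping argument that your writeup currently waves at but does not carry out.
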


\subsection{Symmetric Norms}
\seclab{sec:prelims:sym}
In this section, we provide necessary preliminaries for symmetric norm estimation. 
For additional intuition, see \appref{app:intuition}. 

\begin{definition}[Symmetric norm]
A function $L:\mathbb{R}^n\to\mathbb{R}$ is a \emph{symmetric norm} if $L$ is a norm and for all $x\in\mathbb{R}^n$ and any vector $y\in\mathbb{R}^n$ that is a permutation of the coordinates of $x$, we have $L(x)=L(y)$. 
Moreover, we have $L(x)=L(|x|)$, where $|x|$ is the coordinate-wise absolute value of $x$. 
\end{definition}

\begin{definition}[Modulus of concentration]
Let $x\in\mathbb{R}^n$ be a random variable drawn from the uniform distribution on the $L_2$-unit sphere $S^{n-1}$ and let $b_L$ denote the maximum value of $L(x)$ over $S^{n-1}$. 
The median of a symmetric norm $L$ is the unique value $M_L$ such that $\PPr{L(x)\ge M_L}\ge\frac{1}{2}$ and $\PPr{L(x)\le M_L}\ge\frac{1}{2}$. 
Then the ratio $\mc(L):=\frac{b_L}{M_L}$ is the \emph{modulus of concentration} of the norm $L$. 
\end{definition}
Although the modulus of concentration quantifies the ``average'' behavior of the norm $L$ on $\mathbb{R}^n$, norms with challenging behavior can still be embedded in lower-dimensional subspaces. 
For instance, the $L_1$ norm satisfies $\mc(L)=\O{1}$, but when $x\in\mathbb{R}^n$ has fewer than $\sqrt{n}$ nonzero coordinates, the norm $\max(L_{\infty}(x),L_1(x)/\sqrt{n})$ on the unit ball becomes identically $L_{\infty}(x)$~\cite{BlasiokBCKY17}, which requires $\Omega(\sqrt{n})$ space~\cite{AlonMS99} to estimate. 
Hence, we further quantify the behavior of a norm $L$ by examining its behavior on all lower dimensions. 
\begin{definition}[Maximum modulus of concentration]
\deflab{def:mmc}
For a norm $L:\mathbb{R}^n\to\mathbb{R}$ and every $k\le n$, define the norm $L^{(k)}:\mathbb{R}^k\to\mathbb{R}$ by $L^{(k)}((x_1,\ldots,x_k)):=L((x_1,\ldots,x_k,0,\ldots,0))$.
Then the \emph{maximum modulus of concentration} of the norm $L$ is $\mmc(L):=\underset{k\le n}{\max}\mc(L^{(k)})=\underset{k\le n}{\max}\frac{b_{L^{(k)}}}{M_{L^{(k)}}}$.
\end{definition}

\begin{definition}[Important Levels]
\deflab{def:important:level}
For $x\in\mathbb{R}^n$ and $\xi>1$, we define the level $i$ as the set $B_i=\{k\in[n]\,:\,\xi^{i-1}\le|x_k|\le\xi^i\}$. 
We define $b_i:=|B_i|$ as the size of level $i$. 
For $\beta\in(0,1]$, we say level $i$ is \emph{$\beta$-important} if
\[b_i>\beta\sum_{j>i}b_j,\qquad b_i\xi^{2i}\ge\beta\sum_{j\le i}b_j\xi^{2j}.\]
\end{definition}
Informally, level $i$ is $\beta$-important if (1) its size is at least a $\beta$-fraction of the total sizes of the higher levels and (2) its contribution is roughly a $\beta$-fraction of the total contribution of all the lower levels. 
We would like to show that to approximate a symmetric norm $L(x)$, it suffices to identify the $\beta$-important levels and their sizes for a fixed base $\xi>1$. 
\begin{definition}[Level Vectors and Buckets]
\deflab{def:level:vector}
For $x\in\mathbb{R}^n$ and $\xi>1$, the \emph{level vector} for $x$ is
\begin{align*}
V(x):=&(\underbrace{\xi^1,\ldots,\xi^1}_{b_1\text{ times}},\underbrace{\xi^2,\ldots,\xi^2}_{b_2\text{ times}},\ldots,\underbrace{\xi^k,\ldots,\xi^k}_{b_k\text{ times}},0,\ldots,0)\in\mathbb{R}^n,
\end{align*}
where each $b_i$ is the size of level $i$. 
The $i$-th bucket of $V(x)$ is
\begin{align*}
V_i(x):=&(\underbrace{0,\ldots,0,}_{b_1+\ldots+b_{i-1}\text{ times}}\underbrace{\xi^i,\ldots,\xi^i}_{b_i\text{ times}},\ldots,\underbrace{0,\ldots,0}_{b_{i+1}+\ldots+b_k\text{ times}},0,\ldots,0)\in\mathbb{R}^n.
\end{align*}
We similarly define the approximate level vectors $\widehat{V(x)}$ and $\widehat{V_i(x)}$ using approximations $\widehat{b_1},\ldots,\widehat{b_k}$ for $b_1,\ldots,b_k$. 
We write $V(x)\setminus V_i(x)$ to denote the vector that replaces the $i$-th bucket in $V(x)$ with all zeros and we write $V(x)\setminus V_i(x)\cup\widehat{V_i(x)}$ to denote the vector that replaces the $i$-th bucket in $V(x)$ with $\widehat{b_i}$ instances of $\xi^i$. 
\end{definition}
Rather than directly handle the important levels, we define the $\beta$-contributing levels and instead work toward estimating the contribution of the $\beta$-contributing levels. 
\begin{definition}[Contributing Levels]
\deflab{def:contributing:level}
Given $x\in\mathbb{R}^n$, a level $i$ defined by base $\xi>1$ is \emph{$\beta$-contributing} if $L(V_i(x))\ge\beta L(V(x))$.  
\end{definition}
\cite{BlasiokBCKY17} showed that even if all levels that are not $\beta$-contributing are removed, the contribution of the remaining levels forms a good approximation to $L(x)$. 
\begin{lemma}\cite{BlasiokBCKY17}
Given $x\in\mathbb{R}^n$ and levels defined by a base $\xi>1$, let $V'(x)$ be the vector obtained by removing all levels that are not $\beta$-contributing from $V(x)$. 
Then $(1-\O{\log_{\xi}n}\cdot\beta)L(V(x))\le L(V'(x))\le L(V(x))$. 
\end{lemma}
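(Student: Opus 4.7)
The statement decomposes cleanly into an upper bound and a lower bound, and my plan is to handle each via a standard structural fact about symmetric norms together with the triangle inequality. First, I note that the total number of nonzero buckets $V_i(x)$ is bounded by $O(\log_{\xi} n)$: each coordinate of $x$ has magnitude at most $\poly(n)$ (the stream has length $\poly(n)$), so the admissible range of $i$ with $b_i > 0$ spans at most $O(\log_{\xi} n)$ consecutive integers. Consequently, there are at most $O(\log_{\xi} n)$ levels that fail to be $\beta$-contributing.

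\textbf{Upper bound.} Since $V'(x)$ is obtained from $V(x)$ by zeroing out the coordinates in the non-contributing buckets, the vector $V'(x)$ is coordinate-wise dominated in absolute value by $V(x)$. Because $L$ is a symmetric norm, it is monotone under coordinate-wise domination of absolute values (a standard consequence of symmetry plus the triangle inequality applied to sign-flip averaging). This immediately yields $L(V'(x)) \le L(V(x))$.

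\textbf{Lower bound.} Write
\[
V(x) \;=\; V'(x) \;+\; \sum_{i \text{ not } \beta\text{-contributing}} V_i(x).
\]
By the triangle inequality for $L$,
\[
L(V(x)) \;\le\; L(V'(x)) \;+\; \sum_{i \text{ not } \beta\text{-contributing}} L(V_i(x)).
\]
For each non-contributing level, $\defref{def:contributing:level}$ gives $L(V_i(x)) < \beta\, L(V(x))$, and by the count above there are at most $O(\log_{\xi} n)$ such levels. Summing yields
\[
L(V(x)) \;\le\; L(V'(x)) \;+\; O(\log_{\xi} n)\cdot \beta\, L(V(x)),
\]
and rearranging gives $L(V'(x)) \ge \bigl(1 - O(\log_{\xi} n)\cdot \beta\bigr) L(V(x))$, as desired.

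\textbf{Main obstacle.} The only nontrivial ingredient is the monotonicity of symmetric norms under zeroing out coordinates, which I would invoke as a known fact; otherwise the argument is a routine triangle-inequality decomposition, and the log-factor loss is exactly the price of a union bound over the $O(\log_{\xi} n)$ levels. Everything else is bookkeeping about how many levels can possibly be nonempty given the polynomial bound on stream length.
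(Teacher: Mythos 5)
Your proof is correct and is essentially the standard argument from \cite{BlasiokBCKY17}, which the paper cites without reproducing. The two ingredients — (i) monotonicity of symmetric norms under coordinate-wise domination of absolute values, which you correctly derive from convexity plus sign-flip symmetry, and (ii) the triangle-inequality decomposition over the at most $O(\log_\xi m) = O(\log_\xi n)$ nonempty buckets (using the $m = \poly(n)$ bound and the fact that non-contributing levels satisfy $L(V_i(x)) < \beta L(V(x))$ by \defref{def:contributing:level}) — are exactly what the original proof uses. Nothing to flag.
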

Hence for appropriate $\xi>1$ and $\beta\in(0,1]$, it suffices to identify the $\beta$-contributing levels, zero out the remaining levels, and determine the contribution of the resulting vector to approximate the symmetric norm $L(x)$. 
\begin{lemma}
\lemlab{lem:reconstruction}
\cite{BlasiokBCKY17}
Given an accuracy parameter $\alpha\in(0,1]$, let base $\xi=(1+\O{\alpha})$, importance parameter $\beta=\O{\frac{\alpha^5}{\mmc(\ell)^2\cdot\log^5 m}}$, and $\alpha'=\O{\frac{\alpha^2}{\log n}}$. 
Let $\widehat{b_i}\le b_i$ for all $i$ and $\widehat{b_i}\ge(1-\alpha')b_i$ for all $\beta$-important levels. 
Let $\widehat{V}$ be the level vector constructed using the estimates $\widehat{b_1},\widehat{b_2},\ldots$ and let $V'$ be the level vector constructed by removing all the buckets that are not $\beta$-contributing in $\widehat{V}$. 
Then $(1-\alpha)L(V(x))\le L(V'(x))\le L(V(x))$. 
\end{lemma}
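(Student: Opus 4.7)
The plan is to prove the upper and lower bounds separately. For the upper bound $L(V'(x)) \le L(V(x))$, I would use symmetric-norm monotonicity: since $\widehat{b_i} \le b_i$ at every level, the sorted coordinate vector of $\widehat{V}$ is entrywise dominated by that of $V(x)$, so $L(\widehat{V}) \le L(V(x))$, and zeroing any subset of buckets to form $V'$ can only further decrease the norm. For the lower bound $L(V'(x)) \ge (1-\alpha) L(V(x))$, I would use a three-step hybrid argument: pass from $V(x)$ to a vector $V^{(1)}$ supported only on $\beta$-important levels, then from $V^{(1)}$ to $V^{(2)}$ by replacing $b_i$ with $\widehat{b_i}$ on those levels, and then from $V^{(2)}$ to $V'$ by dropping the non-$\beta$-contributing buckets. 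Each transition should cost at most a $(1-\alpha/3)$ factor, which combined yields the claimed bound after adjusting constants.

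The third transition is handled by the preceding lemma applied to $\widehat{V}$ in place of $V(x)$, provided that the non-important buckets of $\widehat{V}$ are themselves not $\beta$-contributing (so that discarding them coincides with the operation in the preceding lemma). The second transition follows from $L(V^{(1)}) \le L(V^{(2)}) + L(V^{(1)} - V^{(2)})$ via the triangle inequality, together with the observation that the difference vector has at most $\alpha' b_i$ copies of $\xi^i$ at each important level; by symmetric-norm monotonicity this is dominated by an $\alpha'$-scaled copy of $V^{(1)}$, and the choice $\alpha' = O(\alpha^2 / \log n)$ keeps the loss within $\alpha/3$.

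The crux is the first transition, bounding $L(V(x)) - L(V^{(1)})$. A level fails to be $\beta$-important for one of two reasons. If $b_i \le \beta \sum_{j > i} b_j$, then level $i$ has few elements relative to larger levels, so its contribution to $L$ (bounded above by $L(V_i(x))$ via the triangle inequality) is small compared to the bulk; a symmetric-norm Lipschitz argument then bounds this by $O(\beta) \, L(V(x))$. If instead $b_i \xi^{2i} < \beta \sum_{j \le i} b_j \xi^{2j}$, then the $L_2$ mass of level $i$ is at most $\sqrt{\beta}$ times that of the lower levels, and I would apply the $\mmc$-bound on the subspace spanned by coordinates at level at most $i$ to show that $L(V_i(x))$ is at most $O(\mmc \cdot \sqrt{\beta})$ times the norm on that subspace. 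Summing over the $O(\log_\xi m) = O(\log m / \alpha)$ levels and substituting $\beta = O(\alpha^5 / (\mmc^2 \log^5 m))$ makes the total loss at most $\alpha/3$.

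The main obstacle will be making this $\mmc$-based per-level bound rigorous: since $\mmc$ is defined as a ratio on an $L_2$-unit sphere, translating it into a contribution bound for a specific $V_i(x)$ requires identifying the correct restricted subspace, transferring bounds from the sphere to the actual level vectors via the symmetric-norm property, and summing the per-level losses in such a way that the total remains below $\alpha/3$. The calibrations $\xi = 1 + O(\alpha)$, $\beta = O(\alpha^5 / (\mmc^2 \log^5 m))$, and $\alpha' = O(\alpha^2 / \log n)$ are precisely what is needed to keep each hybrid step within an $\alpha/3$ fraction of $L(V(x))$, after which the three losses compose to yield the stated $(1-\alpha)$ bound.
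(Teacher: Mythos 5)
The paper does not supply its own proof of this lemma --- it is cited directly from \cite{BlasiokBCKY17} --- so the only thing to evaluate is whether your blind attempt is internally sound, and it has a concrete gap in the second transition. You claim that the difference $V^{(1)}-V^{(2)}$, which has at most $\alpha' b_i$ copies of $\xi^i$ at each important level, is dominated in the sense required for symmetric-norm monotonicity by $\alpha' V^{(1)}$, which has $b_i$ copies of $\alpha'\xi^i$. These two vectors are incomparable after sorting: for a single level with $b=4$ copies of $1$ and $\alpha'=1/4$, the difference is a single coordinate of value $1$, while $\alpha'V^{(1)}=(1/4,1/4,1/4,1/4)$; the $L_2$ norms are $1>1/2$ and the $L_\infty$ norms are $1$ versus $1/4$, so $L(V^{(1)}-V^{(2)})\le\alpha'L(V^{(1)})$ is false. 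The correct route is an averaging argument: with $k=\lfloor 1/\alpha'\rfloor$, partition the $b_i$ slots at each important level into $k$ near-equal groups, let $W_j$ denote $V^{(1)}$ with the $j$-th group deleted at every level, use $\sum_j W_j=(k-1)V^{(1)}$ together with $L(W_j)=L(W_1)$ by symmetry and the triangle inequality to obtain $L(W_1)\ge(1-1/k)L(V^{(1)})$, and finally note that $V^{(2)}$ deletes at most $\alpha'b_i\le b_i/k$ coordinates per level so $L(V^{(2)})\ge L(W_1)$ by monotonicity. This gives the $(1-O(\alpha'))$ factor you wanted, but not through the dominance you invoked.

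A secondary issue is that your first transition re-derives material the paper already records. The step that $b_i\le\beta\sum_{j>i}b_j$ forces $L(V_i)\le O(\beta)L(V)$ by a ``symmetric-norm Lipschitz argument'' is false without bringing in $\mmc$: for $L_\infty$ one has $L_\infty(V_i)=\xi^i$, and with $\xi=1+O(\alpha)$ this is within a $1+O(\alpha)$ factor of $L_\infty(V)$ regardless of how small $b_i$ is; the reason $L_\infty$ is still fine is that the intermediate bound $L(V)-L(V\setminus V_i)\le L(V_i)$ is extremely wasteful for $L_\infty$, a slack your per-level argument cannot see. The cleaner path is to chain the unnamed lemma preceding \lemref{lem:reconstruction} with the paper's observation after \lemref{lem:contribute:weight} that $\beta$-contributing implies important, so every non-important level is automatically non-contributing; the $\mmc^2$ factor in your $\beta$ is spent aligning the importance threshold with the one produced by \lemref{lem:contribute:num} and \lemref{lem:contribute:weight}, not on a fresh per-level estimate.
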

To identify the $\beta$-contributing levels, \cite{BlasiokBCKY17} first notes that the size of the level must be at least a significant fraction of the total size of the higher levels. 
\begin{lemma}
\lemlab{lem:contribute:num}
\cite{BlasiokBCKY17}
Given $x\in\mathbb{R}^n$, let the level sets be defined by a base $\xi>1$. 
If level $i$ is $\beta$-contributing, then there exists some fixed constant $\lambda>0$ such that 
\[b_i\ge\frac{\lambda\beta^2}{\mmc(\ell)^2\log^2 n}\cdot\sum_{j>i}b_j.\] 
\end{lemma}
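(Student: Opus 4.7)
The plan is to combine an upper bound on $L(V_i(x))$ with a lower bound on $L(V(x))$ and then apply the $\beta$-contributing hypothesis $L(V_i(x))\ge\beta L(V(x))$ to isolate $b_i$. Let $S:=\sum_{j>i}b_j$. If $b_i\ge S$ the claim is immediate, so assume $b_i<S$, and let $T\subseteq[n]$ denote the support of all coordinates in levels strictly greater than $i$, so that $|T|=S$.

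For the upper bound, $V_i(x)$ has $b_i$ nonzero entries all equal to $\xi^i$, so $\|V_i(x)\|_2=\xi^i\sqrt{b_i}$. Since $V_i(x)$ is supported on $b_i\le S$ coordinates and $b_{L^{(k)}}$ is non-decreasing in $k$,
\[L(V_i(x))\le b_{L^{(b_i)}}\cdot\xi^i\sqrt{b_i}\le b_{L^{(S)}}\cdot\xi^i\sqrt{b_i}\le\mmc(L)\cdot M_{L^{(S)}}\cdot\xi^i\sqrt{b_i},\]
using the definition $b_{L^{(S)}}\le\mmc(L)\cdot M_{L^{(S)}}$ in the last step.

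For the lower bound, every coordinate of $V(x)$ restricted to $T$ is at least $\xi^{i+1}$, so monotonicity of symmetric norms on the positive orthant gives $L(V(x))\ge\xi^{i+1}L(\mathbf{1}_T)$. I would then lower-bound $L(\mathbf{1}_T)$ via a Gaussian comparison: for a standard Gaussian vector $g$ supported on $T$, the independence of $\|g\|_2$ and $g/\|g\|_2$ together with the definition of the median on the $L_2$-sphere in $\mathbb{R}^S$ yields $\Ex{L(g)}\ge\Omega(\sqrt{S})\cdot M_{L^{(S)}}$. In the reverse direction, $|g|\le\|g\|_\infty\cdot\mathbf{1}_T$ coordinatewise, so monotonicity of $L$ together with the standard Gaussian tail bound $\Ex{\|g\|_\infty}=O(\sqrt{\log S})$ gives $\Ex{L(g)}\le O(\sqrt{\log n})\cdot L(\mathbf{1}_T)$. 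Combining, $L(\mathbf{1}_T)\ge\Omega(\sqrt{S/\log n})\cdot M_{L^{(S)}}$.

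Substituting into $\beta L(V(x))\le L(V_i(x))$ and cancelling the common factor $\xi^i\cdot M_{L^{(S)}}$ yields $\beta\xi\cdot\Omega(\sqrt{S/\log n})\le\mmc(L)\cdot\sqrt{b_i}$, which rearranges to $b_i\ge\Omega(\beta^2/(\mmc(L)^2\log n))\cdot S$, implying the claim (with slack to spare in the logarithmic factor). I expect the main obstacle to be the Gaussian comparison bounding $L(\mathbf{1}_T)$: $\mathbf{1}_T$ is a highly special direction on the sphere, and tying its $L$-value to the sphere median $M_{L^{(S)}}$ must lose no more than polylogarithmic factors, which is exactly what the Gaussian $\ell_\infty$ concentration provides. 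An appealing alternative that compares $L(V_i(x))$ to $L(V_j(x))$ for various $j>i$ purely via monotonicity fails for norms such as $L_\infty$, where $L(\mathbf{1}_k)$ does not grow with $k$, so the $\mmc(L)$-based upper bound above is genuinely needed.
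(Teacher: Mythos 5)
Your proof is correct, and in fact gives the slightly sharper bound $b_i \ge \Omega\bigl(\beta^2/(\mmc(L)^2\log n)\bigr)\sum_{j>i}b_j$ with only a single factor of $\log n$. Note that the paper itself does not supply a proof of this lemma (it is imported from \cite{BlasiokBCKY17} as a black box), so I cannot compare against an in-paper argument, but your route is the standard one from that source: upper-bound $L(V_i(x))$ by $b_{L^{(b_i)}}\|V_i(x)\|_2\le\mmc(L)\,M_{L^{(S)}}\,\xi^i\sqrt{b_i}$ using monotonicity of $k\mapsto b_{L^{(k)}}$, and lower-bound $L(V(x))\ge\xi^{i+1}L(\mathbf{1}_T)$ by monotonicity on the nonnegative orthant, then tie $L(\mathbf{1}_T)$ to $M_{L^{(S)}}$ by the Gaussian comparison $\E{L(g)}=\E{\|g\|_2}\cdot\E{L(g/\|g\|_2)}$ together with $L(|g|)\le\|g\|_\infty L(\mathbf{1}_T)$ and $\E{\|g\|_\infty}=O(\sqrt{\log S})$. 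Two small points worth making explicit: (i) $\E{L(g/\|g\|_2)}\ge\frac{1}{2}M_{L^{(S)}}$ uses that a nonnegative random variable has expectation at least half its median, and (ii) monotonicity of a symmetric norm on the nonnegative orthant is a standard consequence of absolute-homogeneity and the triangle inequality, but it does deserve a citation or one-line proof since you lean on it twice. The extra $\log n$ in the lemma's stated constant is simply slack in the original; your tighter bound subsumes it.
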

Moreover, \cite{BlasiokBCKY17} observes that the squared mass of a $\beta$-contributing level must be at least a significant fraction of the total squared mass of the lower levels. 
\begin{lemma}
\lemlab{lem:contribute:weight}
\cite{BlasiokBCKY17}
Given $x\in\mathbb{R}^n$, let the level sets be defined by a base $\xi>1$. 
If level $i$ is $\beta$-contributing, then there exists some fixed constant $\lambda>0$ such that 
\[b_i\xi^{2i}\ge\frac{\lambda\beta^2}{\mmc(\ell)^2(\log_{\xi} n)\log^2 n}\cdot\sum_{j\le i}b_j\xi^{2j}.\] 
\end{lemma}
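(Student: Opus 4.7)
The plan is to mirror the proof of \lemref{lem:contribute:num} but now track squared $L_2$ masses instead of counts, exploiting the fact that $b_i\xi^{2i}=\|V_i(x)\|_2^2$ and $\sum_{j\le i}b_j\xi^{2j}=\|V_{\le i}(x)\|_2^2$, where $V_{\le i}(x):=\sum_{j\le i}V_j(x)$ has disjoint buckets. Because level $i$ is $\beta$-contributing, $L(V_i(x))\ge\beta L(V(x))$; since $L$ is monotone in coordinate-wise absolute values (a standard property of symmetric norms) and $V_{\le i}(x)$ is obtained from $V(x)$ by zeroing out higher buckets, $L(V(x))\ge L(V_{\le i}(x))$, and hence $L(V_i(x))\ge\beta L(V_{\le i}(x))$.

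The key bridge between $L$ and $L_2$ masses is the maximum modulus of concentration. Directly from the definition of $b_{L^{(k)}}$, any vector $y$ whose support has size $k$ satisfies $L(y)\le b_{L^{(k)}}\|y\|_2$. In the opposite direction, a symmetrization over random sign patterns combined with concentration on the $L_2$-sphere (as established in \cite{BlasiokBCKY17}) yields, for a flat indicator vector $\mathbf{1}_S$ of sparsity $k$, the bound $L(\mathbf{1}_S)\ge\Omega(M_{L^{(k)}}\sqrt{k}/\log n)$. Applying these two inequalities to the flat bucket $V_i$ and to an arbitrary single bucket $V_j$ with $j\le i$ gives
\[L(V_i)\le b_{L^{(b_i)}}\|V_i\|_2,\qquad L(V_j)\ge\Omega\bigl(M_{L^{(b_j)}}\|V_j\|_2/\log n\bigr).\]

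Now let $j^\star=\argmax_{j\le i}\|V_j\|_2$. Since the stream has length $\poly(n)$, there are at most $\O{\log_\xi n}$ nonzero buckets at levels $\le i$, so averaging yields $\|V_{j^\star}\|_2^2\ge\|V_{\le i}\|_2^2/\O{\log_\xi n}$. By monotonicity, $L(V_{\le i})\ge L(V_{j^\star})$, and chaining with the bounds above produces
\[b_{L^{(b_i)}}\|V_i\|_2\;\ge\;L(V_i)\;\ge\;\beta L(V_{j^\star})\;\ge\;\Omega\bigl(\beta M_{L^{(b_{j^\star})}}/\log n\bigr)\|V_{j^\star}\|_2.\]
I then split on whether $b_{j^\star}\ge b_i$. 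If so, $k\mapsto b_{L^{(k)}}$ is monotone, so $b_{L^{(b_i)}}\le b_{L^{(b_{j^\star})}}$ and $M_{L^{(b_{j^\star})}}/b_{L^{(b_i)}}\ge 1/\mc(L^{(b_{j^\star})})\ge 1/\mmc(L)$; substituting and squaring gives exactly the claimed inequality. Otherwise $b_{j^\star}<b_i$, which forces $b_i\xi^{2i}\ge b_{j^\star}\xi^{2j^\star}=\|V_{j^\star}\|_2^2\ge\|V_{\le i}\|_2^2/\O{\log_\xi n}$, already stronger than the required bound by a factor of $\mmc(L)^2\log^2 n$.

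The main obstacle is the flat-vector lower bound $L(\mathbf{1}_S)\gtrsim M_{L^{(k)}}\sqrt{k}/\log n$, which quantitatively relates a specific deterministic vector to the median of $L$ over the $L_2$-sphere; this is the nontrivial symmetrization input imported from \cite{BlasiokBCKY17}. Everything else is bookkeeping via monotonicity of symmetric norms, pigeonhole over the $\O{\log_\xi n}$ buckets, the case split on $b_{j^\star}$ versus $b_i$, and the definition of $\mmc$.
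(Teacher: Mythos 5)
The paper does not prove this lemma; it is imported verbatim from \cite{BlasiokBCKY17}, so there is no in-paper proof to compare against. Assessed on its own terms, your reconstruction is correct and follows what is essentially the standard argument from that reference.

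Every link in your chain checks out. The identifications $b_i\xi^{2i}=\|V_i(x)\|_2^2$ and $\sum_{j\le i}b_j\xi^{2j}=\|V_{\le i}(x)\|_2^2$ are exact; monotonicity of symmetric norms under coordinate-wise shrinking is the standard convexity-plus-sign-invariance fact, which licenses $L(V(x))\ge L(V_{\le i}(x))\ge L(V_{j^\star}(x))$; the upper bound $L(y)\le b_{L^{(k)}}\|y\|_2$ for $k$-sparse $y$ is immediate from the definition of $b_{L^{(k)}}$; and the flat-vector lower bound $L(\mathbf{1}_S)\ge\Omega\bigl(\sqrt{k}\,M_{L^{(k)}}/\log n\bigr)$ is exactly the median-versus-flat estimate from \cite{BlasiokBCKY17} that makes the argument go. The pigeonhole over the $\O{\log_\xi m}=\O{\log_\xi n}$ nonzero buckets accounts for the $\log_\xi n$ in the denominator, and squaring the $\mmc(L)\log n$ loss yields the $\mmc(L)^2\log^2 n$. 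One small remark: your ``otherwise'' case $b_{j^\star}<b_i$ is in fact vacuous — $j^\star\le i$ and $b_{j^\star}<b_i$ would force $b_{j^\star}\xi^{2j^\star}<b_i\xi^{2i}$, contradicting $j^\star$ being the argmax of $\|V_j\|_2$ unless $j^\star=i$ — but the bound you derive there is still valid, so nothing breaks. You correctly identify the flat-vector lower bound as the only nonelementary ingredient; the rest is bookkeeping, and the case split on $b_{j^\star}$ versus $b_i$ is a clean way to convert the $M_{L^{(b_{j^\star})}}/b_{L^{(b_i)}}$ ratio into a $1/\mmc(L)$ via monotonicity of $k\mapsto b_{L^{(k)}}$.
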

Observe that together, \lemref{lem:contribute:num} and \lemref{lem:contribute:weight} imply that a $\beta$-contributing level $i$ must also be an important level as defined in \defref{def:important:level}. 
Crucially, since \lemref{lem:contribute:weight} states that the squared mass (or the $F_2$ frequency moment) of the $\beta$-contributing levels must be a significant fraction of the total squared mass of the lower levels, then it suggests we might be able to identify the $\beta$-contributing levels through an $L_2$-heavy hitters algorithm after removing the higher levels. 
Indeed, \cite{BlasiokBCKY17} show that the problem of identifying the size (and thus the contribution) of the $\beta$-contributing levels can be reduced to the task of finding $\nu$-approximate $\eta$-heavy hitters for specific parameters of $\nu$ and $\eta$. 

\begin{lemma}
\lemlab{lem:beta:detect}
\cite{BlasiokBCKY17}
Let $s=\O{\log n}$. 
If a level $i$ is $\beta$-important, then either $\xi^{2i}\ge\frac{\alpha^2\beta\eps^2}{\log^2 m}\,F_2(x)$ or there exists $j\in[s]$ such that $b_i\ge\frac{2^j\log^2 m}{\alpha^2\eps^2}$ and $\xi^{2i}\in\left[\frac{\alpha^2\beta\eps^2}{\log^2 m}\cdot\frac{F_2(x)}{2^j},\frac{\alpha^2\beta\eps^2}{\log^2 m}\cdot\frac{F_2(x)}{2^{j-1}}\right]$. 
\end{lemma}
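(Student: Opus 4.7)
The plan is a case analysis on the magnitude of $\xi^{2i}$ relative to $F_2(x)$. If $\xi^{2i}$ is already a sizable fraction of $F_2(x)$, then coordinates in level $i$ are $L_2$-heavy hitters of the full stream and the first disjunct of the conclusion holds. Otherwise, I would select the unique subsampling rate $1/2^j$ that puts level-$i$ coordinates in the heavy regime of the corresponding substream, and verify that the resulting $j$ forces the required lower bound on $b_i$.

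First, if $\xi^{2i}\ge\frac{\alpha^2\beta\eps^2}{\log^2 m}F_2(x)$, there is nothing to do. Otherwise, take $j$ to be the unique positive integer with
\[
2^{j-1}<\frac{\alpha^2\beta\eps^2}{\log^2 m}\cdot\frac{F_2(x)}{\xi^{2i}}\le 2^j.
\]
Because we are not in the first case, the argument above strictly exceeds $1$, so $j\ge 1$. Because the stream length is $\poly(n)$ we have $F_2(x)\le\poly(n)$, and since $\xi^{2i}\ge 1$ the argument is at most $\poly(n)$, so $j\le s$ for a suitable $s=\O{\log n}$. Rearranging the two-sided inequality immediately places $\xi^{2i}$ into the target interval.

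To verify the remaining bound $b_i\ge\frac{2^j\log^2 m}{\alpha^2\eps^2}$, I would multiply it by the lower end of the just-established interval, $\xi^{2i}\ge\frac{\alpha^2\beta\eps^2 F_2(x)}{\log^2 m\cdot 2^j}$, to reduce the claim, up to universal constants, to $b_i\xi^{2i}\ge\beta F_2(x)$. I would then decompose $F_2(x)\le\sum_\ell b_\ell\xi^{2\ell}$ at level $i$. The second $\beta$-important condition gives $\sum_{\ell\le i}b_\ell\xi^{2\ell}\le b_i\xi^{2i}/\beta$ for free. For the higher levels $\sum_{\ell>i}b_\ell\xi^{2\ell}$, I would combine the first $\beta$-important condition $\sum_{\ell>i}b_\ell<b_i/\beta$ with the geometric structure of the levels (only $\O{\log_\xi m}=\O{\log m/\alpha}$ levels, each with ratio $\xi^2=1+\O{\alpha}$ to the next) to bound the weighted high-level mass in terms of $b_i\xi^{2i}/\beta$.

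The main obstacle will be this last bookkeeping step: the first $\beta$-important condition bounds only the count of coordinates above level $i$, not their $\xi^{2\ell}$-weighted mass, and $\xi^{2\ell}$ for $\ell>i$ can in principle far exceed $\xi^{2i}$. The resolution I would pursue is to argue that any higher level $\ell$ whose contribution $b_\ell\xi^{2\ell}$ dominates $F_2(x)$ must already have $\xi^{2\ell}$ in the first-case range as a standalone observation, and, combined with the per-level size cap $b_\ell\le b_i/\beta$ inherited from condition (a) at level $i$, this caps the high-level contribution. Any polylogarithmic slack incurred by summing over the $\O{\log m/\alpha}$ levels is absorbed into the constants already present in the admissible range of $\beta$ from \lemref{lem:reconstruction}, delivering $b_i\xi^{2i}\ge\beta F_2(x)$ and completing the proof.
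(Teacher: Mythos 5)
The paper cites this lemma from \cite{BlasiokBCKY17} and supplies no proof, so there is nothing to compare against directly; I will instead assess your argument on its own terms. Your case split and the choice of $j$ placing $\xi^{2i}$ in the stated dyadic window are fine, and you are right that the remaining content is the size lower bound $b_i\ge 2^j\log^2 m/(\alpha^2\eps^2)$, which after multiplying through by the window bounds on $\xi^{2i}$ reduces (up to constants) to $b_i\xi^{2i}\gtrsim\beta F_2(x)$.

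The gap is exactly where you flagged it, and your proposed patch does not close it. The $\beta$-important conditions control $\sum_{\ell\le i}b_\ell\xi^{2\ell}$ (via the weight condition) and $\sum_{\ell>i}b_\ell$ (via the count condition), but they say nothing about $\sum_{\ell>i}b_\ell\xi^{2\ell}$. Your resolution is to use the per-level cap $b_\ell\le b_i/\beta$ for $\ell>i$, but that yields only $b_\ell\xi^{2\ell}\le (b_i/\beta)\xi^{2\ell}$, which is off from $b_i\xi^{2i}/\beta$ by the uncontrolled factor $\xi^{2(\ell-i)}$ -- this can be as large as $m^2$. Concretely, a stream with one coordinate of frequency $m$ and $\Theta(1/\beta)$ coordinates of frequency $1$ makes level $1$ $\beta$-important while $b_1\xi^{2}\approx 1/\beta\ll\beta F_2(x)\approx\beta m^2$, so the target inequality $b_i\xi^{2i}\gtrsim\beta F_2(x)$ genuinely fails under your hypotheses. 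Noting that such a dominating high level falls into ``case A'' as a standalone fact does not produce any inequality transferring its mass to level $i$. The resolution used in the literature (and implicitly relied on here via the tail guarantee of \privcountsketch in \lemref{lem:moment:approx}) is to excise the few high-level coordinates -- at most $\mathrm{poly}(1/(\alpha\beta\eps))$ of them -- and run the argument against the \emph{residual} second moment $F_2$ with those entries zeroed out, for which the count bound $\sum_{\ell>i}b_\ell\le b_i/\beta$ does translate into a weighted bound since the surviving high-level coordinates have $\xi^{2\ell}$ below the case-A threshold. Without that tail/residual step, the reduction as you wrote it does not go through.
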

\lemref{lem:beta:detect} implies that if level $i$ is $\beta$-important, then either (1) it will be identified by using $\privcountsketch$, i.e., \lemref{lem:priv:cs}, with threshold $\frac{\alpha^2\beta}{\log^2 m}$ on the stream or (2) its contribution can be well-approximated by using $\privcountsketch$ with threshold $\frac{\alpha^2\beta\eps^2}{\log^2 m}$ on a substream formed by sampling coordinates of the universe with probability $\frac{1}{2^j}$. 
We thus split our algorithm and analysis to handle these cases. 
In particular, we call a frequency level $i$ ``high'' if $\xi^{2i}\ge\frac{\alpha^2\beta\eps^2}{\log^2 m}\,F_2(x)$. 
We call a frequency level $i$ ``medium'' if $\xi^{2i}\ge\frac{\alpha^2\beta'\eps^2}{2^j}\,F_2(x)>T$ and $b_i\ge\O{\frac{2^j\log^2 m}{\alpha^2\eps^2}}$ for a certain $\beta'>0$ and a threshold $T$. 
We call a frequency level $i$ ``low'' if $\xi^{2i}\ge\frac{\alpha^2\beta'\eps^2}{2^j}\,F_2(x)$ and $b_i\ge\O{\frac{2^j\log^2 m}{\alpha^2\eps^2}}$, but $T\ge\frac{\alpha^2\beta'\eps^2}{2^j}\,F_2(x)$. 

\section{Private Symmetric Norm Estimation Algorithm}
\seclab{sec:alg}
In this section, we give our algorithm that releases a set of private statistics from which an arbitrary number of symmetric norms can be well-approximated. 
In particular, recall that \lemref{lem:reconstruction} suggests that it suffices to approximate the sizes of the important levels and identity the non-important levels, so that the contributions of the non-important levels can be set to zero. 
We partition the levels into three groups after defining explicit thresholds $T_1$ and $T_2$, with $T_1>T_2$. 
Recall that we define the ``high frequency levels'' as the levels whose coordinates exceed $T_1$ in frequency, the ``medium frequency levels'' as the levels whose coordinates are between $T_1$ and $T_2$ in frequency, and the ``low frequency levels'' as the levels whose coordinates are less than $T_2$ in frequency. 

The intuition is that because the high frequency levels have such large magnitude, their frequencies can be well-approximated by running an $L_2$-heavy hitters algorithm on the stream $S$. 
On the other hand, the medium frequency level coordinates are not large enough to be detected by running an $L_2$-heavy hitters algorithm on the stream $S$, but the sizes of these level sets must be large if the level set is important and therefore, there exists a substream $S_j$ for which a large number of these coordinates are subsampled and their frequencies can be well-approximated by running an $L_2$-heavy hitters algorithm on the substream $S_j$. 
Here we form substreams $S_0,S_1,\ldots$ so that $S_j$ first samples elements of the universe $[n]$ at a rate $\frac{1}{2^j}$ and then only contains the stream updates that are relevant to the sampled elements. 
Finally, the low frequency level coordinates are small enough that we cannot add Laplacian noise to their frequencies without affecting the level sets they are mapped to. 
We instead show that $L_1$ sensitivity for the level set estimations is particularly small for the low frequency levels and thus, we report the size of the level sets of the low frequency levels rather than the approximate frequencies of the heavy-hitters. 

We emphasize that we only use the thresholds $T_1$ and $T_2$ for the purposes of describing our algorithm -- in the actual implementation of the algorithm, the thresholds $T_1$ and $T_2$ will be implicitly defined by each of the substreams. 
For example, the items with threshold larger than $T_1$ will automatically be revealed through the stream $S$, while the items with thresholds between $T_1$ and $T_2$ will be revealed through the substreams $S_j$ with $2^j>\frac{\log n}{\beta'\alpha\eps}$ for explicit parameters $\alpha$, $\beta'$, and $\eps$. 
More specifically, note that \algref{alg:hh:high} sets $\beta'=\mathcal{O}\left(\frac{\alpha^2\beta\varepsilon^2}{\log^2 m}\right)$ or more specifically $\beta'=\frac{\alpha^2\beta\varepsilon^2}{2\log^2 m}$. 
Then $\beta'\cdot F_2(x)$ corresponds to the threshold $T_1$, which is utilized in the proofs of \secref{sec:alg:high}. 
Similarly, \algref{alg:hh:med} leverages the quantity $\frac{\log n}{\beta'\alpha\varepsilon}$ to define the threshold $T_2$, which is then utilized in the proofs of \secref{sec:alg:med}. 

\subsection{Recovery of High Frequency Levels}
\seclab{sec:alg:high}
In this section, we describe our algorithm for recovering the high frequency levels, whose coordinates have sufficiently large magnitude and thus their frequencies can be well-approximated by running an $L_2$-heavy hitters algorithm on the stream $S$. 
Moreover, with high probability, adding Laplacian noise will not affect the level sets because the frequencies are so large. 
Thus it simply suffices to return the noisy estimated frequencies of each of the elements in the high frequency levels. 
This algorithm is the simplest of our cases and we give the algorithm in full in \algref{alg:hh:high}. 

\begin{algorithm}[!htb]
\caption{Algorithm to privately estimate the high levels}
\alglab{alg:hh:high}
\begin{algorithmic}[1]
\Require{Privacy parameter $\eps>0$, accuracy parameter $\alpha\in(0,1)$}
\Ensure{Private estimation of the frequencies of the coordinates of the high frequency levels}
\State{$\beta\gets\O{\frac{\alpha^5}{\mmc(L)^2\log^5 m}}$, $\beta'\gets\O{\frac{\alpha^2\beta\eps^2}{\log^2 m}}$}
\State{Run $\privcountsketch$ on the stream $S$ with threshold $\alpha^2\beta'$ and failure probability $\frac{1}{\poly(m)}$}
\For{each heavy-hitter $k\in[n]$ reported by $\privcountsketch$}
\State{Let $\widetilde{x_k}$ be the frequency estimated by $\privcountsketch$}
\State{$\widehat{x_k}\gets\widetilde{x_k}+\Lap\left(\frac{8}{\beta'\eps}\right)$}
\State{\Return $\widehat{x_k}$}
\EndFor
\end{algorithmic}
\end{algorithm}
We first show that coordinates in high frequency levels are identified and their frequencies are accurately estimated. 
\begin{lemma}
\lemlab{lem:high:correct:yes}
Suppose $x_k^2\ge\frac{\alpha^2\beta\eps^2}{\log^2 m}\,F_2(x)$ and $m=\frac{\Omega\left(\log^5 m\right)}{\alpha^5\beta^2\eps^5}$.  
Then with high probability, \algref{alg:hh:high} outputs $\widehat{x_k}$ such that 
\[(1-\alpha^2)x_k\le\widehat{x_k}\le x_k.\]
\end{lemma}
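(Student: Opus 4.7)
The plan is to chain the two sources of error in \algref{alg:hh:high}, namely the estimation error of $\privcountsketch$ and the additional Laplace perturbation, and verify that each contribution is dominated by $\alpha^2 x_k$ under the given assumption on $m$.

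First I would verify that the coordinate $k$ is identified as a heavy hitter by $\privcountsketch$. The algorithm uses threshold $\eta^2=\alpha^2\beta'$ with $\beta'=\Theta\!\left(\frac{\alpha^2\beta\eps^2}{\log^2 m}\right)$. The hypothesis $x_k^2\ge\frac{\alpha^2\beta\eps^2}{\log^2 m}F_2(x)=\Theta(\beta')F_2(x)$ gives $x_k\ge\eta\,L_2(x)$, so $k$ lies in the output set $H$ with probability $1-1/\poly(m)$ by \lemref{lem:priv:cs}.

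Next, taking the accuracy parameter of $\privcountsketch$ to be $\nu=\Theta(\alpha^2)$, \lemref{lem:priv:cs} gives that the reported estimate $\widetilde{x_k}$ satisfies $|\widetilde{x_k}-x_k|\le\nu\,x_k+O\!\left(\frac{\log m}{\eta\nu}\right)$, while the fresh Laplace perturbation $\Lap\!\left(\frac{8}{\beta'\eps}\right)$ has magnitude $O\!\left(\frac{\log m}{\beta'\eps}\right)$ with probability $1-1/\poly(m)$ by a standard tail bound. Combining, the total deviation of $\widehat{x_k}$ from $x_k$ is at most
\[\alpha^2 x_k+O\!\left(\tfrac{\log m}{\alpha^3\sqrt{\beta'}}\right)+O\!\left(\tfrac{\log m}{\beta'\eps}\right).\]

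To finish, I would argue that both additive error terms are themselves $O(\alpha^2 x_k)$. In the insertion-only model each $x_i$ is a non-negative integer with $\sum_i x_i=m$, so $F_2(x)\ge\sum_i x_i=m$; combined with $x_k^2=\Omega(\beta'F_2(x))$, this forces $x_k=\Omega(\sqrt{\beta' m})$. Substituting $\beta'=\Theta(\alpha^2\beta\eps^2/\log^2 m)$ and using the hypothesis $m=\Omega(\log^5 m/(\alpha^5\beta^2\eps^5))$, both additive errors drop below $\alpha^2 x_k$, which gives the claimed $(1-O(\alpha^2))$-approximation after rescaling constants. The main obstacle is simply tracking the polynomial dependencies cleanly through the substitution of $\beta'$ so that everything lands within the stated lower bound on $m$; a secondary point is that symmetric Laplace noise a priori yields a two-sided $(1\pm\alpha^2)x_k$ estimate, so matching the one-sided upper bound $\widehat{x_k}\le x_k$ in the statement requires either reading it (up to a constant in $\alpha$) as the two-sided guarantee or invoking an implicit clipping step during post-processing.
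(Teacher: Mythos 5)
Your proposal follows essentially the same route as the paper: show that $k$ clears the $\privcountsketch$ threshold, bound the CountSketch estimation error by $\O{\alpha^2}\cdot x_k$, bound the external Laplace perturbation using $x_k^2\ge\beta'\,F_2(x)\ge\beta' m$, and combine. Your observation that the symmetric Laplace noise makes the stated one-sided inequality $\widehat{x_k}\le x_k$ strictly speaking a two-sided $(1\pm\O{\alpha^2})$ guarantee is a genuine imprecision in the lemma and in the paper's proof, which also asserts $\widetilde{x_k}\le x_k$ without justification.

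However, the last step you defer (``tracking the polynomial dependencies cleanly $\ldots$ so that everything lands within the stated lower bound on $m$'') does not actually go through, and you should not assert that it does without checking. With $x_k$ at the minimum value $\sqrt{\beta'm}$ permitted by the high-frequency hypothesis, requiring the Laplace noise $\Theta\!\left(\frac{\log m}{\beta'\eps}\right)$ to be at most $\alpha^2 x_k=\Omega(\alpha^2\sqrt{\beta'm})$ rearranges to $m=\Omega\!\left(\frac{\log^2 m}{\alpha^4(\beta')^3\eps^2}\right)$, which after substituting $\beta'=\Theta\!\left(\frac{\alpha^2\beta\eps^2}{\log^2 m}\right)$ becomes $m=\Omega\!\left(\frac{\log^8 m}{\alpha^{10}\beta^3\eps^8}\right)$ --- strictly stronger than the stated hypothesis $m=\Omega\!\left(\frac{\log^5 m}{\alpha^5\beta^2\eps^5}\right)$. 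The paper's own proof elides this by comparing the noise against $\alpha^2 m$ rather than $\alpha^2 x_k$ (which only helps if $x_k=\Theta(m)$, not implied by the hypothesis) and then making a non-equivalent simplification to $m\ge\frac{\Omega(\log m)}{\alpha(\beta')^2\eps}$. So your instinct that the bookkeeping is the delicate part is correct, but carrying it out reveals a real discrepancy rather than confirming the bound; you should either tighten the hypothesis on $m$ or find a stronger lower bound on $x_k$ than $\sqrt{\beta'm}$.
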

\begin{proof}
Consider \algref{alg:hh:high}. 
Since $x_k^2\ge\frac{\alpha^2\beta\eps^2}{2\log^2 m}\,F_2(x)$ and we call $\privcountsketch$ with threshold $\alpha^2\beta'$ with $\beta':=\O{\frac{\alpha^2\beta\eps^2}{\log^2 m}}$, then with high probability, the output $\widetilde{x_k}$ satisfies
\[(1-\O{\alpha^2})x_k\le\widetilde{x_k}\le x_k.\]
We then add Laplacian noise $\Lap\left(\frac{8}{\beta'\eps}\right)$ to $\widetilde{x_k}$ to form $\widehat{x_k}$. 
Since $x_k^2\ge\frac{\alpha^2\beta\eps^2}{2\log^2 m}\,F_2(x)=\beta'\,F_2(x)$ and $F_2(x)\ge m$, then with high probability, the Laplacian noise is at most an $\alpha^2$ fraction of $\widehat{x_k}$ for $\frac{\O{\log m}}{\beta'\eps}\le\alpha^2 m$ or equivalently, $m\ge\frac{\Omega\left(\log m\right)}{\alpha(\beta')^2\eps}\ge\frac{\Omega\left(\log^5 m\right)}{\alpha^5\beta^2\eps^5}$. 
Hence with high probability,
\[(1-\alpha^2)x_k\le\widehat{x_k}\le x_k.\]
\end{proof}
Similarly, we show that if a coordinate does not have high frequency, it will not be output by \algref{alg:hh:high}. 
\begin{lemma}
\lemlab{lem:high:correct:no}
Suppose $x_k^2<\frac{\alpha^2\beta\eps^2}{2\log^2 m}\,F_2(x)$ and $m=\frac{\Omega\left(\log^5 m\right)}{\alpha^5\beta^2\eps^5}$.  
Then with high probability, \algref{alg:hh:high} outputs $\widehat{x_k}$ such that 
\[\widehat{x_k}<\frac{3\alpha^2\beta\eps^2}{4\log^2 m}\,F_2(x).\]
\end{lemma}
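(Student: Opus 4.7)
The plan is to mirror the argument of \lemref{lem:high:correct:yes} but in the reverse direction, splitting on whether coordinate $k$ is reported by $\privcountsketch$. By \lemref{lem:priv:cs} with threshold $\eta^2=\alpha^2\beta'$ and accuracy $\nu=\O{\alpha^2}$, if $x_k\le\eta(1-\nu)L_2(x)=\alpha\sqrt{\beta'}(1-\O{\alpha^2})L_2(x)$ then $k\notin H$ and no $\widehat{x_k}$ is produced, so the claim holds trivially. Otherwise $k$ may be reported, and \lemref{lem:priv:cs} gives $\widetilde{x_k}\le(1+\O{\alpha^2})x_k+\O{\log m/(\alpha^3\sqrt{\beta'})}$ with high probability.

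Next, I account for the Laplace perturbation in the last step of \algref{alg:hh:high}: a standard tail bound yields $|\Lap(8/(\beta'\eps))|=\O{\log m/(\beta'\eps)}$ with high probability, and hence $\widehat{x_k}=\widetilde{x_k}+\Lap(8/(\beta'\eps))\le(1+\O{\alpha^2})x_k+\O{\log m/(\alpha^3\sqrt{\beta'})}+\O{\log m/(\beta'\eps)}$. Invoking the hypothesis $x_k<\sqrt{\beta'\,F_2(x)}$ and using $F_2(x)\ge m$ with $m=\Omega(\log^5 m/(\alpha^5\beta^2\eps^5))$, the same parameter calculation performed in the proof of \lemref{lem:high:correct:yes} shows that both additive noise terms are dominated by the leading $\sqrt{\beta'\,F_2(x)}$ term, so $\widehat{x_k}\le(1+\O{\alpha^2})\sqrt{\beta'\,F_2(x)}$. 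Since $\beta'\,F_2(x)\ge\beta'\,m\gg1$ under the same hypothesis, we have $\sqrt{\beta'\,F_2(x)}<\tfrac{3}{2}\beta'\,F_2(x)=\tfrac{3\alpha^2\beta\eps^2}{4\log^2 m}F_2(x)$, which is the desired bound.

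The main obstacle is that the hypothesis $x_k^2<\beta'\,F_2(x)$ does not by itself preclude $k$ from being reported by $\privcountsketch$: the exclusion guarantee of \lemref{lem:priv:cs} only kicks in once $x_k\le\eta(1-\nu)L_2(x)$ for $\eta=\alpha\sqrt{\beta'}$, which is a strictly smaller threshold than $\sqrt{\beta'}\,L_2(x)$. Care must therefore be taken to handle the ``ambiguous'' regime in which $k$ could still be reported, and to bound $\widehat{x_k}$ in that regime using the sketch's approximation guarantee together with the Laplace tail. Once this case is handled, the rest is routine bookkeeping of additive noise that directly reuses the parameter regime already verified in \lemref{lem:high:correct:yes}.
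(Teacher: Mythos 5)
Your proposal is correct and follows essentially the same route as the paper's proof: combine the accuracy guarantee of the heavy-hitter sketch with a Laplace tail bound and the hypothesis that $x_k^2$ is small, then use the condition $m=\Omega(\log^5 m/(\alpha^5\beta^2\eps^5))$ together with $F_2(x)\ge m$ to absorb all additive noise. The explicit case split on whether $k$ lands in the reported set $H$ is a sensible refinement, but it is not a new ingredient — if $k\notin H$ the claim is vacuous, and the paper simply bounds $\widehat{x_k}$ directly whenever it is produced. One cosmetic difference worth noting: the paper phrases the $\countsketch$ error bound in squared form, $|(\widetilde{x_k})^2-(x_k)^2|\le 2\alpha^2\beta'\,F_2(x)$, and compares squared quantities throughout, whereas you work with the unsquared guarantee $\widetilde{x_k}\le(1+\O{\alpha^2})x_k+\O{\log m/(\alpha^3\sqrt{\beta'})}$ and close via the observation $\sqrt{\beta' F_2(x)}<\tfrac{3}{2}\beta' F_2(x)$ once $\beta' F_2(x)\gg 1$; both formulations land in the same place given the lemma's hypotheses, and yours has the advantage of matching the unsquared form of the stated conclusion more literally.
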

\begin{proof}
Since $x_k^2<\frac{\alpha^2\beta\eps^2}{2\log^2 m}\,F_2(x)$ and we call $\privcountsketch$ with threshold $\alpha^2\beta'$ with $\beta':=\O{\frac{\alpha^2\beta\eps^2}{\log^2 m}}$, then the output $\widetilde{x_k}$ satisfies
\[|(\widetilde{x_k})^2-(x_k)^2|\le2\alpha^2\beta'\,F_2(x).\]
We then add Laplacian noise $\Lap\left(\frac{8}{\beta'\eps}\right)$ to $\widetilde{x_k}$ to form $\widehat{x_k}$. 
Since $F_2(x)\ge m$, then with high probability, the Laplacian noise is at most an $\alpha^2\beta'$ fraction of $F_2(x)$ for $\frac{\O{\log m}}{\beta'\eps}\le\alpha^2 m$ or equivalently, $m\ge\frac{\Omega\left(\log m\right)}{\alpha(\beta')^2\eps}\ge\frac{\Omega\left(\log^5 m\right)}{\alpha^5\beta^2\eps^5}$. 
Hence with high probability,
\[|(\widetilde{x_k})^2-(x_k)^2|\le\frac{\alpha^2\beta\eps^2}{4\log^2 m}\,F_2(x).\]
Since $x_k^2<\frac{\alpha^2\beta\eps^2}{2\log^2 m}\,F_2(x)$, then it follows that 
\[\widehat{x_k}<\frac{3\alpha^2\beta\eps^2}{4\log^2 m}\,F_2(x).\]
\end{proof}
\noindent
We now show that \algref{alg:hh:high} preserves differential privacy.
\begin{lemma}
\lemlab{lem:priv:high}
\algref{alg:hh:high} is $\left(\frac{\eps}{4},\frac{\delta}{4}\right)$-differentially private for $\delta=\frac{1}{\poly(m)}$. 
\end{lemma}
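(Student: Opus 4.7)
The plan is to decompose \algref{alg:hh:high}'s privacy analysis into two components: the privacy contribution of $\privcountsketch$ (which identifies the heavy-hitter set $H$ and produces the internal estimates $\widetilde{x_k}$), and the privacy contribution of the additional Laplace noise $\Lap(8/(\beta'\eps))$ appended to each $\widetilde{x_k}$. Both pieces are then combined via \thmref{thm:dp:comp}, together with a bound on $|H|$.

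For the additional Laplace noise, I would appeal to \lemref{lem:sens:countsketch} to establish that for neighboring streams, the $\countsketch$ estimate $\widetilde{x_k}$ for each fixed coordinate $k$ has $L_1$ sensitivity at most $2$, with the internal hashing randomness held fixed. By the Laplace mechanism with sensitivity $2$ and scale $8/(\beta'\eps)$, this yields $(\beta'\eps/4)$-differential privacy for each released estimate $\widehat{x_k}$. For the $\privcountsketch$ component, the internal mechanism (Laplace noise on each cell together with a noisy threshold obtained via private $\ams$) already provides $(\eps_1,\delta_1)$-DP for the identification of $H$, and for suitable internal parameters we can arrange $\eps_1\le\eps/8$ and $\delta_1\le\delta/4$. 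A simple $L_2$-mass counting argument bounds $|H|\le 1/(\alpha^2\beta')$, since each $k\in H$ contributes squared mass at least $\alpha^2\beta' F_2(x)$ to the total $F_2(x)$.

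The main obstacle is bounding the composition cost across the $|H|$ released noisy estimates: naive basic composition of $(\beta'\eps/4)$-DP across all heavy-hitters would give $\eps/(4\alpha^2)$-DP, which exceeds the target whenever $\alpha<1$. I expect the resolution to be that the additional Laplace noise in \algref{alg:hh:high} is effectively a post-processing of noisy estimates already released privately inside $\privcountsketch$; by the post-processing clause of \thmref{thm:dp:comp}, this adds no further privacy cost, and the overall budget is absorbed into $\privcountsketch$'s own accounting. Summing the two contributions by \thmref{thm:dp:comp} then yields the claimed $(\eps/4,\delta/4)$-DP bound.
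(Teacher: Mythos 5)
You use the same key ingredients as the paper --- \lemref{lem:sens:countsketch} for the per-coordinate sensitivity of $2$, Laplace noise at scale $8/(\beta'\eps)$ giving $(\beta'\eps/4)$-DP per estimate, and \thmref{thm:dp:comp} --- but you diverge at the composition step. The paper's proof is a direct basic-composition argument: it asserts that $\privcountsketch$ releases at most $1/\beta$ estimates, so the total cost is $(1/\beta)\cdot(\beta'\eps/4)=\O{\alpha^2\eps^3/\log^2 m}\le\eps/4$. Your count $|H|\le 1/(\alpha^2\beta')$, which follows correctly from the threshold $\alpha^2\beta'$ that \algref{alg:hh:high} actually passes to $\privcountsketch$, is larger than $1/\beta$ by a factor of roughly $\log^2 m/(\alpha^2\eps^2)$, and as you compute it would make the same composition yield $\eps/(4\alpha^2)$, blowing the budget whenever $\alpha<1$. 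You are right to be uneasy: the paper's bound of $1/\beta$ is stated without derivation and does not match the threshold the algorithm uses, and nothing in the paper explains why the effective number of releases is only $1/\beta$.

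Your proposed resolution --- that the Laplace noise added in \algref{alg:hh:high} is post-processing of $\privcountsketch$'s already-noised output and hence free --- is a genuinely different argument from the paper's, and it would sidestep the composition difficulty entirely if it held. But as stated it relocates the whole privacy accounting into $\privcountsketch$, whose privacy parameters are nowhere stated: \lemref{lem:priv:cs} specifies only accuracy and space, not an $(\eps,\delta)$ bound, so there is no quantitative guarantee for you to inherit. To close your argument you would need to extract and bound $\privcountsketch$'s internal privacy cost explicitly, at which point the $1/\beta$-versus-$1/(\alpha^2\beta')$ question simply resurfaces inside that subroutine. The paper's own one-sentence composition step does not resolve it either.
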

\begin{proof}
By \lemref{lem:sens:countsketch}, the sensitivity of $\privcountsketch$ is at most $2$ and the failure probability is $\frac{1}{\poly(m)}$. 
Thus by adding Laplacian noise $\Lap\left(\frac{8}{\beta'\eps}\right)$ to $\widetilde{x_k}$, each estimated frequency is $\left(\frac{\beta'\eps}{4},\frac{\delta}{4\beta}\right)$-differentially private for $\delta=\frac{1}{\poly(m)}$. 
Since $\privcountsketch$ with threshold $\beta'$ can release at most $\frac{1}{\beta}$ estimated frequencies and post-processing does not cause loss in privacy, then by \thmref{thm:dp:comp}, \algref{alg:hh:high} is $\left(\frac{\eps}{4},\frac{\delta}{4}\right)$. 
\end{proof}
Finally, we analyze the space complexity of \algref{alg:hh:high}. 
\begin{lemma}
\lemlab{lem:space:high}
\algref{alg:hh:high} uses space $\mmc(L)^2\cdot\poly\left(\frac{1}{\alpha},\frac{1}{\eps},\log m\right)$. 
\end{lemma}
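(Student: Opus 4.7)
The plan is to directly invoke the space bound of \lemref{lem:priv:cs} with the parameters chosen in \algref{alg:hh:high}, and then substitute in the definitions of $\beta$ and $\beta'$ to extract the $\mmc(L)^2$ dependence explicitly.

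First I would recall that \algref{alg:hh:high} invokes $\privcountsketch$ with threshold parameter $\eta^2 = \alpha^2 \beta'$ and some accuracy parameter $\nu$. The accuracy parameter must be set small enough so that the underlying $\countsketch$ estimate gives the $(1-\O{\alpha^2})$-approximation claimed in the proof of \lemref{lem:high:correct:yes}; taking $\nu = \Theta(\alpha^2)$ suffices. By \lemref{lem:priv:cs}, the space used by $\privcountsketch$ is then
\[
\O{\frac{1}{\eta^2 \nu^2}\log^2 n} \;=\; \O{\frac{1}{\alpha^2 \beta' \cdot \alpha^4}\log^2 n} \;=\; \O{\frac{1}{\alpha^6 \beta'}\log^2 n}.
\]

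Next I would substitute the definitions $\beta' = \O{\alpha^2 \beta \eps^2 / \log^2 m}$ and $\beta = \O{\alpha^5 / (\mmc(L)^2 \log^5 m)}$, which gives $\beta' = \O{\alpha^7 \eps^2 / (\mmc(L)^2 \log^7 m)}$. Plugging back in,
\[
\O{\frac{1}{\alpha^6 \beta'}\log^2 n} \;=\; \O{\frac{\mmc(L)^2 \log^7 m \cdot \log^2 n}{\alpha^{13} \eps^2}} \;=\; \mmc(L)^2 \cdot \poly\!\left(\frac{1}{\alpha},\frac{1}{\eps},\log m\right),
\]
using $n \le \poly(m)$ to fold $\log n$ into $\log m$.

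Finally, I would observe that the remaining operations in \algref{alg:hh:high}, namely sampling a $\Lap\!\left(8/(\beta'\eps)\right)$ random variable for each heavy hitter returned by $\privcountsketch$ and outputting the sum, require only $\O{\log m}$ additional bits per reported coordinate. Since there are at most $1/\beta' = \poly(\mmc(L), 1/\alpha, 1/\eps, \log m)$ such heavy hitters, this contribution is already absorbed into the stated bound. I do not anticipate any real obstacle here: the space bound is essentially a bookkeeping exercise once the parameters of $\privcountsketch$ are fixed, and the only thing to be careful about is tracking the $\mmc(L)^2$ factor hidden inside $\beta$ so that it surfaces cleanly as the leading term in the final bound.
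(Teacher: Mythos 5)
Your proof is correct and follows essentially the same approach as the paper: it invokes the space bound of $\privcountsketch$ (\lemref{lem:priv:cs}) with the threshold $\alpha^2\beta'$ chosen in \algref{alg:hh:high} and then unwinds the definitions of $\beta'$ and $\beta$ to expose the $\mmc(L)^2$ factor. You spell out the arithmetic and the choice $\nu = \Theta(\alpha^2)$ more explicitly than the paper does, but the argument is identical in substance.
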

\begin{proof}
The space complexity follows from running a single instance of $\privcountsketch$ with threshold $\alpha^2\beta'$ and failure probability $\frac{1}{\poly(m)}$, where $\beta'=\O{\frac{\alpha^2\beta\eps^2}{\log^2 m}}$ and $\beta=\O{\frac{\alpha^5}{\mmc(L)^2\log^5 m}}$. 
\end{proof}

\subsection{Recovery of Medium Frequency Levels}
\seclab{sec:alg:med}
In this section, we describe our algorithm for recovering the medium frequency levels, whose coordinates do not have sufficiently large magnitude to be detected by running an $L_2$-heavy hitters algorithm on the stream $S$, but have sufficiently large size, so that there exists some $j\in[s]$ across the $s$ subsampling levels such that the coordinates can be detected by running an $L_2$-heavy hitters algorithm on the stream $S_j$. 
On the other hand, their magnitudes are sufficiently large so that with high probability, adding Laplacian noise will not affect the level sets. 
We give the algorithm in full in \algref{alg:hh:med}. 

\begin{algorithm}[!htb]
\caption{Algorithm to privately estimate the medium levels}
\alglab{alg:hh:med}
\begin{algorithmic}[1]
\Require{Privacy parameter $\eps>0$, accuracy parameter $\alpha\in(0,1)$}
\Ensure{Private estimations of the sizes of the medium frequency levels}
\State{$\beta\gets\O{\frac{\alpha^5}{\mmc(L)^2\log^5 m}}$, $\beta'\gets\O{\frac{\alpha^3\beta\eps^2}{\log^2 m}}$, $\xi\gets(1+\O{\eps})$}
\State{$\gamma\gets(1/2,1)$ uniformly at random, $\ell\gets\ceil{\log_{\xi}(2m)}$, $s\gets\O{\log n}$}
\For{$j\in[s]$ with $2^j>\frac{\log n}{\beta'\alpha\eps}$}
\State{Form stream $S_j$ by sampling elements of $[n]$ with probability $\frac{1}{2^j}$}
\State{Run $\privcountsketch_j$ on stream $S_j$ with threshold $\alpha^2\beta'\eps^2$ and failure probability $\frac{1}{\poly(m)}$}
\For{each heavy-hitter $k\in[n]$ reported by $\privcountsketch_j$}
\State{Let $\widehat{x_k}$ be the frequency estimated by $\privcountsketch_j$}
\If{$\widehat{x_k}>\frac{\log n}{\beta'\alpha\eps}$}
\State{$\widetilde{x_k}\gets\widehat{x_k}+\Lap\left(\frac{8}{\beta'\eps}\right)$}
\EndIf
\EndFor
\For{$i\in[\ell]$ with $\frac{m^2}{2^{j+1}}>\gamma\xi^{2i}\ge 2^j>\O{\frac{\log n}{\beta'\alpha^2\eps}}$}
\State{Let $\widetilde{b_i}$ be the number of indices $k\in[n]$ such that $\gamma\xi^{2i}\le\widetilde{x_k}<\gamma\xi^{2i+2}$}
\State{$\widehat{b_i}\gets\frac{2^j}{(1+\O{\alpha})}\,\widetilde{b_i}$}
\State{\Return $\widehat{b_i}$}
\EndFor
\EndFor
\end{algorithmic}
\end{algorithm}
We first upper bound the second frequency moment (and hence the $L_2$ norm) of each substream. 
This is necessary because we want to detect the coordinates of the medium frequency levels as $L_2$-heavy hitters for each substream, but if the substream has overwhelmingly large $L_2$ norm, then we will not be able to find coordinates of the medium frequency levels. 
However, it may not be true that $F_2(S_j)$ is significantly smaller than $F_2(S)$ with high probability. 
For example, if there were a single large element, then the probability it is sampled at level $s$ is $\frac{1}{2^s}$, which is roughly $\frac{1}{n}>\frac{1}{\poly(m)}$. 
Instead, we note that $\privcountsketch$ benefits from the stronger \emph{tail guarantee}, which states that not only does $\privcountsketch$ with threshold $\eta<1$ detect the elements $k$ such that $(x_k)^2\ge\eta F_2(S)$, but it also detects the elements $k$ such that $(x_k)^2\ge\eta F_2(S_{\tail(1/\eta)})$, where $S_{\tail(1/\eta)}$ is the frequency vector $x$ induced by $S$, with the largest $\frac{1}{\eta}$ entries instead set to zero~\cite{BravermanCINWW17,BravermanGLWZ18}. 
\begin{lemma}
\lemlab{lem:moment:approx}
With high probability, we have that $F_2((S_j)_{1/(\alpha^2\beta'\eps^2)})\le\frac{200\log m}{2^j}\,F_2(x)$ for all $j\in[s]$. 
\end{lemma}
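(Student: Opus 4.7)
Fix $j\in[s]$ and set $T=1/(\alpha^2\beta'\eps^2)$; let $Y\subseteq[n]$ index the $T$ largest-magnitude coordinates of $x$. The starting observation is that for $k\notin Y$ we have $x_k^2\le F_2(x)/T=\alpha^2\beta'\eps^2\,F_2(x)=:N$ by pigeonhole on the top $T$. The plan is to dominate $F_2((S_j)_T)$ by a clean sum of independent bounded random variables, apply a Bernstein-style concentration inequality, and then union bound over $j$.

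The first step is a deterministic-up-to-high-probability reduction. Outside the trivial regime where $2^j=\O{1}$ — in which $F_2((S_j)_T)\le F_2(x)\le\frac{200\log m}{2^j}F_2(x)$ is immediate — a Chernoff bound on $|Y\cap S_j|\sim\mathrm{Bin}(T,1/2^j)$ (expectation $T/2^j$) gives $|Y\cap S_j|\le T$ with probability $1-1/\poly(m)$. On this event, the top-$T$ entries of $S_j$ (by magnitude) contain all sampled elements of $Y$, so the surviving coordinates of $(S_j)_T$ lie entirely in $S_j\setminus Y$, and hence $F_2((S_j)_T)\le Z:=\sum_{k\notin Y}x_k^2\cdot\mathbb{1}[k\in S_j]$. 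This $Z$ is a sum of independent $[0,N]$-valued terms with mean at most $F_2(x)/2^j$ and variance at most $N\cdot\mathbb{E}[Z]\le\eta\,F_2(x)^2/2^j$, where $\eta:=N/F_2(x)=\alpha^2\beta'\eps^2$.

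The second step applies Bernstein's inequality to $Z$ at deviation $t=100\log m\cdot F_2(x)/2^j$. In the target log-factor regime the $Nt/3$ term dominates the variance inside Bernstein's denominator, so the exponent simplifies to $\Omega(\log m/(2^j\eta))$; this already yields $1/\poly(m)$ failure whenever $2^j\eta=\O{1}$. In the opposite regime $2^j\eta\gg 1$ (sampling rate $1/2^j$ well below $1/T$), the expected size of $S_j\setminus Y$ is $(n-T)/2^j$, and a Chernoff bound on $|S_j\setminus Y|\sim\mathrm{Bin}(n-T,1/2^j)$ combined with the crude pointwise bound $Z\le N\cdot|S_j\setminus Y|$ recovers the same $\O{\log m\cdot F_2(x)/2^j}$ upper bound. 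Finally, a union bound over the $s=\O{\log n}$ substreams promotes everything to the ``for all $j\in[s]$'' statement at overall failure $1/\poly(m)$.

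The main obstacle is the concentration in the second step: the claim must hold uniformly across a wide range of values of $2^j\eta$, and Bernstein alone degrades in the aggressive-sampling regime. Splitting into cases and patching Bernstein in one regime with a simple Chernoff count on $|S_j\setminus Y|$ in the other — while being careful with the dominance argument of the first step, which only yields $F_2((S_j)_T)\le Z$ once we know that not too many of the top $T$ coordinates of $x$ have been sampled — is the principal technical work.
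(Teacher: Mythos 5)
Your central reduction is correct and fills in a step the paper leaves implicit: since the top $T$ coordinates of $S_j$ (by magnitude) always contain every sampled element of $Y$, the tail $(S_j)_T$ is supported entirely on $Y^c$, so $F_2((S_j)_T)\le Z:=\sum_{k\notin Y}x_k^2\,\mathbb{1}[k\in S_j]$, and $Z$ is a sum of independent $[0,N]$-valued terms with mean at most $F_2(x)/2^j$. This is exactly the structure the paper's one-line ``Chernoff'' appeal presupposes. Note, however, that the Chernoff bound you invoke on $|Y\cap S_j|$ to get $|Y\cap S_j|\le T$ is superfluous: $|Y\cap S_j|\le|Y|=T$ holds deterministically, so the domination $F_2((S_j)_T)\le Z$ is unconditional, and the ``trivial regime'' escape for $2^j=\O{1}$ is a red herring.

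The genuine gap is in your patch for the regime $2^j\eta\gg 1$. You propose bounding $Z\le N\cdot|S_j\setminus Y|$ and applying Chernoff to $|S_j\setminus Y|\sim\mathrm{Bin}(n-T,1/2^j)$, but this does not recover the target. You need $|S_j\setminus Y|\lesssim\log m\cdot F_2(x)/(2^jN)=\log m\cdot T/2^j$, whereas $\mathbb{E}|S_j\setminus Y|=(n-T)/2^j$, and $n-T$ can be arbitrarily larger than $T\log m$. In other words, Chernoff concentrates $|S_j\setminus Y|$ around $(n-T)/2^j$, which is the wrong quantity; it never pushes the count below $T\log m/2^j$. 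The crude pointwise bound $Z\le N|S_j\setminus Y|$ is also far too lossy in precisely this regime, because most $k\notin Y$ have $x_k^2\ll N$ (the constraint $\sum_{k\notin Y}x_k^2\le F_2(x)=TN$ forces this), so the product $N\cdot|S_j\setminus Y|$ overshoots $Z$ by orders of magnitude. A correct treatment of the aggressive-sampling regime has a different flavor: one shows that $|S_j|\le T$ holds with failure probability $e^{-\Omega(T)}\le 1/\poly(m)$ whenever $\|x\|_0/2^j\le T/2$ (using $T=\Omega(\log^2 m)$ from the parameter settings), so that $(S_j)_T$ is identically zero; the remaining window where $\|x\|_0/2^j>T/2$ forces $F_2(x)\ge\|x\|_0$ to be large enough that the multiplicative Chernoff/Bennett bound on $Z$ closes the gap after a dyadic level-set decomposition of $\{x_k^2:k\notin Y\}$. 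As written, your second case does not establish the claim and would fail on instances with many small nonzero coordinates (e.g.\ $x$ close to all-ones with $n\gg T\cdot 2^j$). Separately, the paper insists on $\O{\log n}$-wise limited independence, so Bernstein would need to be replaced by a bounded-independence tail bound, though this is a secondary issue.
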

\begin{proof}
For each $j\in[s]$, we have that $\Ex{F_2(S_j)}=\frac{F_2(x)}{2^j}$. 
By Chernoff bounds with $\O{\log n}$-wise limited independence, we have that
\[\PPr{F_2((S_j)_{1/(\alpha^2\beta'\eps^2)})>\frac{200\log m}{2^j}\,F_2(x)}\le\frac{1}{\poly(m)}.\]
Since $s\le2\log m$, then by a union bound over all $j\in[s]$, we have that $F_2(S_j)\le (200\log m)F_2(x)$ for all $j\in[s]$. 
\end{proof}
We now show that conditioned on the event that the $L_2$ norm of the subsampled streams are not too large, then we can well-approximate the frequency of any coordinate of the medium frequency levels, provided that they are sampled in the substream. 
\begin{lemma}
\lemlab{lem:med:freq:approx}
Suppose $i$ is a $\beta$-important level and $k\in[n]$ is in level $i$, so that $x_k\in[\xi^i,\xi^{i+1})$. 
If $F_2((S_j)_{1/(\alpha^2\beta'\eps^2)})\le\frac{200\log m}{2^j}\,F_2(x)$ for all $j\in[s]$ and $k$ is sampled in stream $S_j$ with $2^j>\frac{\log n}{\beta'\alpha\eps}$, then with high probability, \algref{alg:hh:med} outputs $\widehat{x_k}$ such that 
\[(1-\alpha^2)x_k\le\widehat{x_k}\le x_k.\]
\end{lemma}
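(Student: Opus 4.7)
The plan is to verify two things: (i) that $k$ is detected as an $L_2$-heavy hitter by $\privcountsketch_j$ on the substream $S_j$, and (ii) that both the CountSketch error and the injected Laplace noise are dominated by $\alpha^2 x_k$, yielding the desired $(1\pm\alpha^2)$-relative accuracy.

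First I would verify detection. Since level $i$ is $\beta$-important and falls in the medium regime, \lemref{lem:beta:detect} gives $\xi^{2i}\ge\frac{\alpha^2\beta\eps^2}{\log^2 m}\cdot\frac{F_2(x)}{2^j}$. As $k\in B_i$ we have $x_k^2\ge\xi^{2i}$, and since $k$ is sampled into $S_j$, its frequency in $S_j$ equals $x_k$. Dividing by the hypothesized tail bound $F_2((S_j)_{1/(\alpha^2\beta'\eps^2)})\le\frac{200\log m}{2^j}F_2(x)$ from \lemref{lem:moment:approx} yields
\[\frac{x_k^2}{F_2((S_j)_{1/(\alpha^2\beta'\eps^2)})}\ge\frac{\alpha^2\beta\eps^2}{200\log^3 m}.\]
With the parameter choice $\beta'=\O{\alpha^3\beta\eps^2/\log^2 m}$ (with a sufficiently small absolute constant), the right-hand side is at least $\alpha^2\beta'\eps^2$, so the tail-robust heavy-hitter guarantee of $\privcountsketch_j$ (\lemref{lem:priv:cs}) reports $k$ and returns an estimate satisfying $(1-\nu)x_k-\O{\log m/(\sqrt{\beta'}\alpha\eps\nu)}\le \widehat{x_k}^{\mathsf{CS}}\le(1+\nu)x_k+\O{\log m/(\sqrt{\beta'}\alpha\eps\nu)}$ for an internally chosen accuracy $\nu=\O{\alpha^2}$.

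Next I would argue $x_k$ dominates both error terms. The medium branch is only entered when $\gamma\xi^{2i}\ge 2^j>\frac{\log n}{\beta'\alpha\eps}$, so $x_k\ge\xi^i\ge\sqrt{\log n/(\gamma\beta'\alpha\eps)}$. Thus the additive CountSketch error and, after thresholding, the Laplace perturbation $\Lap(8/(\beta'\eps))$ (which concentrates at magnitude $\O{\log m/(\beta'\eps)}$ with probability $1-1/\poly(m)$) are each smaller than $\alpha^2 x_k$, provided $m$ satisfies the same quantitative lower bound used in \lemref{lem:high:correct:yes}. The check $\widehat{x_k}^{\mathsf{CS}}>\log n/(\beta'\alpha\eps)$---needed to trigger the Laplace branch---is immediate from the lower bound on $x_k$. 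Combining the multiplicative CountSketch error, the absorbed additive error, and the absorbed Laplace noise gives $(1-\alpha^2)x_k\le\widehat{x_k}\le(1+\alpha^2)x_k$ with high probability, from which the stated one-sided bound $(1-\alpha^2)x_k\le\widehat{x_k}\le x_k$ is obtained by folding the symmetric slack into a redefinition of $\alpha$, consistent with the asymptotic bookkeeping used elsewhere in the paper.

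The main obstacle I anticipate is the first step: ensuring that $x_k$ clears the $L_2$-heavy-hitter threshold of $\privcountsketch_j$ not merely against $F_2(x)/2^j$ but against the \emph{tail} $F_2((S_j)_{1/(\alpha^2\beta'\eps^2)})$, which is why \lemref{lem:moment:approx} is formulated for the tail and why the tail-robust variant of $\countsketch$ is invoked. Managing the three simultaneous layers of $\alpha$, $\eps$, and $\log m$ so that the numerical inequality $\beta'\le\beta/(200\log^3 m)$ is satisfied requires careful constant accounting, but each ingredient---Lemmas \ref{lem:beta:detect}, \ref{lem:moment:approx}, and \ref{lem:priv:cs}---is supplied by the preceding development.
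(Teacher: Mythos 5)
Your proposal follows essentially the same route as the paper's proof: use \lemref{lem:beta:detect} to lower-bound $\xi^{2i}$ (and hence $x_k^2$) relative to $F_2(x)/2^j$, combine with the conditioned tail bound $F_2((S_j)_{1/(\alpha^2\beta'\eps^2)})\le\frac{200\log m}{2^j}F_2(x)$ to conclude $x_k$ clears the $\privcountsketch_j$ threshold, invoke \lemref{lem:priv:cs} for a $(1\pm\O{\alpha^2})$ estimate, and absorb the Laplace noise $\Lap(8/(\beta'\eps))$ using the medium-regime lower bound on $x_k$. You are somewhat more explicit than the paper in tracking the additive CountSketch error and in noting the branching condition $\widehat{x_k}>\log n/(\beta'\alpha\eps)$ must trigger, and your intermediate ratio $\frac{x_k^2}{F_2((S_j)_{\tail})}\ge\frac{\alpha^2\beta\eps^2}{200\log^3 m}$ is in fact the correct bookkeeping (the paper writes $\frac{\alpha^2\beta\eps^2}{200\log m}$, appearing to drop a $\log^2 m$), but these are bookkeeping-level refinements rather than a different argument.
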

\begin{proof}
Consider \algref{alg:hh:med}. 
By \lemref{lem:beta:detect}, $x_2^2\in\left[\frac{\alpha^2\beta\eps^2}{\log^2 m}\cdot\frac{F_2(x)}{2^j},\frac{\alpha^2\beta\eps^2}{\log^2 m}\cdot\frac{F_2(x)}{2^{j-1}}\right]$. 
Conditioned on the event that $F_2((S_j)_{1/(\alpha^2\beta'\eps^2)})\le\frac{200\log m}{2^j}\,F_2(x)$ for all $j\in[s]$, then $x_k^2\ge\frac{\alpha^2\beta\eps^2}{200\log m}\,F_2(S_j)$. 
We call $\privcountsketch$ with threshold $\alpha^2\beta'\eps^2=\O{\frac{\alpha^4\beta\eps^3}{\log^2 m}}$. 
Thus with high probability, the output $\widetilde{x_k}$ satisfies
\[(1-\O{\alpha^2})x_k\le\widetilde{x_k}\le x_k.\]
We then add Laplacian noise $\Lap\left(\frac{8}{\beta'\eps}\right)$ to $\widetilde{x_k}$ to form $\widehat{x_k}$. 
Since $x_k^2\ge\O{\frac{\log n}{\beta'\alpha^2\eps}}$, then with high probability, the Laplacian noise is at most an $\alpha^2$ fraction of $\widehat{x_k}$. 
Hence with high probability,
\[(1-\alpha^2)x_k\le\widehat{x_k}\le x_k.\]
\end{proof}
Unfortunately, \lemref{lem:med:freq:approx} only provides guarantees for the coordinates of the medium frequency levels that are sampled. 
Thus, we still need to use \lemref{lem:med:freq:approx} to show that a good estimator to the sizes of the medium frequency levels can be obtained from the estimates of the coordinates of the medium frequency levels that are sampled. 
In particular, we show that rescaling the empirical sizes of the medium frequency levels forms a good estimator to the actual sizes of the medium frequency levels. 
\begin{lemma}
\lemlab{lem:med:levels:approx}
Consider a $\beta$-important level $i$ with $\xi^{2i}\in\left[\frac{\beta\alpha^2\eps^2}{\log^2 m}\cdot\frac{F_2(x)}{2^j},\frac{\beta\alpha^2\eps^2}{\log^2 m}\cdot\frac{F_2(x)}{2^{j-1}}\right]$ for some integer $j>0$ and $\xi^i>\frac{\log n}{\beta'\alpha\eps}$. 
If $F_2((S_j)_{1/(\alpha^2\beta'\eps^2)})\le\frac{200\log m}{2^j}\,F_2(x)$ for all $j\in[s]$, then with high probability, \algref{alg:hh:med} outputs $\widehat{b_i}$ such that 
\[(1-\O{\alpha})b_i\le\widehat{b_i}\le b_i,\]
where $b_i$ is the size of level $i$. 
\end{lemma}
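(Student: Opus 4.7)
The plan is to decompose $\widehat{b_i}$ into three sources of error and control each separately: (a) the variance of subsampling $b_i$ coordinates at rate $2^{-j}$, (b) the per-coordinate frequency distortion from \lemref{lem:med:freq:approx}, and (c) bucket-boundary spillover between level $i$ and its neighbors, which is where the random offset $\gamma\in(1/2,1)$ earns its keep.

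First I would argue that sufficiently many coordinates of level $i$ are sampled into $S_j$. Because level $i$ is $\beta$-important and $\xi^{2i}$ lies in the stated range, \lemref{lem:beta:detect} gives $b_i \ge \frac{2^j\log^2 m}{\alpha^2\eps^2}$, so the expected number of sampled level-$i$ coordinates is $\Omega\!\left(\frac{\log^2 m}{\alpha^2\eps^2}\right)$. A Chernoff bound (with $\O{\log n}$-wise independence) then shows the sampled count is $(1\pm\O{\alpha})\cdot b_i/2^j$ with high probability, and an identical argument controls the sampled count in each neighboring level. Conditioned on the event of \lemref{lem:moment:approx}, \lemref{lem:med:freq:approx} applies to every sampled coordinate $k$ with $\xi^i \le x_k < \xi^{i+1}$, giving $(1-\alpha^2)x_k \le \widehat{x_k} \le x_k$, and in particular $(1-\alpha^2)^2\gamma\xi^{2i} \le \widehat{x_k}^2 < \gamma\xi^{2i+2}$, so an unaffected coordinate of level $i$ lands in bucket $i$ as long as $\widehat{x_k}^2$ is not within a $(1-\O{\alpha^2})$ factor of the lower endpoint $\gamma\xi^{2i}$.

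The core step is then the bucket-boundary argument. A sampled coordinate is \emph{misbucketed} only if its true squared value $x_k^2$ lies in a $(1\pm\O{\alpha^2})$ window around either $\gamma\xi^{2i}$ or $\gamma\xi^{2i+2}$. Since $\gamma$ is chosen uniformly from $(1/2,1)$ independently of the data, and since $x_k^2/\xi^{2i}$ is contained in an interval of length $\xi^2-1=\O{\eps}$, the probability (over $\gamma$) that any particular $k$ is misbucketed is $\O{\alpha^2/\eps}\ll\alpha$. By Markov, with high probability only an $\O{\alpha}$ fraction of the sampled level-$i$ coordinates are misbucketed, and similarly only an $\O{\alpha}$ fraction of sampled coordinates from levels $i\pm 1$ can spill into bucket $i$. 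Combining this with the Chernoff step, the raw count $\widetilde{b_i}$ satisfies $(1-\O{\alpha})\frac{b_i}{2^j} \le \widetilde{b_i} \le (1+\O{\alpha})\frac{b_i}{2^j}$, and rescaling by $2^j/(1+\O{\alpha})$ delivers both the lower bound $(1-\O{\alpha})b_i \le \widehat{b_i}$ and, by absorbing the $(1+\O{\alpha})$ overcount into the denominator, the upper bound $\widehat{b_i} \le b_i$.

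The main obstacle I expect is the boundary spillover: because \lemref{lem:med:freq:approx} only guarantees one-sided accuracy ($\widehat{x_k}\le x_k$), coordinates from level $i+1$ can only ever be pushed \emph{into} bucket $i$, never out, which biases $\widetilde{b_i}$ upward. The random $\gamma$ is essential here—without it, an adversarial choice of frequencies placing many coordinates of levels $i$ and $i+1$ exactly on the boundary could inflate $\widetilde{b_i}$ by a constant factor. The randomization smears boundaries uniformly over a multiplicative $(1/2,1)$ range, reducing the biased spillover to an $\O{\alpha}$ fraction, which is then neatly absorbed by the rescaling factor $1+\O{\alpha}$ to produce the asymmetric guarantee in the lemma.
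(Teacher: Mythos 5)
Your proof follows the same skeleton as the paper's: (i) a Chernoff bound with $\O{\log n}$-wise limited independence to show that the number $N_j$ of level-$i$ coordinates sampled into $S_j$ satisfies $2^j N_j = (1\pm\O{\alpha})b_i$, and (ii) an argument that the randomization $\gamma\in(1/2,1)$ makes $\widetilde{b_i}\approx N_j$. The paper's proof simply cites \cite{BlasiokBCKY17} for step (ii); you reconstruct it from scratch. Your reconstruction is on the right track: you correctly identify the one-sided bias induced by $\widehat{x_k}\le x_k$, the role of the $\frac{2^j}{1+\O{\alpha}}$ rescaling in absorbing it, and the per-coordinate misbucketing probability $\O{\alpha^2/\eps}$ over $\gamma$ (the $1/\eps$ coming from the $\O{1/\log\xi^2}$ density of level boundaries in log-space).

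However, the reconstruction has two gaps. First, Markov's inequality on the expected number of misbucketed coordinates gives only a \emph{constant} success probability, not ``with high probability'' as you (and the lemma statement) claim. The paper acknowledges precisely this later: in the proof of \thmref{thm:main:formal} the misclassification error is bounded only with probability $0.99$ by Markov and then boosted to high probability via the median of $\O{\log m}$ independent copies. A self-contained proof of this lemma would have to say the same, or argue the bound at the level of the final norm estimate where the median aggregation is available.

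Second, and more substantively, your bound ``only an $\O{\alpha}$ fraction of sampled coordinates from levels $i\pm 1$ spill into bucket $i$'' is a fraction of $N_j^{(i\pm1)}$, not of $N_j^{(i)}$. For the upper bound $\widehat{b_i}\le b_i$ you need the overcount $\widetilde{b_i}-N_j^{(i)}$ to be $\O{\alpha}N_j^{(i)}$, and this fails if $b_{i+1}\gg b_i$. Importance only guarantees $b_{i+1}<b_i/\beta$, and $\beta=\O{\alpha^5/(\mmc(L)^2\log^5 m)}\ll\alpha$, so the expected spillover $\O{\alpha^2}\,N_j^{(i+1)}$ can be far larger than $N_j^{(i)}$. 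Moreover, the lemma (via \lemref{lem:reconstruction}) needs $\widehat{b_i}\le b_i$ for \emph{all} levels, not just the important ones, where no size ratio bound is available. The paper again sidesteps this by measuring misclassification error as a fraction of the norm $L(V(x))$ rather than per-level in \thmref{thm:main:formal}, and otherwise deferring the per-level claim to \cite{BlasiokBCKY17}. A complete reconstruction would need an additional argument — e.g., that the contribution of spillers from a much larger neighboring level to bucket $i$ is still a small fraction of the norm, or that boundary coordinates contribute nearly equal weight to adjacent buckets so misplacement costs at most a factor $\xi^2$ — neither of which is supplied.
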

\begin{proof}
Suppose $i$ is a $\beta$-important level. 
Then by \lemref{lem:beta:detect} and a shifting of the index $j$, $b_i\ge\O{\frac{2^j\log^2 m}{\alpha^2\eps^2}}$. 
Thus in $S_j$, the expected number of items $E_j$ from level $i$ is at least $\frac{\log^2 m}{\alpha^2\eps^2}$ and the variance $V_j$ is at most $E_j$. 
Hence by Chernoff bounds with $\O{\log n}$-wise limited independence, we have that the number of items $N_j$ from level $i$ satisfies 
\[(1-\O{\alpha})b_i\le 2^j\cdot N_j\le(1+\O{\alpha})b_i,\]
with high probability. 
\cite{BlasiokBCKY17} show that due to the uniformly random chosen $\gamma\in(1/2,1)$, we further have
\[(1-\O{\alpha})N_j\le(1+\O{\alpha})\widehat{b_i}\le(1+\O{\alpha})N_j,\]
with high probability. 
Since $s\le2\log m$, then by a union bound over all $j\in[s]$, we have that with high probability, \algref{alg:hh:med} outputs $\widehat{b_i}$ such that 
\[(1-\O{\alpha})b_i\le\widehat{b_i}\le b_i.\]
\end{proof}
We now show that \algref{alg:hh:med} preserves differential privacy.
\begin{lemma}
\lemlab{lem:priv:med}
\algref{alg:hh:med} is $\left(\frac{\eps}{4},\frac{\delta}{4}\right)$-differentially private for $\delta=\frac{1}{\poly(m)}$. 
\end{lemma}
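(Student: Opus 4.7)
The plan is to mirror the structure of the proof of \lemref{lem:priv:high}, reducing the privacy analysis to an application of the Laplace mechanism at each released coordinate together with composition across substreams. By \thmref{thm:dp:comp} (post-processing), it suffices to show that the collection of noisy frequency estimates $\widetilde{x_k}$ produced across all substreams $S_j$ is $\left(\frac{\eps}{4},\frac{\delta}{4}\right)$-differentially private, since the level-size estimates $\widetilde{b_i}$ and $\widehat{b_i}$ are computed deterministically from these noisy frequencies.

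Next, I would fix a substream index $j$ and analyze a single noisy release. For any two neighboring streams $\frakS$ and $\frakS'$ differing in a single update, using common randomness for the sampling that forms $S_j$, either the differing coordinate is not sampled (in which case $S_j=S_j'$) or it is sampled (in which case $S_j$ and $S_j'$ still differ in exactly one update). In either case, the induced frequency vectors satisfy the hypothesis of \lemref{lem:sens:countsketch}, so each per-coordinate estimate produced by $\privcountsketch_j$ has sensitivity at most $2$ with respect to the original stream. Consequently, adding $\Lap\!\left(\frac{8}{\beta'\eps}\right)$ noise to any single released estimate is the Laplace mechanism at sensitivity $2$ and scale $\frac{8}{\beta'\eps}$, producing a $\left(\frac{\beta'\eps}{4},0\right)$-differentially private release. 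The failure probability $\frac{1}{\poly(m)}$ of $\privcountsketch_j$ itself contributes a $\delta$ term.

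The third step is composition. For a fixed $j$, the only coordinates whose frequencies get noised and released are the heavy hitters reported by $\privcountsketch_j$ at threshold $\alpha^2\beta'\eps^2$ that also pass the post-hoc threshold $\frac{\log n}{\beta'\alpha\eps}$; the number of such releases is bounded by the standard heavy-hitter count argument. Summing over the $\O{\log n}$ active substreams $j$ with $2^j>\frac{\log n}{\beta'\alpha\eps}$ and applying \thmref{thm:dp:comp} yields overall $\left(N\cdot\frac{\beta'\eps}{4},\,N\cdot\frac{1}{\poly(m)}\right)$-DP, where $N$ is the total number of noisy releases. The parameters $\beta=\O{\frac{\alpha^5}{\mmc(L)^2\log^5 m}}$ and $\beta'=\O{\frac{\alpha^3\beta\eps^2}{\log^2 m}}$ were chosen so that $N\cdot\frac{\beta'\eps}{4}\le\frac{\eps}{4}$, and the failure contribution is at most $\frac{\delta}{4}$ for $\delta=\frac{1}{\poly(m)}$.

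The main obstacle is the composition bookkeeping in the last step: verifying that the product of the number of released coordinates (summed over substreams) with the per-release privacy loss $\frac{\beta'\eps}{4}$ remains below $\frac{\eps}{4}$, despite the additional factor of $\O{\log n}$ introduced by iterating over substream indices compared to the single-stream analysis in \lemref{lem:priv:high}. This is exactly why $\beta'$ carries an extra $\frac{1}{\log^2 m}$ factor relative to the corresponding choice in \algref{alg:hh:high}, so that the bookkeeping goes through with basic composition without requiring advanced composition or privacy amplification by subsampling.
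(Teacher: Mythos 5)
Your proposal follows the same route as the paper's proof: both reduce the privacy claim to (i) the per-coordinate sensitivity bound of $\countsketch$ from \lemref{lem:sens:countsketch}, (ii) the Laplace mechanism at scale $\frac{8}{\beta'\eps}$ giving $\frac{\beta'\eps}{4}$-DP per released estimate, and (iii) basic composition via \thmref{thm:dp:comp} over the bounded number of released estimates. Your additional observations — that the level-size estimates $\widetilde{b_i}$ and $\widehat{b_i}$ are post-processing of the noised frequencies, and that coupling the subsampling randomness keeps neighboring substreams $S_j, S'_j$ within the $\max(\|\cdot\|_0,\|\cdot\|_1)\le 2$ hypothesis of \lemref{lem:sens:countsketch} — are correct and make explicit steps the paper's proof leaves implicit.

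One factual slip to correct: you attribute the absorption of the extra $\O{\log n}$ factor from iterating over substreams to $\beta'$ ``carrying an extra $\frac{1}{\log^2 m}$ factor'' relative to \algref{alg:hh:high}. Comparing line 1 of the two algorithms, both already set $\beta'$ with the same $\frac{1}{\log^2 m}$ denominator; what actually changes is the power of $\alpha$ ($\alpha^3$ in \algref{alg:hh:med} versus $\alpha^2$ in \algref{alg:hh:high}). The substance of your bookkeeping argument — that the chosen $\beta'$ leaves slack to cover the number of releases summed over substreams — still goes through, but the identification of which factor provides the slack should be fixed.
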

\begin{proof}
By \lemref{lem:sens:countsketch}, the sensitivity of $\privcountsketch$ is at most $2$ and the failure probability is $\frac{1}{\poly(m)}$. 
Thus by adding Laplacian noise $\Lap\left(\frac{8}{\beta'\eps}\right)$ to $\widetilde{x_k}$, each estimated frequency is $\left(\frac{\beta'\eps}{4},\frac{\delta}{4\beta}\right)$-differentially private for $\delta=\frac{1}{\poly(m)}$. 
Since $\privcountsketch$ with threshold $\beta'$ can release at most $\frac{1}{\beta}$ estimated frequencies, then by \thmref{thm:dp:comp}, \algref{alg:hh:med} is $\left(\frac{\eps}{4},\frac{\delta}{4}\right)$. 
\end{proof}
It remains to analyze the space complexity of \algref{alg:hh:med}. 
\begin{lemma}
\lemlab{lem:space:med}
\algref{alg:hh:med} uses space $\mmc(L)^2\cdot\poly\left(\frac{1}{\alpha},\frac{1}{\eps},\log m\right)$. 
\end{lemma}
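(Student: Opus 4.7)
The plan is a direct space-accounting argument: enumerate the subroutines invoked by \algref{alg:hh:med}, bound the cost of each using earlier lemmas, and then substitute the values of $\beta$ and $\beta'$ set at the top of the algorithm. The memory footprint decomposes into three parts: (i) $s = \O{\log n}$ parallel instances of $\privcountsketch_j$, one for each subsampled substream $S_j$; (ii) the shared randomness used to perform coordinate subsampling and to realize the $\O{\log n}$-wise independent hash families invoked in \lemref{lem:moment:approx}; and (iii) light bookkeeping for the estimates $\widetilde{b_i}$, $\widehat{b_i}$, the sampled value of $\gamma$, and the Laplacian noise draws. I expect contribution (i) to dominate, so the bulk of the work is isolating its cost.

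For (i), I would apply \lemref{lem:priv:cs}: each $\privcountsketch$ with threshold $\eta^2$ and accuracy $\nu$ uses $\O{\frac{1}{\eta^2 \nu^2}\log^2 n}$ bits. In \algref{alg:hh:med}, the threshold is $\eta^2 = \alpha^2\beta'\eps^2$, and the downstream guarantee in \lemref{lem:med:freq:approx} (which needs $(1-\O{\alpha^2})$-accurate estimates of $x_k$) forces $\nu = \Theta(\alpha^2)$. Substituting these values gives $\O{\frac{1}{\alpha^6\beta'\eps^2}\log^2 n}$ bits per instance. Multiplying by the $\O{\log n}$ parallel copies, then unrolling $\beta' = \O{\frac{\alpha^3\beta\eps^2}{\log^2 m}}$ and $\beta = \O{\frac{\alpha^5}{\mmc(L)^2\log^5 m}}$, collapses every factor except $\mmc(L)^2$ into a polynomial in $\frac{1}{\alpha}$, $\frac{1}{\eps}$, and $\log m$, which matches the stated bound.

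For (ii), the subsampling hash families and the $\O{\log n}$-wise independent hash functions each require only $\polylog(m)$ bits of seed, so their contribution is absorbed. For (iii), the number of $\widehat{b_i}$ values that need to be tracked is at most $\O{\log_{\xi}(2m)} = \poly(\frac{1}{\alpha},\log m)$ per subsampling level, each stored to $\poly(\log m)$ precision, and the Laplacian draws likewise take $\poly(\log m, \frac{1}{\beta'\eps})$ bits. Summing the three contributions yields the claimed $\mmc(L)^2\cdot\poly(\frac{1}{\alpha},\frac{1}{\eps},\log m)$.

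I do not anticipate a genuine obstacle here; the argument is routine accounting. The only point deserving care is matching the accuracy parameter $\nu$ passed to $\privcountsketch$ to the actual precision demanded in the utility lemmas of this subsection -- underestimating $\nu$ would save space but break \lemref{lem:med:freq:approx}, while overestimating would only worsen the polylogarithmic factors without affecting the $\mmc(L)^2$ leading term. Taking $\nu = \Theta(\alpha^2)$ is what produces the $\frac{1}{\alpha^6}$ factor in the per-instance bound, and no further blow-up appears.
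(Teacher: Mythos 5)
Your proof is correct and takes essentially the same approach as the paper: bound the cost of the $s=\O{\log n}$ $\privcountsketch$ instances via the per-instance bound, substitute the values of $\beta'$ and $\beta$, and observe that only the $\mmc(L)^2$ factor survives outside the $\poly(\frac{1}{\alpha},\frac{1}{\eps},\log m)$. Your writeup is somewhat more careful than the paper's (which does not explicitly account for the randomness/bookkeeping terms and has a small typo in the threshold value), but the argument is the same.
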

\begin{proof}
The space complexity follows from running $s$ instances of $\privcountsketch$ with threshold $\alpha^2\beta'$ and failure probability $\frac{1}{\poly(m)}$, where $\beta'=\O{\frac{\alpha^2\beta\eps^2}{\log^2 m}}$ and $\beta=\O{\frac{\alpha^5}{\mmc(L)^2\log^5 m}}$. 
Since $s=\O{\log n}$ and we assume $n\le m$ so that $\O{\log n}=\O{\log m}$, then the space complexity follows. 
\end{proof}

\subsection{Recovery of Low Frequency Levels}
\seclab{sec:alg:low}
In this section, we describe our algorithm for recovering the low frequency levels, whose coordinates have magnitude small enough that we cannot add Laplacian noise to their frequencies without affecting the corresponding level set sizes. 
We instead report the sizes of the level sets for the low frequency levels rather than the approximate frequencies of the heavy-hitters. 
Thus we must add Laplacian noise to the sizes of the level sets; we show that the $L_1$ sensitivity for the level set estimations is particularly small for the low frequency levels and thus the Laplacian noise does not greatly affect the estimates of the level set sizes. 
We note that this approach does not work for the high frequency levels because the high frequency levels may have small level set sizes, so that adding Laplacian noise to the sizes can significantly affect the resulting estimates of the level set sizes. 
Similarly, it is more challenging to argue the low $L_1$ sensitivity for the level set estimations for the medium frequency levels. 
Hence, both the algorithm and analysis are especially well-catered to the low frequency levels. 
We give the algorithm in full in \algref{alg:hh:low}. 

\begin{algorithm}[!htb]
\caption{Algorithm to privately estimate the low levels}
\alglab{alg:hh:low}
\begin{algorithmic}[1]
\Require{Privacy parameter $\eps>0$, accuracy parameter $\alpha\in(0,1)$}
\Ensure{Private estimations of the sizes of the low frequency levels}
\State{$\beta\gets\O{\frac{\alpha^5}{\mmc(L)^2\log^5 m}}$, $\beta'\gets\O{\frac{\alpha^2\beta\eps}{\log n}}$, $\xi\gets(1+\O{\eps})$}
\State{$\gamma\gets(1/2,1)$ uniformly at random, $\ell\gets\ceil{\log_{\xi}(2m)}$, $s\gets\O{\log n}$}
\For{$j\in[s]$ with $2^j\le\frac{\log n}{\beta'\alpha\eps}$}
\State{Form stream $S_j$ by sampling elements of $[n]$ with probability $\frac{1}{2^j}$}
\State{Run $\privcountsketch_j$ on stream $S_j$ with threshold $\beta'':=\O{\frac{\beta'\alpha^2\eps^3}{\log^2 n}}$}
\For{each heavy-hitter $k\in[n]$ reported by $\privcountsketch_j$}
\State{Let $\widehat{x_k}$ be the frequency estimated by $\privcountsketch_j$}
\EndFor
\For{$i\in[\ell]$ with $\O{\frac{\log n}{\beta'\alpha^2\eps}}\ge 2^{j+1}>\gamma\xi^{2i}\ge 2^j$}
\State{Let $\widetilde{b_i}$ be the number of indices $k\in[n]$ such that $\gamma\xi^{2i}\le\widehat{x_k}<\gamma\xi^{2i+2}$}
\State{$\widehat{b_i}\gets\frac{2^j}{(1+\O{\alpha})}\left(\widetilde{b_i}+\Lap\left(\frac{8}{\eps}\right)\right)$}
\State{\Return $\widehat{b_i}$}
\EndFor
\EndFor
\end{algorithmic}
\end{algorithm}

We first show that the estimates of the level set sizes for the low frequency levels are accurate. 
\begin{lemma}
\lemlab{lem:low:levels:approx}
Consider a $\beta$-important level $i$ with $\xi^{2i}\in\left[\frac{\beta\alpha^2\eps^2}{\log^2 m}\cdot\frac{F_2(x)}{2^j},\frac{\beta\alpha^2\eps^2}{\log^2 m}\cdot\frac{F_2(x)}{2^{j-1}}\right]$ for some integer $j>0$ and $\xi^i\le\frac{\log n}{\beta'\alpha\eps}$. 
If $F_2((S_j)_{1/(\alpha^2\beta'\eps^2)})\le\frac{200\log m}{2^j}\,F_2(x)$ for all $j\in[s]$, then with high probability, \algref{alg:hh:low} outputs $\widehat{b_i}$ such that 
\[(1-\O{\alpha})b_i\le\widehat{b_i}\le b_i,\]
where $b_i$ is the size of level set $i$. 
\end{lemma}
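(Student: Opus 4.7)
The plan is to mirror the proof of \lemref{lem:med:levels:approx} for medium frequency levels, but exploit the low-frequency regime $\xi^i \le \log n/(\beta'\alpha\eps)$. In this regime, running $\privcountsketch$ at the aggressive accuracy threshold $\beta'' = \O{\beta'\alpha^2\eps^3/\log^2 n}$ forces the additive error on each surviving frequency estimate to be smaller than $\frac{1}{2}$, so that after rounding the returned $\widehat{x_k}$ equals $x_k$ exactly. Crucially, because individual frequencies in this regime are too small to tolerate coordinate-level Laplace noise without scrambling their level assignment, noise is instead injected at the aggregate level-set-size stage (scale $8/\eps$), where sensitivity is low.

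First I would invoke \lemref{lem:beta:detect} (with a shift of the index $j$) to conclude from $\beta$-importance that $b_i \ge \Omega(2^j \log^2 m/(\alpha^2\eps^2))$. Then by subsampling at rate $1/2^j$ and applying Chernoff bounds with $\O{\log n}$-wise limited independence, the number $N_j$ of level-$i$ coordinates actually present in $S_j$ concentrates within a $(1\pm\O{\alpha})$ factor of its expectation $b_i/2^j$ with high probability. Conditioned on the tail hypothesis $F_2((S_j)_{1/(\alpha^2\beta'\eps^2)}) \le (200\log m)/2^j \cdot F_2(x)$, \lemref{lem:priv:cs} then implies that $\privcountsketch$ reports every such coordinate and returns a frequency estimate that rounds to $x_k$; so the raw bucket count $\widetilde{b_i}$ equals $N_j$ up to boundary effects absorbed by the uniformly random shift $\gamma \in (1/2,1)$, exactly as argued in \cite{BlasiokBCKY17}.

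It remains to bound the effect of the Laplace noise $\Lap(8/\eps)$ added to $\widetilde{b_i}$. Since $\widetilde{b_i} = \Theta(\log^2 m/(\alpha^2\eps^2))$ while $|\Lap(8/\eps)| = \O{\log m/\eps}$ with high probability, the additive noise is at most an $\O{\alpha}$ fraction of $\widetilde{b_i}$. Rescaling by $2^j/(1+\O{\alpha})$ then yields $(1-\O{\alpha}) b_i \le \widehat{b_i} \le b_i$, and a union bound over $j \in [s]$ with $s = \O{\log n}$ closes the argument.

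The main obstacle is the double quantitative accounting that distinguishes the low-frequency case from the medium one: the accuracy parameter $\beta''$ must be tight enough that individual frequency estimates round exactly to $x_k$ and thus preserve level membership, while simultaneously the Laplace noise on the \emph{aggregate} count must be dwarfed by the $\beta$-importance lower bound on $N_j$. Verifying both conditions at once is precisely what forces the particular choices of $\beta'$, $\beta''$, and the regime constraint $2^j \le \log n/(\beta'\alpha\eps)$ that make \algref{alg:hh:low} work.
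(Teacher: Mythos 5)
Your proposal follows the same line of argument as the paper: lower bound $b_i$ via the shifted index in \lemref{lem:beta:detect}, apply Chernoff with $\O{\log n}$-wise independence to show $N_j$ concentrates around $b_i/2^j$, appeal to \cite{BlasiokBCKY17} for the random-boundary $\gamma$ argument, absorb the $\Lap(8/\eps)$ noise using the size lower bound, and union bound over $j\in[s]$. Your extra emphasis on the tight $\beta''$ threshold forcing exact recovery of $x_k$ is sound but is really the engine of the privacy lemma (\lemref{lem:priv:low}) rather than the accuracy bound here; aside from that stylistic shift, the proof matches the paper's.
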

\begin{proof}
Suppose $i$ is a $\beta$-important level. 
Hence by a shifting of the index $j$ in \lemref{lem:beta:detect}, we have that $b_i\ge\O{\frac{2^j\log^2 m}{\alpha^2\eps^2}}$. 
Therefore, the expected number of items $E_j$ from level $i$ sampled in the substream $S_j$ is at least $\frac{\log^2 m}{\alpha^2\eps^2}$ and the variance $V_j$ is at most $E_j$. 
Thus by Chernoff bounds with $\O{\log n}$-wise limited independence, the number of items $N_j$ from level $i$ satisfies 
\[(1-\O{\alpha})b_i\le 2^j\cdot N_j\le(1+\O{\alpha})b_i,\]
with high probability. 
\cite{BlasiokBCKY17} show that due to the uniformly random chosen $\gamma\in(1/2,1)$, we further have
\[(1-\O{\alpha})N_j\le(1+\O{\alpha})\widehat{b_i}\le(1+\O{\alpha})N_j,\]
with high probability. 
Since $s\le2\log m$ and $\Lap\left(\frac{8}{\eps}\right)$ is at most an $\eps$-fraction of $b_i\ge\O{\frac{2^j\log^2 m}{\alpha^2\eps^2}}$ with high probability, then by a union bound over all $j\in[s]$, we have that with high probability, \algref{alg:hh:med} outputs $\widehat{b_i}$ such that 
\[(1-\O{\alpha})b_i\le\widehat{b_i}\le b_i.\]
\end{proof}
We then show that \algref{alg:hh:low} is differentially private. 
\begin{lemma}
\lemlab{lem:priv:low}
\algref{alg:hh:low} is $\left(\frac{\eps}{4},\frac{\delta}{4}\right)$-differentially private for $\delta=\frac{1}{\poly(m)}$. 
\end{lemma}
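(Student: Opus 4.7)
The plan is to bound the $L_1$ sensitivity of the count vector $(\widetilde{b_i})_i$ and then apply the Laplace mechanism, in contrast to \lemref{lem:priv:high} and \lemref{lem:priv:med}, which relied on the per-coordinate sensitivity of the estimates $\widehat{x_k}$. The essential difference is that \algref{alg:hh:low} injects Laplacian noise into the bucket counts rather than into individual frequency estimates, so I need a sensitivity bound at the level of the count vector.

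First, I would invoke \lemref{lem:sens:countsketch} to conclude that for neighboring streams $\frakS$ and $\frakS'$, every $\countsketch$ estimate satisfies $|\widehat{x_k} - \widehat{x'_k}| \le 2$. Moreover, of the $r = \O{\log n}$ hash tables comprising the sketch, only those tables where $h^{(j)}(u) = h^{(j)}(k)$ or $h^{(j)}(u') = h^{(j)}(k)$ can change the estimate for coordinate $k$, where $u$ and $u'$ are the differing updates. Taking the median across tables, only the estimates for $k \in \{u, u'\}$ can shift with high probability over the hash functions, so at most two coordinates change bucket assignments. Each such coordinate contributes at most $2$ to the $L_1$ sensitivity of $(\widetilde{b_i})_i$ (decrementing its old bucket by $1$ and incrementing its new bucket by $1$), yielding a total $L_1$ sensitivity of at most $4$ for a fixed $j$.

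With this constant sensitivity, the Laplace mechanism with scale $8/\eps$ per entry ensures that the release of $(\widetilde{b_i})_i$ for a fixed $j$ is $(\O{\eps}, \O{\delta})$-differentially private, and the subsequent scaling by $2^j/(1 + \O{\alpha})$ is deterministic post-processing that preserves privacy by \thmref{thm:dp:comp}. Composing over the $\O{\log\log n}$ relevant values of $j$ satisfying $2^j \le \frac{\log n}{\beta'\alpha\eps}$ then yields the desired $(\eps/4, \delta/4)$-DP guarantee, with the $\delta/4$ slack absorbing both the composition budget and the $\frac{1}{\poly(m)}$ failure probabilities of the $\privcountsketch$ subroutines.

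The main obstacle is handling the hash-function randomness in the sensitivity argument: a worst-case sensitivity (without conditioning on good hash events) could be much larger, since many coordinates could collide with $u$ or $u'$ in enough tables to flip their medians. To rigorously translate the high-probability sensitivity bound into differential privacy, I would fold the corresponding failure event into the additive $\delta$ parameter, exactly parallel to the treatment of the $\frac{1}{\poly(m)}$ failure probability in \lemref{lem:priv:high} and \lemref{lem:priv:med}.
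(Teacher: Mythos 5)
Your approach diverges from the paper's and contains a gap that the paper's proof is specifically designed to avoid. The paper's key observation is that \algref{alg:hh:low} runs $\privcountsketch_j$ with the \emph{much more aggressive} threshold $\beta'' = \O{\frac{\beta'\alpha^2\eps^3}{\log^2 n}}$ (note: not $\beta'$), and this is chosen precisely so that for every low-frequency coordinate $k$ with $x_k \le \O{\frac{\log n}{\beta'\alpha^2\eps}}$, the additive error of $\countsketch$ drops below $1/2$ and therefore $\privcountsketch_j$ returns $x_k$ \emph{exactly}. Once the estimates for light coordinates are exact, sensitivity is immediate: for $k \notin \{u,u'\}$ we have $x_k = x'_k$, so the level-set assignment is literally unchanged, and only $u$ and $u'$ can migrate between adjacent buckets. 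This yields an $L_1$ sensitivity of $\O{1}$ on the count vector $(\widetilde{b_i})_i$ with no conditioning on the hash functions at all, after which adding $\Lap(8/\eps)$ gives the claimed privacy.

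Your proposal instead relies on the claim that, with high probability over the hash functions, only the medians for $k \in \{u,u'\}$ shift. This is not what \lemref{lem:sens:countsketch} gives you, and it is not true in general: \lemref{lem:sens:countsketch} guarantees only that \emph{every} median estimate moves by at most $2$, and a coordinate $k \notin \{u,u'\}$ that collides with $u$ or $u'$ in even a single table can have its median perturbed (by \lemref{lem:sens:ams:three}, a change in one of the $r$ table-wise estimates is enough to move the median). Since a constant fraction of $[n]$ collides with $u$ per table, you cannot rule out that polynomially many coordinates see their estimates jitter by $\pm 1$ or $\pm 2$, and any of those near a bucket boundary will flip buckets. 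Bounding how many do — and how to fold that into $\delta$ when the ``good'' event depends on $(u,u')$ — is substantially more delicate than you sketch, and is exactly the difficulty the paper sidesteps by forcing exactness. A secondary, minor issue: the number of relevant substreams $j$ with $2^j \le \frac{\log n}{\beta'\alpha\eps}$ is not $\O{\log\log n}$; since $1/\beta'$ is itself $\poly(\mmc(L), 1/\alpha, 1/\eps, \log m)$, the count of $j$'s can be considerably larger (up to $\O{\log n}$), though this only affects constants in the composition and is not the core problem.
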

\begin{proof}
Note that since each instance of $\privcountsketch_j$ uses threshold $\beta'':=\O{\frac{\beta'\alpha^2\eps^3}{\log^2 n}}$ on a stream $S_j$ with $F_2(S_j)\le\frac{200\log m}{2^j}\,F_2(x)$, then for any $k\in[n]$ with $x_k\le\O{\frac{\log n}{\beta'\alpha^2\eps}}$, we have that $\privcountsketch_j$ outputs $x_k$ exactly. 
Hence, at most two estimates of the sizes of the level sets $\widehat{b_i}$ can change, and then can change by at most one. 
Thus the sensitivity is at most $2$, so it suffices to add Laplcian noise $\Lap\left(\frac{8}{\eps}\right)$ to each estimate $\widehat{b_i}$ to obtain $\left(\frac{\eps}{4},\frac{\delta}{4}\right)$-differentially private for $\delta=\frac{1}{\poly(m)}$. 
\end{proof}
Finally, we argue the space complexity of \algref{alg:hh:low}. 
\begin{lemma}
\lemlab{lem:space:low}
\algref{alg:hh:low} uses space $\mmc(L)^2\cdot\poly\left(\frac{1}{\alpha},\frac{1}{\eps},\log m\right)$. 
\end{lemma}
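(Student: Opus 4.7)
The plan is to follow the same template as the proofs of \lemref{lem:space:high} and \lemref{lem:space:med}, which both reduce the space analysis to invoking \lemref{lem:priv:cs} on an appropriate number of instances of $\privcountsketch$ with appropriately small threshold parameter. Here the algorithm launches one instance of $\privcountsketch_j$ for each sampling level $j\in[s]$ with $2^j\le\frac{\log n}{\beta'\alpha\eps}$, so at most $s=\O{\log n}$ instances in total; each is invoked with threshold $\beta''=\O{\frac{\beta'\alpha^2\eps^3}{\log^2 n}}$ and accuracy parameter (the $\nu$ in \lemref{lem:priv:cs}) that is a constant (or an $\O{1}$ factor of $\alpha$, the effect on the final bound being at most $\poly(1/\alpha)$).

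First I would substitute $\beta'=\O{\frac{\alpha^2\beta\eps}{\log n}}$ and $\beta=\O{\frac{\alpha^5}{\mmc(L)^2\log^5 m}}$ into the expression for $\beta''$, which gives $\beta''=\frac{1}{\mmc(L)^2}\cdot\frac{1}{\poly\left(\frac{1}{\alpha},\frac{1}{\eps},\log m\right)}$ after noting that $\log n\le\log m$ under the assumed regime. By \lemref{lem:priv:cs}, a single instance of $\privcountsketch_j$ with threshold $\beta''$ and constant relative accuracy uses $\O{\frac{1}{\beta''}\log^2 n}$ bits of space (modulo polynomial factors in $1/\alpha,1/\eps$ absorbed into the $\poly$); inverting the bound on $\beta''$, this is at most $\mmc(L)^2\cdot\poly\left(\frac{1}{\alpha},\frac{1}{\eps},\log m\right)$ bits.

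Then I would multiply by the $s=\O{\log n}$ copies being run in parallel, which contributes only an additional logarithmic factor that is absorbed into the $\polylog m$ part of the $\poly$ expression. The auxiliary state maintained outside of the $\privcountsketch_j$ instances (the random shift $\gamma$, the sampling hash functions, the counters for the $\widetilde{b_i}$, and the Laplace noise $\Lap(8/\eps)$ added to each $\widehat{b_i}$) requires only $\polylog(m)+\O{\log(1/\eps)}$ bits, so it does not affect the leading $\mmc(L)^2$ factor. Summing gives the claimed bound.

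There is really no main obstacle; this is a bookkeeping calculation parallel to \lemref{lem:space:high} and \lemref{lem:space:med}. The only point worth being careful about is verifying that the $1/\beta''$ blowup due to the smaller threshold here (compared to the medium-level case) still depends on $\mmc(L)$ only through a $\mmc(L)^2$ factor, which follows because $\beta$ enters $\beta''$ linearly and $\beta$ itself carries the $1/\mmc(L)^2$ factor.
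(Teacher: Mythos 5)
Your proposal is correct and takes essentially the same bookkeeping approach as the paper: count the $s=\O{\log n}$ instances of $\privcountsketch$, bound each instance's space via \lemref{lem:priv:cs} after substituting the threshold and $\beta$ dependence, and absorb auxiliary state into the $\poly$ factor. If anything, you are slightly more careful than the paper's own one-paragraph proof, which states the threshold as $\alpha^2\beta'$ with $\beta'=\O{\frac{\alpha^2\beta\eps^2}{\log^2 m}}$ (the parameters of \algref{alg:hh:med}) rather than the threshold $\beta''=\O{\frac{\beta'\alpha^2\eps^3}{\log^2 n}}$ with $\beta'=\O{\frac{\alpha^2\beta\eps}{\log n}}$ actually used in \algref{alg:hh:low}; you correctly track the latter and observe, as needed, that $\mmc(L)$ still enters only through $\beta$, giving the $\mmc(L)^2$ factor.
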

\begin{proof}
Similar to \algref{alg:hh:med}, the space complexity follows as a result of running $s$ instances of $\privcountsketch$ with threshold $\alpha^2\beta'$ and failure probability $\frac{1}{\poly(m)}$, where $\beta'=\O{\frac{\alpha^2\beta\eps^2}{\log^2 m}}$ and $\beta=\O{\frac{\alpha^5}{\mmc(L)^2\log^5 m}}$. 
Since $s=\O{\log n}$ and we assume $n\le m$ so that $\O{\log n}=\O{\log m}$, then the space complexity follows. 
\end{proof}

\subsection{Putting Things Together}
We would like to combine the subroutines from the previous sections to output a private dataset for symmetric norm estimation. 
Thus it remains to describe how to privately partition the coordinates into the high, medium, and low frequency levels. 
To that end, we remark that by \lemref{lem:sens:countsketch}, the sensitivity of $\privcountsketch$ in \algref{alg:cs} is at most $2$. 
Moreover, although $\privcountsketch$ actually provides an estimated frequency for each coordinate, for our purposes, we only need estimated frequencies for the $L_2$-heavy hitters and there are at most $K:=\O{\frac{1}{\eta^2}}$ possible $L_2$-heavy hitters with whichever threshold $\eta$ that we choose, e.g., $\eta=\alpha^2\beta'$ in \algref{alg:hh:high}. 
Thus it suffices to observe that we can privately partition the coordinates into the high, medium, and low frequency levels by first privately outputting the top $K$ estimated frequencies and then partitioning the coordinates according to their noisy estimated frequencies, which can be viewed as post-processing. 
In particular,~\cite{QiaoSZ21} observes that it suffices to add Laplacian noise with scale $\frac{8}{\eta\eps}$ to each of the frequencies and then outputting the top $K$ noisy estimated frequencies to achieve $\frac{\eps}{4}$-differential privacy. 

We now finally put together the results from the previous sections to show the following result. 
We remark that we set $\eps,\alpha=\tilde{\Omega}\left(\left(\frac{M^2}{m}\right)^{\frac{1}{30}}\right)$ so that along with the assumption that $m\ge n$, the conditions of the previous statements are satisfied, e.g., \lemref{lem:high:correct:yes}, we obtain the following formalization of \thmref{thm:main}.

\begin{theorem}
Given a parameter $M>1$, let $\eps,\alpha=\tilde{\Omega}\left(\left(\frac{M^2}{m}\right)^{\frac{1}{30}}\right)$. 
There exists a $(\eps,\delta)$-differentially private algorithm that outputs a set $C$, for $\delta=\frac{1}{\poly(m)}$. 
From $C$, the $(1+\alpha)$-approximation to any norm with \emph{maximum modulus of concentration} at most $M$ can be computed, with probability at least $1-\delta$. 
The algorithm uses $M^2\cdot\poly\left(\frac{1}{\alpha},\frac{1}{\eps},\log m\right)$ bits of space. 
\end{theorem}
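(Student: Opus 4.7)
\medskip
\noindent\textbf{Proof proposal.} The plan is to assemble the output set $C$ as the union of the outputs of the three subroutines (\algref{alg:hh:high}, \algref{alg:hh:med}, \algref{alg:hh:low}) together with the private partitioning mechanism described at the start of Section 3.4, and then invoke the reconstruction lemma (\lemref{lem:reconstruction}) to argue that any symmetric norm with $\mmc(L)\le M$ can be recovered from $C$ by post-processing. First I would formally define $C$ and the post-processing map: for a queried norm $L$, set $\xi=1+\O{\alpha}$ and $\beta=\O{\alpha^5/(M^2\log^5 m)}$ as in \lemref{lem:reconstruction}, use the private partitioning to classify reported coordinates into high, medium, and low regimes, read off approximate frequencies $\widehat{x_k}$ from \algref{alg:hh:high} and \algref{alg:hh:med} to recover the sizes $\widehat{b_i}$ for high and medium levels, and take $\widehat{b_i}$ directly from \algref{alg:hh:low} for low levels. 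Form $\widehat{V}$ from these $\widehat{b_i}$, drop buckets that are not $\beta$-contributing (which is a deterministic post-processing step on $\widehat{V}$), and output $L$ applied to the resulting vector $V'$.

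For privacy I would combine \lemref{lem:priv:high}, \lemref{lem:priv:med}, \lemref{lem:priv:low}, each giving $(\eps/4,\delta/4)$-DP, with the $(\eps/4,0)$-DP guarantee of the private top-$K$ partitioning step from \cite{QiaoSZ21}. Basic composition (\thmref{thm:dp:comp}) then yields overall $(\eps,\delta)$-DP, and since the reconstruction of $V'$ and the evaluation of any norm $L$ on $V'$ are post-processing of $C$, no additional privacy loss is incurred. The crucial point here is exactly the feature advertised in the introduction: the number of norm queries never enters the privacy accounting, because $C$ is released once.

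For accuracy I would condition on the event that (i) every $\privcountsketch$ invocation succeeds and (ii) the moment-preservation event of \lemref{lem:moment:approx} holds. By \lemref{lem:beta:detect}, every $\beta$-important level falls into exactly one of the three regimes: a high level caught by \algref{alg:hh:high} (invoking \lemref{lem:high:correct:yes} and \lemref{lem:high:correct:no} to show that the resulting counts $\widehat{b_i}$ satisfy $(1-\O{\alpha})b_i\le\widehat{b_i}\le b_i$), a medium level whose size is approximated by \algref{alg:hh:med} via \lemref{lem:med:freq:approx} and \lemref{lem:med:levels:approx}, or a low level handled by \algref{alg:hh:low} via \lemref{lem:low:levels:approx}. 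In each case the $\widehat{b_i}$ satisfy the hypotheses of \lemref{lem:reconstruction} (with the required $\alpha'=\O{\alpha^2/\log n}$), so the reconstructed $V'$ yields $(1-\alpha)L(V(x))\le L(V')\le L(V(x))$, which together with the standard fact that $L(V(x))=(1\pm\O{\alpha})L(x)$ gives a $(1+\alpha)$-approximation. A union bound over the $\O{\log m}$ subsampling levels and the internal failures controls the total error probability by $\delta=1/\poly(m)$. Finally, the space bound follows by summing \lemref{lem:space:high}, \lemref{lem:space:med}, and \lemref{lem:space:low}, each of which is $M^2\cdot\poly(1/\alpha,1/\eps,\log m)$.

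The main obstacle, which is more subtle than it first looks, is checking that the parameter regime $\eps,\alpha=\tilde\Omega((M^2/m)^{1/30})$ is simultaneously strong enough to satisfy \emph{all} the prerequisites that are scattered across the subroutines: the condition $m\ge\Omega(\log^5 m)/(\alpha^5\beta^2\eps^5)$ in \lemref{lem:high:correct:yes}/\lemref{lem:high:correct:no}, the threshold conditions $2^j\gtrless\log n/(\beta'\alpha\eps)$ used to split the medium and low regimes, the condition $\xi^i\le\log n/(\beta'\alpha\eps)$ used in \lemref{lem:low:levels:approx} to keep the $L_1$ sensitivity of the low-level-size vector constant, and the coverage guarantee that \emph{every} $\beta$-important level falls into at least one of the three buckets without a gap. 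Working these inequalities backward with $\beta=\O{\alpha^5/(M^2\log^5 m)}$ and $\beta'=\O{\alpha^2\beta\eps^2/\log^2 m}$ produces the $(M^2/m)^{1/30}$ lower bound on $\eps,\alpha$; the rest of the proof is routine bookkeeping.
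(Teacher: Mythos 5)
Your proposal follows the same architecture as the paper's proof: assemble $C$ from the three subroutines plus the private top-$K$ partition, apply basic composition for $(\eps,\delta)$-DP, invoke \lemref{lem:reconstruction} for accuracy, and sum the space bounds. The privacy and space arguments match the paper essentially verbatim, and your emphasis that post-processing lets arbitrarily many norm queries be answered without further privacy loss is exactly the point.

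However, there is a genuine gap in the accuracy argument. You write that ``in each case the $\widehat{b_i}$ satisfy the hypotheses of \lemref{lem:reconstruction},'' but this does not follow directly from \lemref{lem:high:correct:yes}, \lemref{lem:med:levels:approx}, and \lemref{lem:low:levels:approx}. The reconstruction lemma requires $\widehat{b_i}\le b_i$ for \emph{all} levels $i$, and the issue is that $\privcountsketch$ only returns a $(1\pm\nu)$-approximation $\widehat{x_k}$ to each frequency $x_k$, so a coordinate $x_k$ near the boundary of $[\xi^i,\xi^{i+1})$ can have its estimate $\widehat{x_k}$ fall into an adjacent interval. That misclassification shuffles items between neighboring buckets and can violate both one-sided bounds in \lemref{lem:reconstruction}. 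The paper's proof spends roughly half of its accuracy argument precisely on this point: it invokes the random scaling $\gamma\in(1/2,1)$ on the bucket boundaries, shows the misclassification probability for each item is $\O{\alpha^2\beta'}$ over the choice of $\gamma$, bounds the resulting contribution to the norm by a multiplicative factor of two per misclassified item, applies Markov's inequality to get an additive $\frac{\alpha}{2}$-fraction error with probability $0.99$, and then takes a median over $\O{\log m}$ independent instances to amplify to high probability. Your proposal silently jumps from the per-level lemmas to the reconstruction lemma, and also never mentions the median-of-$\O{\log m}$ amplification, so the claimed $1-\delta$ success probability is not actually established. You correctly flag the parameter-regime bookkeeping as a subtle point, but the boundary misclassification is the more substantive missing step.
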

\begin{proof}
Note that from \lemref{lem:high:correct:yes} and \lemref{lem:high:correct:no}, the frequencies of the coordinates in the high frequency levels are well-approximated with high probability. 
Similarly, from \lemref{lem:med:levels:approx} and \lemref{lem:low:levels:approx}, the sizes of the level sets of the medium and low frequency levels are well-approximated  with high probability. 
Moreover, all the level sets are partitioned into the high, medium, or low frequency levels. 
We would like to say that by \lemref{lem:reconstruction}, these statistics are sufficient to recover a $(1+\alpha)$-approximation to any norm with \emph{maximum modulus of concentration} at most $M$ and so we achieve a $(1+\alpha)$-approximation to any norm with maximum modulus of concentration at most $M$ that with high probability. 
Indeed, in an idealized process where $\xi^i\le\widehat{x_k}\le\xi^{i+1}$ if and only if $k$ is sampled by the substream $j$ assigned to level $i$ and $\xi^i\le x_k<\xi^{i+1}$, \lemref{lem:reconstruction} would show that we achieve a $(1+\alpha)$-approximation to any norm with maximum modulus of concentration at most $M$ that with high probability.
However, this may not always be the case because the frequency $x_k$ may lie near the boundary of the interval $[\xi^i,\xi^{i+1})$ and the estimate $\widehat{x_k}$ may lie outside of the interval, in which case $\widehat{x_k}$ is used toward the estimation of some other level set. 
Thus, our algorithm randomizes the boundaries of the level sets by instead defining the level sets as $[\gamma\xi^i,\gamma\xi^{i+1})$ for some $\gamma\in(1/2,1)$ chosen uniformly at random. 
Since we call $\privcountsketch$ with threshold at most $\alpha^2\beta'$, then the probability that item $k\in[n]$ is misclassified over the choice of $\gamma$ is at most $\O{\alpha^2\beta'}$. 
Furthermore, if $k$ in level set $i$ is misclassified, it can only be classified into level set $i-1$ or $i+1$, causing at most an incorrect multiplicative factor of two. 
Then in expectation across all $k\in[n]$, the error due to the misclassification is at most an $\O{\alpha^2\beta'}$ fraction of the symmetric norm. 
Hence by Markov's inequality, the error due to the misclassification is at most an additive $\frac{\alpha}{2}$ fraction of the symmetric norm with probability at least $0.99$. 
To obtain high probability of success, it then suffices to take the median across $\O{\log m}$ independent instances, finally showing correctness of our algorithm. 

The private partitioning of the coordinates into the high, medium, and low frequency levels is $\frac{\eps}{4}$-differentially private.
Each of the three sets of statistics released by the high, medium, and low frequency levels are $\left(\frac{\eps}{4},\frac{\delta}{4}\right)$-differentially private, by \lemref{lem:priv:high}, \lemref{lem:priv:med}, and \lemref{lem:priv:low}. 
Then $(\eps,\delta)$-differential privacy follows from the composition of differential privacy, i.e., \thmref{thm:dp:comp}.

Finally, the space complexity follows from \lemref{lem:space:high}, \lemref{lem:space:med}, and \lemref{lem:space:low}. 
\end{proof}

We remark that our algorithm is presented as having unlimited access to random bits but is analyzed using $\O{\log m}$-wise independence, so it can be properly derandomized to provide the space guarantees without needing to store a large number of random bits. 
Alternatively, our algorithm can also be derandomized using Nisan’s pseudorandom generator, which induces an extra multiplicative factor of $\O{\log m}$ in the space overhead~\cite{Nisan92}. 

Finally, we remark that the failure probability can be raised from $\delta=\frac{1}{\poly(m)}$ to arbitrarily $\delta>0$ using additional space overhead $\polylog\frac{1}{\delta}$, since the space dependency in each subroutine on the failure probability $\delta$ is $\polylog\frac{1}{\delta}$. 

\begin{theorem}
\thmlab{thm:main:formal}
Given a parameter $M>1$, let $\eps,\alpha=\tilde{\Omega}\left(\left(\frac{M^2}{m}\right)^{\frac{1}{30}}\right)$. 
There exists a $(\eps,\delta)$-differentially private algorithm that outputs a set $C$, from which the $(1+\alpha)$-approximation to any norm , with \emph{maximum modulus of concentration} at most $M$ of a vector $x\in\mathbb{R}^n$ induced by a stream of length $\poly(n)$ can be computed, with probability at least $1-\delta$. 
The algorithm uses $M^2\cdot\poly\left(\frac{1}{\alpha},\frac{1}{\eps},\log n,\log\frac{1}{\delta}\right)$ bits of space. 
\end{theorem}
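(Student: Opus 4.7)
The plan is to assemble the three subroutines from Sections~\secref{sec:alg:high}, \secref{sec:alg:med}, and~\secref{sec:alg:low} together with the private partitioning step, and then reduce the formal statement to the (already-proved) earlier version of the theorem by substituting $m = \poly(n)$ and amplifying the failure probability.

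First I would argue utility. For any norm $L$ with $\mmc(L)\le M$, the high-frequency coordinates are well-approximated by \lemref{lem:high:correct:yes} and \lemref{lem:high:correct:no}, while for every $\beta$-important medium- or low-frequency level the size estimates $\widehat{b_i}$ satisfy $(1-O(\alpha))b_i \le \widehat{b_i}\le b_i$ by \lemref{lem:med:levels:approx} and \lemref{lem:low:levels:approx}, with $\beta = \tilde\Omega(1/M^2)$ matching the regime of the reconstruction guarantee \lemref{lem:reconstruction}. Combined with the randomization of bucket boundaries through $\gamma \in (1/2,1)$ (as used in the preceding informal theorem): the probability an item is misclassified into a neighboring bucket is at most $O(\alpha^2\beta')$, contributing only an $O(\alpha^2\beta')$-fraction of the norm in expectation, so by Markov the total misclassification error is an additive $\alpha/2$-fraction with probability $0.99$, boosted to high probability by taking the median of $O(\log m)$ independent copies. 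Then \lemref{lem:reconstruction} certifies that the statistics in $C$ yield a $(1+\alpha)$-approximation of $L(x)$.

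For privacy I would decompose the algorithm into four components that access the raw stream: the private partitioning into high, medium, and low levels (which is $\frac{\eps}{4}$-DP by the Laplace-noise top-$K$ argument of~\cite{QiaoSZ21}), and the three subroutines, which are $(\frac{\eps}{4},\frac{\delta}{4})$-DP by \lemref{lem:priv:high}, \lemref{lem:priv:med}, and \lemref{lem:priv:low} respectively. Composition (\thmref{thm:dp:comp}) then delivers $(\eps,\delta)$-DP overall, and the post-processing clause of \thmref{thm:dp:comp} means that every downstream norm estimate obtained from $C$ inherits the same privacy guarantee — this is the crucial feature that lets us answer an \emph{arbitrary} number of symmetric-norm queries without any further privacy loss.

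For space I would sum \lemref{lem:space:high}, \lemref{lem:space:med}, and \lemref{lem:space:low}, all of which give $M^2 \cdot \poly(1/\alpha, 1/\eps, \log m)$, plus the $O(\log m)$ multiplier from the median of copies and (if desired) Nisan's PRG derandomization. The two cosmetic differences between the earlier theorem and \thmref{thm:main:formal} are then handled at the end: $\log m = \Theta(\log n)$ because $m = \poly(n)$, giving the $\log n$ in the bound; and the failure probability is lifted from $1/\poly(m)$ to arbitrary $\delta$ by tightening the internal failure parameters in $\countsketch$ and the Laplace tails, each of which affects the space only through a $\polylog(1/\delta)$ factor. I do not foresee a fundamental obstacle here — the hard work (the three-way decomposition, the $L_1$-sensitivity analysis of $\privcountsketch$, the boundary-randomization trick) is already done in the preceding sections, so the proof of \thmref{thm:main:formal} is largely parameter bookkeeping on top of the informal version.
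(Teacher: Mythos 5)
Your proposal matches the paper's own proof essentially step for step: you assemble the three subroutines and the private partitioning step, invoke the same utility lemmas (\lemref{lem:high:correct:yes}, \lemref{lem:high:correct:no}, \lemref{lem:med:levels:approx}, \lemref{lem:low:levels:approx}) together with \lemref{lem:reconstruction}, use the identical boundary-randomization-plus-Markov-plus-median argument, compose $\frac{\eps}{4}$-DP partitioning with the three $(\frac{\eps}{4},\frac{\delta}{4})$-DP subroutines, and finish with the same $\log m = \Theta(\log n)$ and $\polylog(1/\delta)$ bookkeeping. No gaps; this is the paper's argument.
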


\def\shortbib{0}
\bibliographystyle{alpha}
\bibliography{references}

\appendix

\section{Additional Intuition}
\applab{app:intuition}
\subsection{Maximum Modulus of Concentration}
Let $x\in\mathbb{R}^n$ be a random variable drawn from the uniform distribution on the $L_2$-unit sphere $S^{n-1}$ and let $b_L$ denote the maximum value of $L(x)$ over $S^{n-1}$. 
The median of a symmetric norm $L$ is the unique value $M_L$ such that $\PPr{L(x)\ge M_L}\ge\frac{1}{2}$ and $\PPr{L(x)\le M_L}\ge\frac{1}{2}$. 
Then recall that the ratio $\mc(L):=\frac{b_L}{M_L}$ is the \emph{modulus of concentration} of the norm $L$. 

For a vector $x\in\mathbb{R}^n$, the $L_1$ norm is defined as $L_1(x)=\sum_{i=1}^n|x_i|$. 
The maximum value of $L_1(x)$ for a vector $x$ from the $L_2$-unit sphere $S^{n-1}$ is $L_1(x)=\sqrt{n}$ for the flat vector $\left(\frac{1}{\sqrt{n}},\ldots,\frac{1}{\sqrt{n}}\right)$. 
Thus we have $b_{L_1}=\sqrt{n}$. 
It turns the median value of $L_1(x)$ for a vector $x$ from the $L_2$-unit sphere $S^{n-1}$ is also $M_{L_1}=\tilde{\Omega}(\sqrt{n})$ and so $\mc(L_1)\le\polylog(n)$.  

On the other hand, for the $L_3$ norm defined as $L_3(x)=\left(\sum_{i=1}^n|x_i|^3\right)^3$, the maximum value of $L_3(x)$ for a vector $x$ from the $L_2$-unit sphere $S^{n-1}$ is $L_3(e_i)=1$ for a unit vector $e_i$., while the median value of $L_3(x)$ for a vector $x$ from the $L_2$-unit sphere $S^{n-1}$ is roughly $M_{L_3}=\O{n^{-1/6}}$ and so $\mc(L_3)=\Omega(n^{-1/6})$.  
Thus we should expect the complexity of the estimating the $L_3$ norm to be significantly more challenging than estimating the $L_1$ norm, and indeed this reflects known upper and lower bounds in the streaming model, e.g., see the discussion in~\cite{WoodruffZ21}. 

\subsection{\texorpdfstring{Intuition on \cite{IndykW05} and \cite{BlasiokBCKY17}}{Intuition on [IW05] and [BBC+17]}}
The main intuition of the celebrated Indyk-Woodruff norm estimation algorithm~\cite{IndykW05,BlasiokBCKY17} is to decompose a norm $\ell(x)$ on input vector $x$ into the contribution by each of its coordinates, which can then be partitioned into level sets, based on how much they contribute to the norm $\ell(x)$. 
We can then approximate each of the contributions of the level sets by subsampling the universe and estimating the sizes of each universe through the heavy-hitters of each subsample. 

Recall the definition of the important levels in \defref{def:important:level}. 
Intuitively, an important level if its size is ``significant'' compared to all the higher levels and its contribution is ``significant'' compared to all the lower levels, so that the important level contributes a ``significant'' amount to the overall norm, which is formalized by \lemref{lem:beta:detect} to be the $\beta$-contributing levels, i.e., see \defref{def:contributing:level}.  

The norm estimation algorithms of \cite{IndykW05,BlasiokBCKY17} then reconstructs an estimate of the level vector, i.e., see \defref{def:level:vector}, by removing all the levels that are not $\beta$-contributing and using a $(1+\eps')$-approximation to the sizes of all $\beta$-important levels, for some fixed value of $\eps'>0$, which is a function of the accuracy parameter $\eps$ and the maximum modulus of concentration $\mmc(\ell)$ of the norm $\ell$. 
\lemref{lem:beta:detect} then shows that this approach suffices to obtain a $(1+\eps)$-approximation to $\ell(x)$. 

In particular, we can use $\nu$-approximate $\eta$-heavy hitters algorithms to roughly estimate the size $b_i$ of all $\beta$-important levels, because each $\beta$-important level must have either (1) large size, i.e., a large number of coordinates achieving a certain range of frequencies, or (2) large contribution, i.e., a small number of coordinates with significantly large value or (3) both. 
If the $\beta$-important level has large contribution but small size, then the significantly large coordinates will immediately be recognized as heavy-hitters. 
Otherwise, if the $\beta$-important level has large size, then a large number of these coordinates in the level set will be subsampled. 
Then these sampled coordinates of the level set will become heavy-hitters at some level $i$ in which $\Theta\left(\frac{1}{\eps^2}\right)$ of these coordinates are subsampled, since the expected $\ell$ norm of the sampled coordinates will be also be significantly smaller. 
The size $b_i$ of these level sets with a large number of coordinates can be then roughly estimated by rescaling the number of sampled coordinates by the inverse of the sampling probability, though additional care is required to formalize this argument, e.g., randomized boundaries for each of the level sets. 
\end{document}